\newcommand{\RR}{{\mathbb{R}}}
\newcommand{\EE}{{\mathrm{E}}}
\newcommand{\NN}{{\mathbb{N}}}
\newcommand{\CC}{{\mathbb{C}}}
\newcommand{\herm}{{\sf H}}
\newcommand{\trans}{{\sf T}}
\renewcommand{\P}{{\bf P}}
\newcommand{\D}{{\bf D}}
\newcommand{\X}{{\bf X}}
\newcommand{\Y}{{\bf Y}}
\newcommand{\W}{{\bf W}}
\newcommand{\A}{{\bf A}}
\newcommand{\uB}{{\underline{\bf B}}}
\newcommand{\Z}{{\bf Z}}
\newcommand{\B}{{\bf B}}
\newcommand{\T}{{\bf T}}
\renewcommand{\H}{{\bf H}}
\newcommand{\I}{{\bf I}}
\renewcommand{\S}{{\bf S}}
\newcommand{\x}{{\bf x}}
\newcommand{\e}{{\bf e}}
\newcommand{\blambda}{{\bm \lambda}}
\newcommand{\y}{{\bf y}}
\renewcommand{\r}{{\bf r}}
\newcommand{\w}{{\bf w}}
\newcommand{\oh}{{\frac{1}{2}}}
\newcommand{\um}{{m_{\uF}}}
\newcommand{\Contour}{{\mathcal C_k}}
\newcommand{\uContour}{{\mathcal C_{G,k}}}
\newcommand{\uuContour}{{\mathcal C_{F,k}}}
\newcommand{\uuGk}{\Gamma_{F,k}}
\newcommand{\uuGkast}{\Gamma^\ast_{F,k}}
\newcommand{\uK}{{{K_G}}}
\newcommand{\uk}{{k_G}}
\newcommand{\uuk}{{k_F}}
\newcommand{\uuj}{{j_F}}
\newcommand{\uY}{{\underline{\Y}}}
\newcommand{\uF}{{\underline{F}}}
\newcommand{\hum}{\hat{m}_{\uF}}
\renewcommand{\hm}{{\hat{m}_F}}
\newcommand{\asto}{\overset{\rm a.s.}{\longrightarrow}}
\DeclareMathOperator{\tr}{tr}
\DeclareMathOperator{\diag}{diag}
\newcounter{ctheorem}
\newtheorem{theorem}[ctheorem]{Theorem}
\newcounter{cproposition}
\newtheorem{proposition}[cproposition]{Proposition}
\newcounter{cassumption}
\newtheorem{assumption}[cassumption]{Assumption}
\newcounter{clemma}
\newtheorem{lemma}[clemma]{Lemma}
\newcounter{cdefinition}
\newtheorem{definition}[cdefinition]{Definition}
\newcounter{cremark}
\newtheorem{remark}[cremark]{Remark}
\begin{document}
\bibliographystyle{IEEEtran}

\title{Eigen-Inference for Energy Estimation\\ of Multiple Sources}

\author{Romain~Couillet,~\IEEEmembership{Student Member,~IEEE,}
  Jack~W.~Silverstein,~Zhidong~Bai,~and~M{\'e}rouane~Debbah,~\IEEEmembership{Senior Member,~IEEE}
  \thanks{R. Couillet and M. Debbah are with the Alcatel-Lucent Chair on Flexible Radio,
    SUP{\'E}LEC, Gif sur Yvette, 91192, Plateau de Moulon, 3, Rue
    Joliot-Curie, France e-mail: \{romain.couillet,~merouane.debbah\}@supelec.fr. M. Debbah's work is supported by the European Commission, FP7 Network of Excellence in Wireless Communications NEWCOM++ and the French ANR Project SESAME.}
  \thanks{J. W. Silverstein is with the Department of Mathematics, North Carolina State University, Raleigh, North Carolina 27695-8205, jack@math.ncsu.edu. J. W. Silverstein's work is supported by the U.S. Army Research Office, Grant W911NF-09-1-0266.}
  \thanks{Z. Bai is with the KLAS MOE \& School of Mathematics and Statistics, Northeast Normal University, Changchun, Jilin 130024, China, baizd@nenu.edu.cn. Z. Bai's work is supported by the NSF China, Grant 10871036, the NUS, Grant R-155-000-079-112 and R-155-000-096-646.}
}

\maketitle

\begin{abstract}
In this paper, a new method is introduced to blindly estimate the transmit power of multiple signal sources in multi-antenna fading channels, when the number of sensing devices and the number of available samples are sufficiently large compared to the number of sources. Recent advances in the field of large dimensional random matrix theory are used that result in a simple and computationally efficient consistent estimator of the power of each source. A criterion to determine the minimum number of sensors and the minimum number of samples required to achieve source separation is then introduced. Simulations are performed that corroborate the theoretical claims and show that the proposed power estimator largely outperforms alternative power inference techniques.
\end{abstract}

\begin{IEEEkeywords}
Cognitive radio, G-estimation, power estimation, random matrix theory, statistical inference.
\end{IEEEkeywords}

\section{Introduction}
At a time when radio resources become scarce, the alternative offered by cognitive radios \cite{MIT99} is gaining more and more interest. A {\it cognitive} (or {\it flexible}) wireless network is a set of opportunistic entities, referred to as the {\it secondary} network, that benefit from unused spectrum resources to establish communication while generating little or even no interference to the licensed networks, collectively referred to as the {\it primary} network. This is achieved by letting the secondary devices sense the communication channel for the presence of active transmissions and exchange the collected information among the secondary network, in order to perform optimal decisions on the opportunistic communication strategy to apply. The difficulty for the secondary network does not lie in the detection of downlink transmissions from fixed access points to licensed mobile users in the primary network, but rather in the reliable detection of the uplink transmissions from the mobile licensed users to the primary access points. If, in addition to detecting active transmissions, the secondary devices can, at all time, detect the exact number of primary mobile sources and evaluate the power used by every individual source, the transmission policy in the secondary network can be accurately and dynamically adapted. An example of use is found in the recent development of femtocells, i.e., small area cells that operate indoors by overlaying the spectrum licensed to outdoors macrocells. Closed access femtocells have the capabilities to self-organize and to dynamically access spectrum resources \cite{CLA08}-\cite{CAL10}; specifically, the first requirement of a femtocell is to minimally interfere the overlaid licensed macrocell network, while simultaneously trying to optimize transmission data rates within the femtocell. This requires that the femtocells be constantly aware of the outdoor activity of the macrocell mobile users. As such, macrocell-femtocell networks are cognitive wireless networks in which the established macrocell network is seen as the primary network, while the femtocell network plays the role of the opportunistic secondary network. In \cite{KON09}, the achievable rates of a two-tier macrocell-femtocell network are derived in the very general case where all entities in the networks are embedded with multiple antennas. The optimal coverage of the secondary networks is computed under several assumptions on the side information available at the femtocells. Among these assumptions, \cite{KON09} supposes that the femtocells have perfect knowledge of the distances to the macrocell user equipments. This last assumption suggests that the femtocells have a means, either global positioning system or some sort of detection mechanism, to perfectly evaluate the distances to the active primary users. In the present work, we address the problem of the estimation of the distance of the secondary network to each primary user or, more exactly, the problem of the estimation of the individual source transmit powers. We provide a framework for the secondary network (i) to identify the number of primary sources, (ii) to determine the number of transmit antennas for every source and (iii) to estimate the transmit power from each source.

The difficulty of estimating transmit powers lies in the little information known {\it a priori} by the secondary network: the transmitted data and the transmission channels are usually inaccessible. This has motivated much work in the direction of blind signal source detection methods, based on the Neyman-Pearson test in Gaussian channels \cite{URK67}, Rayleigh fading channels \cite{KOS02}, multiple antenna channels \cite{COU09b} and large dimensional multi-antenna channels \cite{BIA10}, but these successive works are designed to answer a binary hypothesis test on the presence or absence of a signal source. Alternatively, in \cite{HER07}, a method is derived to separate signal sources and estimate the number of those sources. To solve the harder problem of power inference, it is necessary to assume that the amount of sensors in the secondary network is larger than the number of active sources, e.g., individual secondary users are equipped with many antennas, or a large number of secondary users, each of them equipped with few antennas, collect their received data via a central backbone; this assumption is valid in the context of femtocells that can communicate through wired private or public networks. The condition on the number of sensors allows one to model the multi-dimensional channel $\H$ from the joint primary sources to the secondary users, the joint source transmit data $\X$ and the additive received noise $\W$ as large dimensional random matrices with independent entries (no specific matrix size definition is required at this point). Denoting $\P$ a diagonal matrix whose entries are the source powers with multiplicities the number of transmit antennas of each user, the power detection problem boils down to estimating the entries of $\P$ from the sole knowledge of the received data matrix $\Y=\H\P^\oh\X+\W$, as all system dimensions (number of antennas per transmit source, number of sensors, number of available samples) are large. If the available samples largely outnumber the sensors (of several orders of magnitude), and the number of sensors are much larger than the number of transmit antennas, the strong law of large numbers ensures that the diagonal entries of $\P$ can be retrieved directly from the eigenvalues of $\Y\Y^\herm$, and the problem is immediately solved. When all dimensions are large but are of the same order of magnitude, the law of large numbers no longer applies and one has to consider results from the theory of large dimensional random matrices, e.g., \cite{SIL95}, used in the present article to derive the asymptotic eigenvalue distribution of $\Y\Y^\herm$ as a function of $\P$. To this day and to the best of our knowledge, no computationally-efficient {\it consistent} estimator for the entries of $\P$ has been proposed.\footnote{an estimator $\hat{P}_i$ of the $i^{th}$ entry $P_i$ of $\P$ is said to be {\it consistent} if $\hat{P}_i-P_i\to 0$ almost surely when the relevant system dimensions grow large.} 
Among the existing techniques are discretization and convex optimization strategies \cite{SIL92}, \cite{KAR08}, which tend to directly invert the result from \cite{SIL95} (although an explicit inverse was not available at that time), and moment-based approaches \cite{RAO08}, \cite{COU08}, which use the empirical moments of the eigenvalue distribution $\Y\Y^\herm$ to infer the entries of $\P$. Some of these moment-based methods are computationally cheap, but provide in general consistent estimators of the moments of the eigenvalue distribution of $\P$, instead of estimators of the sought powers. These techniques are therefore expected to perform worse than methods that would fully exploit the eigenvalue distribution of $\Y\Y^\herm$, and not only a few moments of the distribution. This problem is successfully addressed in \cite{MES08} for the simpler {\it sample covariance matrix} model $\Y'=\P^\oh\X$, where strongly consistent estimators for the individual entries of $\P$ are provided, which are based on the full eigenvalue distribution of $\X^\herm\P\X$. 

The present work generalizes this result to infer the entries of $\P$ from the observed matrix $\Y=\H\P^\oh\X+\W$. The novel estimator proposed here is strongly consistent with respect to growing number of sensors, sources and samples, has a very compact form, is computationally efficient and is shown in simulations to largely outperform alternative approaches, such as moment-based methods. The estimator is moreover robust to small system dimensions. We specifically show that, if the number of sensing entities is larger than the number of active transmitters in the primary network, it is possible to evaluate both the exact number of transmitters and their respective transmit powers (and, for that matter, the number of transmit antennas per source can also be estimated). Otherwise, ambiguous scenarios might arise where multiple transmitters may be confused as a single transmitter with estimated transmit power the average of the true transmit powers of these transmitters. Additionally, we provide an expression of the minimum number of sensors required to separate transmit sources of similar power.

The remainder of this paper is structured as follows: Section \ref{sec:model} introduces the system model. In Section \ref{sec:spectral_analysis}, we study the asymptotic spectrum of the eigenvalues of $\Y\Y^\herm$. In Section \ref{sec:estimation}, the novel power estimator is derived. Section \ref{sec:simus} provides simulation results. Section \ref{sec:conclusion} concludes this work.

\textit{Notations:~} In the following, boldface lower case symbols represent vectors, capital boldface characters denote matrices ($\I_N$ is the size-$N$ identity matrix). The transpose and Hermitian transpose operators are denoted $(\cdot)^\trans$ and $(\cdot)^\herm$, respectively. We denote by $\CC^+$ the set $\{z\in \CC, \Im[z]>0\}$ and by $\CC^-$ the set $\{z\in\CC,\Im[z]<0\}$. The left-limit in $x$ of a function $f$ is denoted $f(x-)$.

\begin{figure}
\centering
\includegraphics[width=8cm]{./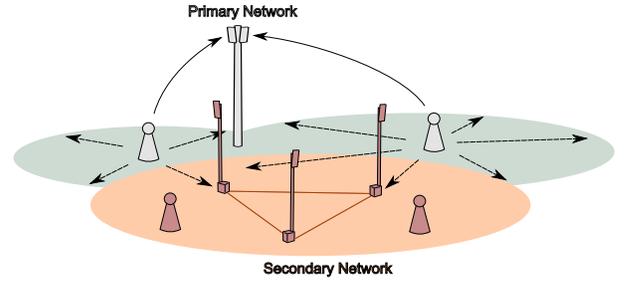}
  \caption{A cognitive radio network}
  \label{fig:scenario}
\end{figure}

\section{System Model}
\label{sec:model}
Consider a wireless (primary) network in which $K$ entities are transmitting data simultaneously on the same frequency resource. Transmitter $k\in\{1,\ldots,K\}$ has transmission power $P_k$ and is equipped with $n_k$ antennas. We denote $n=\sum_{k=1}^K n_k$ the total number of transmit antennas of the primary network. Consider also a secondary network composed of a total of $N$, $N\geq n$, sensing devices (they may be $N$ single antenna devices or multiple devices embedded with multiple antennas whose sum equals $N$); we shall refer to the $N$ sensors collectively as {\it the receiver}. This scenario is depicted in Figure \ref{fig:scenario}. To ensure that every sensor in the secondary network, e.g., in a femtocell, roughly captures the same amount of energy from a given transmitter, we need to assume that the respective transmitter-sensor distances are alike. This is a realistic assumption for a in-house femtocell network. Denote $\H_k\in \CC^{N\times n_k}$ the multiple antenna channel matrix between transmitter $k$ and the receiver. We assume that the entries of $\sqrt{N}\H_k$ are independent and identically distributed with zero mean, unit variance and finite fourth order moment. At time instant $m$, transmitter $k$ emits the multi-antenna signal $\x^{(m)}_k\in\CC^{n_k}$, with entries assumed to be independent and identically distributed of zero mean, unit variance and finite fourth order moment. Assume further that at time instant $m$ the receive signal is impaired by additive white noise with entries of zero mean, variance $\sigma^2$ and finite fourth order moment on every sensor; we denote $\sigma\w^{(m)}\in \CC^{N}$ the receive noise vector where the entries of $\w^{(m)}_k$ have unit variance. At time $m$, the receiver therefore senses the signal $\y^{(m)}\in \CC^N$ defined as
\begin{equation}
	\y^{(m)} = \sum_{k=1}^K \sqrt{P_k}\H_k\x^{(m)}_k + \sigma\w^{(m)}.
\end{equation}
Assuming the channel fading coefficients are constant over at least $M$ consecutive sampling periods, by concatenating $M$ successive signal realizations into $\Y=[\y^{(1)},\ldots,\y^{(M)}]\in \CC^{N\times M}$, we have
\begin{equation}
	\Y = \sum_{k=1}^K \sqrt{P_k}\H_k\X_k + \sigma\W
\end{equation} 
where $\X_k=[\x^{(1)}_k,\ldots,\x^{(M)}_k]\in \CC^{n_k\times M}$, for every $k$, and $\W=[\w^{(1)},\ldots,\w^{(M)}]\in \CC^{N\times M}$. This can be further rewritten as
\begin{equation}
	\label{eq:HPX+W}
	\Y = \H\P^\oh\X + \sigma\W
\end{equation}
where $\P\in \RR^{n\times n}$ is diagonal with first $n_1$ entries $P_1$, subsequent $n_2$ entries $P_2$, \ldots, last $n_K$ entries $P_K$, $\H=[\H_1,\ldots,\H_K]\in \CC^{N\times n}$ and $\X=[\X_1^\trans,\ldots,\X_K^\trans]^\trans\in \CC^{n\times M}$. By convention, we shall assume $P_1\leq \ldots \leq P_K$.

\begin{remark}
  The statement that $\sqrt{N}\H$, $\X$ and $\W$ have independent entries of finite fourth order moment is meant to provide as loose assumptions as possible on the channel, signal and noise properties. In the simulations of Section \ref{sec:simus}, the entries of $\H$, $\W$ are taken Gaussian. Nonetheless, according to our assumptions, the entries of $\X$ need not be identically distributed, but may originate from a maximum of $K$ distinct distributions. This translates the realistic assumption that different data sources may use different symbol constellations (e.g., $M$-QAM, $M$-PSK); the finite fourth moment assumption is obviously verified for finite constellations.
\end{remark}

Our objective is to infer the values of the powers $P_1,\ldots,P_K$ from the realization of the random matrix $\Y$. This is the subject of Section \ref{sec:estimation}. In the sequel, we introduce tools from large dimensional random matrix theory and we provide a thorough analysis of the eigenvalue distribution of $\frac1M\Y\Y^\herm$ as $N$, $n$ and $M$ grow large at the same rate.

\section{Spectral analysis}
\label{sec:spectral_analysis}
We start by analyzing the eigenvalue distribution of $\frac1M\Y\Y^\herm$ when $n$, $N$ and $M$ grow large at a similar rate. This is a fundamental prior step to the proper estimation of $P_1,\ldots,P_K$.

\subsection{Limiting spectrum of $\frac1M\Y\Y^\herm$}
We first define the Stieltjes transform of a (cumulative) distribution function.

\begin{definition}
	Let $F$ be a distribution function. For $z\in\CC\setminus \RR^+$, the {\it Stieltjes transform} $m(z)$ of $F$ is defined as
\begin{equation}
	\label{eq:stieltjes}
	m(z) = \int \frac1{t-z}dF(t).
\end{equation}

For $x\in\RR$ a continuity point of $F$, we have the {\it inverse Stieltjes transform} formula
\begin{equation}
	\label{eq:inv_stieltjes}
	F(x) = \frac1{\pi}\lim_{y\to 0^+} \int_{-\infty}^x \Im[m(t+iy)]dt.
\end{equation}
\end{definition}

In this section, we prove the following result
\begin{theorem}
	\label{th:1}
	Let $\B_N=\frac1M\Y\Y^\herm$, with $\Y$ defined as in \eqref{eq:HPX+W}. Then, for $M$, $N$, 
$n$ growing large with limit ratios $M/N\to c$, $N/n_k\to c_k$, $0<c,c_1,\ldots,c_K<\infty$, the 
eigenvalue distribution function $F^{\B_N}$ of $\B_N$, referred to as the {\it empirical spectral 
distribution} (e.s.d.) of $\B_N$, converges almost surely to the deterministic distribution function 
$F$, referred to as the {\it limit spectral distribution} (l.s.d.) of $\B_N$, whose Stieltjes 
transform $m_F(z)$ satisfies, for $z\in\CC^+$,
\begin{equation}
	\label{eq:m_of_um}
	m_F(z) = c\um(z)+(c-1)\frac1z
\end{equation}
where $\um(z)$ is the unique solution with positive imaginary part of the implicit equation in $\um$
\begin{equation}
	\label{eq:um_fp}
	\frac1{\um} = -\sigma^2+\frac1{f}-\sum_{k=1}^K\frac1{c_k}\frac{P_k}{1+P_k f}
\end{equation}
in which we denoted $f$ the value
\begin{equation}
	f = (1-c) \um - c z\um^2.
\end{equation}
\end{theorem}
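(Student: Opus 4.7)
\medskip\noindent\textbf{Proof strategy.}
The plan is to view $\B_N$ as a generalized sample-covariance matrix by writing $\Y = \A\Z$ with $\A = [\H\P^{\oh},\sigma\I_N]\in\CC^{N\times(n+N)}$ and $\Z = [\X^\trans,\W^\trans]^\trans\in\CC^{(n+N)\times M}$, whose entries are independent, zero mean, unit variance and have uniformly bounded fourth moments. Since $\B_N = \tfrac{1}{M}\A\Z\Z^\herm\A^\herm$ shares its nonzero spectrum with the $M\times M$ matrix $\tfrac{1}{M}\Z^\herm(\A^\herm\A)\Z$, Theorem~\ref{th:1} reduces to (i) identifying the almost-sure limit of the e.s.d.\ of the effective population $\A\A^\herm = \H\P\H^\herm+\sigma^2\I_N$, and (ii) applying Silverstein's theorem \cite{SIL95} to the resulting sample-covariance-type matrix.

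First, I would apply Silverstein's theorem to $\H\P\H^\herm = \tfrac{1}{N}(\sqrt{N}\H)\,\P\,(\sqrt{N}\H)^\herm$, a Wishart-like matrix whose population spectrum $\sum_k (n_k/N)\delta_{P_k}$ converges to the weighted measure $\sum_k (1/c_k)\delta_{P_k}$. This yields the a.s.\ convergence of its e.s.d.\ to some $\tilde G$ characterized by an implicit Stieltjes equation; a $\sigma^2$-shift and the correspondence between the nonzero spectra of $\A\A^\herm$ and $\A^\herm\A$ then produce the a.s.\ limit of the e.s.d.\ of $\A^\herm\A$ as a point mass at $0$ plus a rescaled copy of the shifted $\tilde G$.

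Next, conditioning on $\H$, I would apply Silverstein's theorem — in the form allowing independent, not necessarily identically distributed entries with uniformly bounded fourth moments — to the matrix $\tfrac{1}{M}\Z^\herm(\A^\herm\A)\Z$ with (in the limit) deterministic population $\A^\herm\A$. Because this limit is deterministic, conditional convergence lifts to unconditional a.s.\ convergence, and one obtains an implicit Stieltjes equation for $\um$ of the form $-1/\um(z) = z - (1/c)\int \tau\,dG(\tau)/(1+\tau\um(z))$, where $G$ is the l.s.d.\ of $\A\A^\herm$. The factor $1/c$ arises from combining the Silverstein ratio $(n+N)/M$ with the nonzero-eigenvalue proportion $N/(n+N)$ of $\A^\herm\A$.

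The last step is purely algebraic. I would substitute the Stieltjes equation characterizing $\tilde G$ from the first step into the above relation in order to eliminate the integral, and identify the auxiliary quantity $f = (1-c)\um - cz\um^2$ — which, via the companion relation $m_F = c\um + (c-1)/z$, equals $-z\um m_F$ — as the natural bridge between the two layers of Stieltjes transforms. After simplification, the $K$ point masses of $\P$ at $P_1,\ldots,P_K$ produce the sum $\sum_k \tfrac{1}{c_k}\tfrac{P_k}{1+P_k f}$, and the remaining terms collapse to \eqref{eq:um_fp}. The main obstacle will be precisely this algebraic reduction — collapsing two coupled Silverstein fixed-point equations into a single compact one without introducing spurious solutions — together with the standard analyticity/uniqueness argument for $\um\in\CC^+$.
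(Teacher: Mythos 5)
Your proposal is correct and follows essentially the same route as the paper: the matrix $\tfrac1M\Z^\herm(\A^\herm\A)\Z$ you introduce is exactly the paper's companion matrix $\underline{\B}_N=\tfrac1M\underline{\Y}^\herm\underline{\Y}$ obtained by zero-padding $\Y$, and both arguments apply Silverstein's theorem twice (once to $\H\P\H^\herm$ to characterize the random population spectrum, once to the full model), then eliminate the inner Stieltjes equation via the identification $f=m_G(-1/\um-\sigma^2)=-z\um m_F=(1-c)\um-cz\um^2$ and account for the $M-N$ zero eigenvalues to pass from $\um$ to $m_F$. Your conditioning-on-$\H$ step is handled in the paper by stating Proposition~\ref{prop:sil95} directly for $\T_n$ with almost surely converging e.s.d.; the two devices are interchangeable.
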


The rest of this section is dedicated to the proof of Theorem \ref{th:1}. First remark that \eqref{eq:HPX+W} can be further simplified into
\begin{equation}
	\Y = \begin{pmatrix} \H\P^\oh & \sigma \I_N \end{pmatrix}\begin{pmatrix} \X \\ \W \end{pmatrix}.
\end{equation}

Appending $\Y\in\CC^{N\times M}$ into the larger matrix $\uY\in \CC^{(N+n)\times M}$
\begin{equation}
	\uY = \begin{pmatrix} \H\P^\oh & \sigma \I_N \\ 0 & 0 \end{pmatrix}\begin{pmatrix} \X \\ \W \end{pmatrix},
\end{equation}
we recognize that $\frac1M\uY\uY^\herm$ is a {\it sample covariance matrix}, for which the {\it population covariance matrix} $\left(\begin{smallmatrix} \H\P\H^\herm + \sigma^2\I_N & 0 \\ 0 & 0 \end{smallmatrix}\right)$ is non-deterministic and the random matrix $\left(\begin{smallmatrix} \X \\ \W \end{smallmatrix}\right)$ has independent (non-necessarily identically distributed) entries with zero mean and variance $1$.
  
  At this point, we need the following result,
  \begin{proposition}
    \label{prop:sil95}
    Let $\Z_n\in \CC^{N\times n}$ have complex independent entries of zero mean, unit variance and finite $2+\varepsilon$ order moment, for some $\varepsilon>0$, and $\T_n\in \RR^{n\times n}$ be Hermitian with e.s.d. converging almost surely to $T$, as $N\to \infty$. Let $\A_n=\frac1N\Z_n\T_n\Z_n^\herm$. Then, as $n,N\to \infty$, $N/n\to c>0$, the eigenvalue distribution of $\A_n$ converges weakly and almost surely to the distribution function $A$ with Stieltjes transform $m_A(z)$, $z\in\CC^+$, being the unique solution with positive imaginary part of the equation in $m_A$
    \begin{equation}
      \label{eq:sil95}
      z = -\frac1{m_A} + \frac1c \int \frac{t}{1+t m_A}dT(t).
    \end{equation}
  \end{proposition}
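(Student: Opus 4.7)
The plan is to follow the classical Stieltjes transform strategy of Silverstein (1995) and Bai, adapted to the generalized sample-covariance structure $\A_n = (1/N)\Z_n\T_n\Z_n^\herm$. First, using the finite $2+\varepsilon$-moment hypothesis, I would truncate and recenter the entries of $\Z_n$ so that they become uniformly bounded while changing the empirical spectral distribution by an amount negligible in the $N\to\infty$ limit; this relies on standard rank-perturbation inequalities combined with the $2+\varepsilon$ integrability of the tail. It is then convenient to work with the Hermitian companion $\underline{\A}_n = (1/N)\T_n^{1/2}\Z_n^\herm\Z_n\T_n^{1/2}$ (the extension to indefinite Hermitian $\T_n$ being routine via splitting into positive and negative parts), whose Stieltjes transform $\underline{m}_n$ is tied to $m_{\A_n}$ through the affine identity $\underline{m}_n(z)=(N/n)\,m_{\A_n}(z)+(N-n)/(nz)$ obtained by matching non-zero spectra, so that it suffices to derive a fixed-point equation for $\underline{m}_n$ and then convert it back.

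The heart of the proof is to derive a deterministic equation for the limit of $\underline{m}_n(z)$. Writing $\underline{\A}_n=(1/N)\sum_{k=1}^N\y_k\y_k^\herm$ with $\y_k=\T_n^{1/2}\mathbf{z}_k$, where $\mathbf{z}_k\in\CC^n$ is the column vector formed from the $k$-th row of $\Z_n$, the vectors $\y_1,\ldots,\y_N$ are independent. The Sherman-Morrison rank-one formula then gives, for the resolvent $Q_n(z)=(\underline{\A}_n-z\I_n)^{-1}$ and its leave-one-out counterpart $Q_{n,(k)}(z)$,
\begin{equation*}
\y_k^\herm Q_n\y_k \;=\; \frac{\y_k^\herm Q_{n,(k)}\y_k}{1+\tfrac{1}{N}\y_k^\herm Q_{n,(k)}\y_k}.
\end{equation*}
Combining this with the resolvent identity $\I_n=-zQ_n+\underline{\A}_nQ_n$ and applying the trace lemma $(1/N)\y_k^\herm M\y_k-(1/N)\tr[\T_n M]\to 0$, valid for $M$ independent of $\y_k$ under the $2+\varepsilon$-moment assumption, together with the rank-one perturbation lemma ensuring that $(1/N)\tr[\T_n Q_{n,(k)}]$ and $(1/N)\tr[\T_n Q_n]$ differ by $O(1/N)$, I would obtain the matrix-level deterministic equivalent $Q_n(z)\approx(-z\I_n+\T_n/(1+\underline{g}(z)))^{-1}$ with $\underline{g}(z)=\lim(1/N)\tr[\T_n Q_n(z)]$. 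Taking normalized traces yields a coupled pair of self-consistency equations in $\underline{m}$ and $\underline{g}$; eliminating $\underline{g}$ via the algebraic identity $\int t\,dT(t)/(t-z\beta)=1+z\beta\int dT(t)/(t-z\beta)$ with $\beta=1/(1+\underline{g})$ collapses them into a single equation for $\underline{m}$, which after the affine conversion reduces to exactly $z=-1/m_A+(1/c)\int t/(1+tm_A)\,dT(t)$.

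Two pieces remain. For the almost-sure convergence $m_{\A_n}(z)-\EE[m_{\A_n}(z)]\to 0$, I would decompose this difference as a telescoping sum of martingale differences $(\EE_k-\EE_{k-1})m_{\A_n}(z)$ along the filtration generated by successive rows of $\Z_n$, and combine the Burkholder inequality with the rank-one resolvent bound $|\tr(Q_n-Q_{n,(k)})|\le 1/|\Im z|$ to produce a summable fourth-moment estimate; Borel-Cantelli then closes the argument. Uniqueness of the solution with $\Im[m_A(z)]>0$ follows by a standard contraction argument on the space of Nevanlinna functions, exploiting the fact that the fixed-point map sends $\CC^+$ into $\CC^+$ and is strictly contracting in a suitable hyperbolic metric. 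The principal obstacle is the trace-lemma step: because the leave-one-out resolvent $Q_{n,(k)}$ is itself random rather than deterministic, concentration of $\y_k^\herm Q_{n,(k)}\y_k$ around its conditional expectation has to be established uniformly in $k$ and in $z$ on compact subsets of $\CC^+$, and this uniform quantitative control is precisely what forces the finite $2+\varepsilon$-moment hypothesis and governs the delicate error bookkeeping throughout.
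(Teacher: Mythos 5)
Your outline is sound and, suitably fleshed out, would prove the proposition; but it takes a genuinely different route from the paper. The paper's proof is a two-step transfer argument: it invokes Theorem 4.1 of the Bai--Silverstein monograph, which already establishes (for independent, not necessarily identically distributed entries with $2+\varepsilon$ moments) that the e.s.d. of $\A_n$ converges almost surely to a limit whose Stieltjes transform is a function of $c$ and $T$ \emph{only}; it then observes that in the i.i.d.\ special case the classical result of \cite{SIL95} identifies this function as the solution of \eqref{eq:sil95}, so by this universality the same equation must govern the general case. You instead re-derive the fixed-point equation from scratch via the standard machinery (truncation under the $2+\varepsilon$ hypothesis, the companion matrix $\underline{\A}_n$ and the affine relation between Stieltjes transforms, leave-one-out resolvents, the trace lemma, a martingale-difference/Burkholder argument for almost-sure convergence, and a separate uniqueness argument); I checked that your elimination of the auxiliary quantity $\underline{g}$ via $c\,\underline g=(1+\underline g)(1+z\underline m)$ does collapse the coupled system to \eqref{eq:sil95} after the affine conversion, so the algebra closes. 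What each approach buys: the paper's argument is a few lines but leans entirely on the precise statements of Theorems 4.1 and 4.3 of \cite{SIL06}; yours is self-contained and makes explicit where the $2+\varepsilon$ moment and independence (rather than i.i.d.) hypotheses enter, at the cost of substantial bookkeeping that is only sketched (in particular, uniform-in-$k$ concentration of the quadratic forms, and the treatment of indefinite $\T_n$, for which the splitting into positive and negative parts is not as routine as you suggest since the square-root/companion-matrix trick needs $\T_n\geq 0$ --- though all of the paper's applications do satisfy nonnegative definiteness).
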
 
  \begin{proof}
    The proof originates from Theorem 4.1 of \cite{SIL06} that states that, under the hypotheses of Proposition \ref{prop:sil95}, the eigenvalue distribution of $\A_n$ converges weakly to some distribution function $A$ whose Stieltjes transform $m_A(z)$ is a function of the Stieltjes transform of $m_T(z)$ and $c$ only; $m_A(z)$ is explicitly given by (4.4.4) of \cite{SIL06}. Now, in the special case where $\Z_n$ has independent and identically distributed (i.i.d.) entries of zero mean, unit variance and finite $2+\varepsilon$ moment, \cite{SIL95} and Theorem 4.3 of \cite{SIL06} show that $m_A(z)$ satisfies \eqref{eq:sil95}. But then, since $m_A(z)$ is only a function of $c$ and $T$ regardless of the distribution of the independent entries of $\Z_n$, $m_A(z)$ that solves \eqref{eq:sil95} is the Stieltjes transform of $A$ for the more general case.
\end{proof}

 Note that Proposition \ref{prop:sil95} can be equally stated when $z\in\CC^-$. In that case, $m_A(z)$ is the unique solution of \eqref{eq:sil95} with negative imaginary part.

  From Proposition \ref{prop:sil95}, since $\H$ has independent entries with finite fourth order moment, we have that the e.s.d. of $\H\P\H^\herm$ converges weakly and almost surely to a limit distribution $G$ as $N,n_1,\ldots,n_K\to \infty$ with, $N/n_k\to c_k>0$. For $z\in \CC^+$, the Stieltjes transform $m_G(z)$ of $G$ is the unique solution with positive imaginary part of the equation in $m_G$,
  \begin{equation}
    \label{eq:m1}
    z = -\frac1{m_G} + \sum_{k=1}^K \frac1{c_k}\frac{P_k}{1+P_km_G}.
  \end{equation}
  
  The almost sure convergence of the e.s.d. of $\H\P\H^\herm$ ensures the almost sure convergence of the e.s.d. of the matrix $\left(\begin{smallmatrix} \H\P\H^\herm + \sigma^2\I_N & 0 \\ 0 & 0 \end{smallmatrix}\right)$. Since $m_G(z)$ evaluated at $z\in\CC^+$ is the Stieltjes transform of the l.s.d. of $\H\P\H^\herm+\sigma^2\I_N$ evaluated at $z+\sigma^2$, adding $n$ zero eigenvalues, we finally have that the e.s.d. of $\left(\begin{smallmatrix} \H\P\H^\herm + \sigma^2\I_N & 0 \\ 0 & 0 \end{smallmatrix}\right)$ tends almost surely to a distribution $H$ whose Stieltjes transform $m_H(z)$ satisfies
    \begin{equation}
      \label{eq:mH}
      m_H(z) = \frac{c_0}{1+c_0}m_G(z-\sigma^2) - \frac1{1+c_0}\frac1z,
    \end{equation}
for $z\in\CC^+$, where we denoted $c_0$ the limit of the ratio $N/n$, i.e., $c_0=(c_1^{-1}+\ldots+c_K^{-1})^{-1}$.  

As a consequence, the sample covariance matrix $\frac1M\uY\uY^\herm$ has a population covariance matrix which is not deterministic but whose e.s.d. has an almost sure limit for increasing dimensions. Since $\X$ and $\W$ have entries with finite fourth order moment, we can again apply Proposition \ref{prop:sil95}, and we have that the e.s.d. of $\uB_N\triangleq \frac1M\uY^\herm\uY$ converges almost surely to the limit $\uF$ whose Stieltjes transform $\um(z)$ is the unique solution in $\CC^+$ of the equation in $\um$

\begin{align}
	\label{eq:z_of_m}
	z &= -\frac1{\um} + \frac1c\left(1+\frac1{c_0}\right) \int \frac{t}{1+t\um}dH(t) \\
	\label{eq:z_of_m2}
	&= -\frac1{\um} + \frac{1+\frac1{c_0}}{c\um}\left[1 - \frac1{\um}m_H\left(-\frac1{\um}\right)\right]
\end{align}
for all $z\in\CC^+$.
	
For $z\in\CC^+$, $\um(z)\in\CC^+$. Therefore $-1/\um(z)\in \CC^+$ and one can evaluate \eqref{eq:mH} at $-1/\um(z)$. Combining \eqref{eq:mH} and \eqref{eq:z_of_m2}, we then have
	\begin{equation}
		\label{eq:z_of_m1_of_m}
		z = -\frac1c \frac1{\um(z)^2}m_G\left(-\frac1{\um(z)}-\sigma^2\right)+\left(\frac1c-1\right)\frac1{\um(z)},
	\end{equation}
where, according to \eqref{eq:m1}, $m_G(-1/\um(z)-\sigma^2)$ satisfies
\begin{align}
  \frac1{\um(z)} = & -\sigma^2+\frac1{m_G(-\frac1{\um(z)}-\sigma^2)} \nonumber \\ & -\sum_{k=1}^K\frac1{c_k}\frac{P_k}{1+P_km_G(-\frac1{\um(z)}-\sigma^2)}.
\end{align}

Together with \eqref{eq:z_of_m1_of_m}, this is exactly \eqref{eq:um_fp}, with $f(z)=m_G(-\frac1{\um(z)}-\sigma^2)=(1-c)m_{\uF}(z)-czm_{\uF}(z)^2$.

Since the eigenvalues of the matrices $\B_N$ and $\uB_N$ only differ by $M-N$ zeros, we also have that the Stieltjes transform $m_F(z)$ of the l.s.d. of $\B_N$ satisfies
\begin{equation}
  \label{eq:m_um}
	m_F(z) = c\um(z)+(c-1)\frac1z.
\end{equation}

This completes the proof of Theorem \ref{th:1}. For further usage, notice here that \eqref{eq:m_um} provides a simplified expression for $m_G(-1/\um(z)-\sigma^2)$. Indeed we have,
\begin{equation}
  \label{eq:m1_m_um}
m_G(-1/\um(z)-\sigma^2) = -z m_F(z)\um(z).
\end{equation}

Therefore, the support of the (almost sure) l.s.d. $F$ of $\B_N$ can be evaluated as follows: for any $z\in\CC^+$, $m_F(z)$ is given by \eqref{eq:m_of_um}, in which $\um(z)$ is solution of \eqref{eq:um_fp}; the inverse Stieltjes transform formula \eqref{eq:inv_stieltjes} allows then to evaluate $F$ from $m_F(z)$, for values of $z$ spanning over the set $\{z=x+iy,x>0\}$ and $y$ small. This is depicted in Figure \ref{fig:spectrum}, when $\P$ has three distinct values $P_1=1$, $P_2=3$, $P_3=10$ and $n_1=n_2=n_3$, $N/n=10$, $M/N=10$, $\sigma^2=0.1$, as well as in Figure \ref{fig:spectrum2} for the same setup but with $P_3=5$.

\begin{figure}
  \centering
\includegraphics[]{./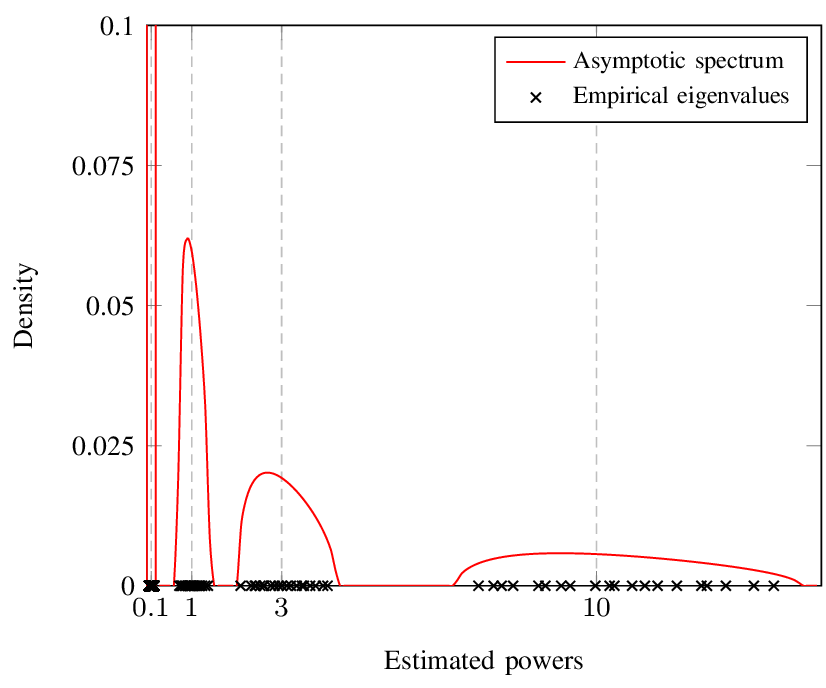}
  \caption{Empirical and asymptotic eigenvalue distribution of $\frac1M\Y\Y^\herm$ when $\P$ has three distinct entries $P_1=1$, $P_2=3$, $P_3=10$, $n_1=n_2=n_3$, $c_0=10$, $c=10$, $\sigma^2=0.1$. Empirical test: $n=60$.}
  \label{fig:spectrum}
\end{figure}

\begin{figure}
  \centering
\includegraphics[]{./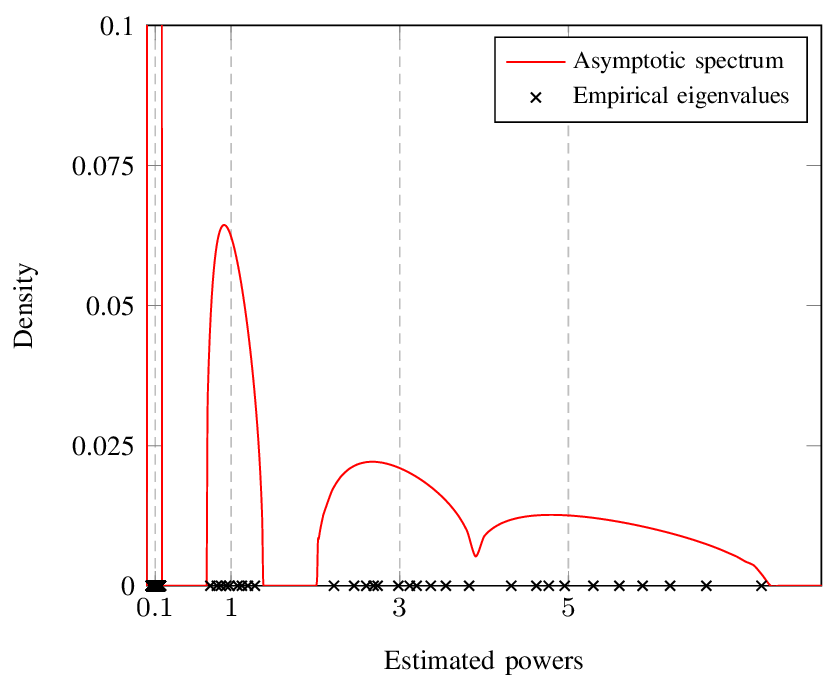}
  \caption{Empirical and asymptotic eigenvalue distribution of $\frac1M\Y\Y^\herm$ when $\P$ has three distinct entries $P_1=1$, $P_2=3$, $P_3=5$, $n_1=n_2=n_3$, $c_0=10$, $c=10$, $\sigma^2=0.1$. Empirical test: $n=60$.}
  \label{fig:spectrum2}
\end{figure}

Two remarks on Figures \ref{fig:spectrum} and \ref{fig:spectrum2} are of fundamental importance to the following. First, it appears that the asymptotic l.s.d. $F$ of $\B_N$ is compactly supported and divided into up to $K+1$ disjoint compact intervals, which we further refer to as {\it clusters}. Each cluster can be mapped onto one or many values in the set $\{\sigma^2,P_1,\ldots,P_K\}$. For instance, in Figure \ref{fig:spectrum2}, the first cluster is mapped to $\sigma^2$, the second cluster to $P_1$ and the third cluster to the set $\{P_2,P_3\}$. Depending on the ratios $c$ and $c_0$ and on the particular values taken by $P_1,\ldots,P_K$ and $\sigma^2$, these clusters are either thin disjoint compact intervals, as in Figure \ref{fig:spectrum}, or they may overlap to generate larger compact intervals, as in Figure \ref{fig:spectrum2}. We shall see, as is in fact required by the law of large numbers, that for increasing $c$ and $c_0$, the asymptotic spectrum tends to be divided into thinner and thinner clusters. The inference technique proposed hereafter relies on the separability of the clusters associated to each $P_i$ and to $\sigma^2$. Precisely, to be able to derive a consistent estimate of the transmitted power $P_k$, the cluster associated to $P_k$ in $F$, number it cluster $k_F$, must be distinct from the neighboring clusters $(k-1)_F$ and $(k+1)_F$, associated to $P_{k-1}$ and $P_{k+1}$ respectively (when they exist), and also distinct from cluster $1$ in $F$ associated to $\sigma^2$. As such, in the scenario of Figure \ref{fig:spectrum2}, our method will be able to provide a consistent estimate for $P_1$, but (so far) will not succeed in providing a consistent estimate for either $P_2$ or $P_3$, since $2_F=3_F$. We shall see that a consistent estimate for $(P_2+P_3)/2$ is accessible though. Secondly, notice that the empirical eigenvalues of $\B_N$ are all inside the asymptotic clusters and, most importantly, in the case where cluster $k_F$ is distinct from $1$, $(k-1)_F$ and $(k+1)_F$, observe that the number of eigenvalues in cluster $k_F$ is exactly $n_k$. This fact is referred to as {\it exact separation}. The exact separation for the current model originates from a direct application of the exact separation for the sample covariance matrix proven in \cite{BAI99} and is provided here in Theorem \ref{th:bai}. This is further discussed in the subsequent sections.

\subsection{Condition for separability}
In the following, we are interested in estimating consistently the power $P_k$ for a given fixed $k\in\{1,\ldots,K\}$. We recall that consistency means here that, as all system dimensions grow large with finite asymptotic ratios, the difference $\hat{P}_k-P_k$ between the estimate $\hat{P}_k$ of $P_k$ and $P_k$ itself converges to zero with probability one. As previously mentioned, we show by construction in Section \ref{sec:estimation} that such an estimate is only achievable if the cluster mapped to $P_k$ in $F$ is disjoint from all other clusters. The purpose of the present section is to provide sufficient conditions for cluster separability. 

To ensure that cluster $k_F$ (associated to $P_k$ in $F$) is distinct from cluster $1$ (associated to $\sigma^2$) and clusters $i_F$, $i\neq k$, (associated to all other $P_i$), we assume now and for the rest of this article that the following conditions are fulfilled: (i) $k$ satisfies Assumption \ref{ass:1}, given as follows
\begin{assumption}
  \label{ass:1}
  \begin{align}
	  \label{eq:ass1}
	  \sum_{r=1}^K \frac1{c_r}\frac{\left(P_r m_{G,k}\right)^2}{\left(1+ P_r m_{G,k} \right)^2}&< 1 \\
	  \sum_{r=1}^K \frac1{c_r}\frac{\left(P_r m_{G,k+1}\right)^2}{\left(1+ P_r m_{G,k+1} \right)^2}&< 1
  \end{align}
  with $m_{G,1},\ldots,m_{G,K}$ the $K$ real solutions to the equation in $m_G$,
  \begin{equation}
    \label{eq:ass12}
    \sum_{r=1}^K \frac1{c_r}\frac{\left(P_r m_G\right)^3}{\left(1+ P_r m_G \right)^3} = 1
  \end{equation}
  with the convention $m_{G,K+1}=0$,
\end{assumption}
and (ii) $k$ satisfies Assumption \ref{ass:2} as follows,
\begin{assumption}
  \label{ass:2}
  Denoting, for $j\in\{1,\ldots,K\}$, 
  \begin{equation}
j_G \triangleq \# \left\{i\leq j ~\vert~ i \textmd{ satisfies Assumption \ref{ass:1}}\right\},
\end{equation}
  \begin{align}
	  \label{eq:ass2}
& \frac{1-c_0}{c_0}\frac{(\sigma^2m_{\uF,\uk})^2}{(1+\sigma^2m_{\uF,\uk})^2} \nonumber \\
&+ \sum_{r=1}^{\uk-1} \frac1{c_r}\frac{(x_{G,r}^++\sigma^2)^2m_{\uF,\uk}^2}{(1+(x_{G,r}^++\sigma^2)m_{\uF,\uk})^2} \nonumber \\
&+ \sum_{r=\uk}^\uK \frac1{c_r}\frac{(x_{G,r}^-+\sigma^2)^2m_{\uF,\uk}^2}{(1+(x_{G,r}^-+\sigma^2)m_{\uF,\uk})^2} < c \\
&\frac{1-c_0}{c_0}\frac{(\sigma^2m_{\uF,\uk+1})^2}{(1+\sigma^2m_{\uF,\uk+1})^2} \nonumber \\
&+ \sum_{r=1}^{\uk} \frac1{c_r}\frac{(x_{G,r}^++\sigma^2)^2m_{\uF,\uk+1}^2}{(1+(x_{G,r}^++\sigma^2)m_{\uF,\uk+1})^2} \nonumber \\
&+ \sum_{r=\uk+1}^\uK \frac1{c_r}\frac{(x_{G,r}^-+\sigma^2)^2m_{\uF,\uk+1}^2}{(1+(x_{G,r}^-+\sigma^2)m_{\uF,\uk+1})^2} < c
  \end{align}
  where $x_{G,i}^-,x_{G,i}^+$, $i\in\{1,\ldots,\uK\}$, are defined by 
  \begin{align}
    x_{G,i}^- &= -\frac1{m_{G,i}^-} + \sum_{r=1}^K \frac1{c_r}\frac{P_r}{1+P_rm_{G,i}^-} \\
    x_{G,i}^+ &= -\frac1{m_{G,i}^+} + \sum_{r=1}^K \frac1{c_r}\frac{P_r}{1+P_rm_{G,i}^+} 
  \end{align}
  with $m_{G,1}^-,m_{G,1}^+,\ldots,m_{G,\uK}^-,m_{G,\uK}^+$ the $2\uK$ real roots of \eqref{eq:ass1}, and $m_{\uF,j}$, $j\in\{1,\ldots,\uK+1\}$, the $j$-th real root (in increasing order) of the equation in $\um$
  \begin{align}
\frac{1-c_0}{c_0}\frac{(\sigma^2\um)^3}{(1+\sigma^2\um)^3} &+ \sum_{r=1}^{j-1} \frac1{c_r}\frac{(x_{G,r}^++\sigma^2)^3\um^3}{(1+(x_{G,r}^++\sigma^2)\um)^3} \nonumber \\
&+ \sum_{r=j}^\uK \frac1{c_r}\frac{(x_{G,r}^-+\sigma^2)^3\um^3}{(1+(x_{G,r}^-+\sigma^2)\um)^3} = c. 
  \end{align}
\end{assumption}

Although difficult to fathom at this point of the article, the above assumptions will be clarified in the subsequent sections. We give hereafter a short intuitive explanation of the role of every condition. 


Assumption \ref{ass:1} is a necessary and sufficient condition for cluster $k_G$, that we define as the cluster associated to $P_k$ in $G$ (the l.s.d. of $\H\P\H^\herm$), to be distinct from the clusters $(k-1)_G$ and $(k+1)_G$, associated to $P_{k-1}$ and $P_{k+1}$ in $G$, respectively. Note that we implicitly assume a unique mapping between the $P_i$ and clusters in $G$; this statement will be made more rigorous in subsequent sections. Assumption \ref{ass:1} only deals with the inner $\H\P\H^\herm$ covariance matrix properties and ensures specifically that the powers to be estimated differ sufficiently from one another for our method to be able to resolve them. Note that, if $P_1,\ldots,P_K$ are scaled by a common constant, then the solutions of \eqref{eq:ass12} are scaled by the inverse of this constant; the separability condition is then a function of $P_2/P_1,\ldots,P_K/P_1$ and of the ratios $c_1,\ldots,c_K$ only. In Figure \ref{fig:ass1_c0}, we depict the critical ratio $c_0$ above which Assumption \ref{ass:1} is satisfied for all $k$, when $K=2$ and $c_1=c_2$, as a function of $P_1/P_2$, i.e., the critical ratio $c_0$ above which the two clusters associated to $P_1$ and $P_2$ in $G$ are disjoint. Observe that, as $P_1$ gets close to $P_2$, $c_0$ increases fast; therefore, to be able to separate power values with ratio close to one, an extremely large number of sensors is required. In Figure \ref{fig:ass1_3}, the case $K=3$ is considered with $c_1=c_2=c_3$, $c_0=10$, and we let $P_2/P_1$ and $P_3/P_1$ vary; this situation corresponds to the scenarios previously depicted in Figures \ref{fig:spectrum} and \ref{fig:spectrum2}. Note that the triplet $(P_1,P_2,P_3)=(1,3,5)$ is slightly outside the region that satisfies Assumption \ref{ass:1}, and then, for this $c_0$, not all the clusters of $G$ (and therefore of $F$) are disjoint, as confirmed by Figure \ref{fig:spectrum2}. As for the triplet $(1,3,10)$, it clearly lies inside the region that satisfies Assumption \ref{ass:1}, which is sufficient to ensure the separability of the clusters in $G$, but not enough though to ensure the separability of the clusters in $F$.

Assumption \ref{ass:2} deals with the complete $\B_N$ matrix model. It is however a non-necessary but sufficient condition so that cluster $k_F$, associated to $P_k$ in $F$, be distinct from clusters $(k-1)_F$, $(k+1)_F$ and $1$ (cluster $1$ being associated to $\sigma^2$). The exact necessary and sufficient condition will be stated further in the next sections; however, the latter is not exploitable as is, and Assumption \ref{ass:2} will be shown to be an appropriate substitute. Assumption \ref{ass:2} is concerned with the value of $c$ necessary to avoid (i) cluster $k_G$ (associated to $P_k$ in $G$) to further spread on the clusters $\uk-1$ and $\uk+1$ associated to $P_{k-1}$ and $P_{k+1}$ and, more importantly, to avoid (ii) cluster $1$ associated to $\sigma^2$ in $F$ to merge with cluster $\uuk$. As shall become evident in the next sections, when $\sigma^2$ is large, the tendency is for the cluster associated to $\sigma^2$ to become large and spread over the clusters associated to $P_1$, then $P_2$ etc. To counter this effect, one must increase $c$, i.e., take more signal samples. Figure \ref{fig:ass2} depicts the critical ratio $c$ that satisfies Assumption \ref{ass:2} as a function of $\sigma^2$, in the case $K=3$, $(P_1,P_2,P_3)=(1,3,10)$, $c_0=10$, $c_1=c_2=c_3$. Notice that, in the case $c=10$, below $\sigma^2\simeq 1$, it is possible to separate all clusters, which is compliant with Figure \ref{fig:spectrum} where $\sigma^2=0.1$.

As a consequence, under the assumption (proved later) that our proposed method cannot perform consistent power estimation when the cluster separability conditions are not met, we have two first conclusions:
\begin{itemize}
  \item if one desires to increase the sensitivity of the estimator, i.e., to be able to separate two sources of close transmit powers, one needs to increase the number of sensors (by increasing $c_0$),
  \item if one desires to detect and reliably estimate power sources in a noisy environment, one needs to increase the number of sensed samples (by increasing $c$).
\end{itemize}

In the subsequent section, we study the properties of the asymptotic spectrum of $\H\P\H^\herm$ and $\uB_N$ in more details. These properties will lead to an explanation for Assumptions \ref{ass:1} and \ref{ass:2}. Under those assumptions, we shall then derive our novel power estimator.

\begin{figure}
\centering
\includegraphics[]{./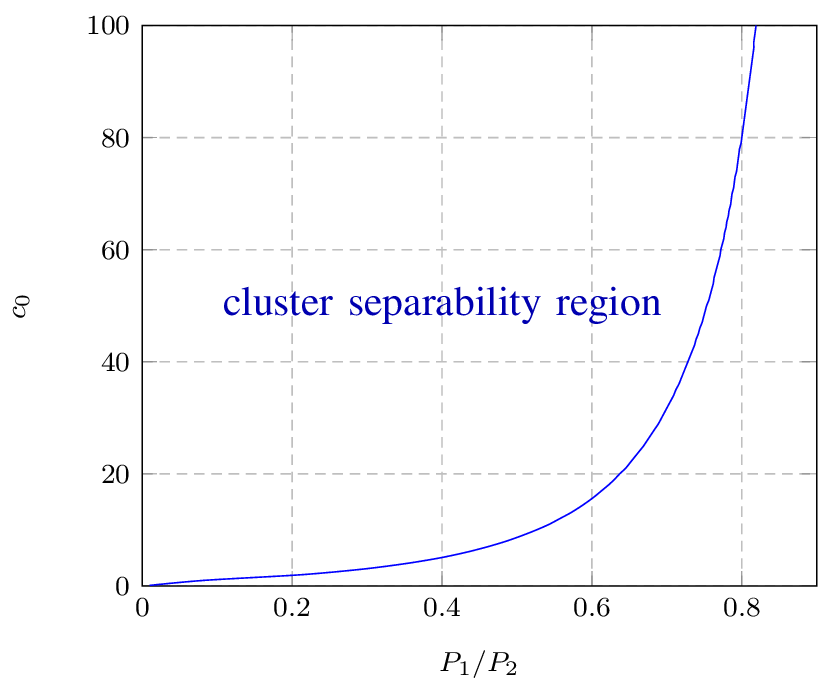}
  \caption{Limiting ratio $c_0$ to ensure separability of $(P_1,P_2)$, $P_1\leq P_2$, $K=2$, $c_1=c_2$.}
  \label{fig:ass1_c0}
\end{figure}

\begin{figure}
\centering
\includegraphics[]{./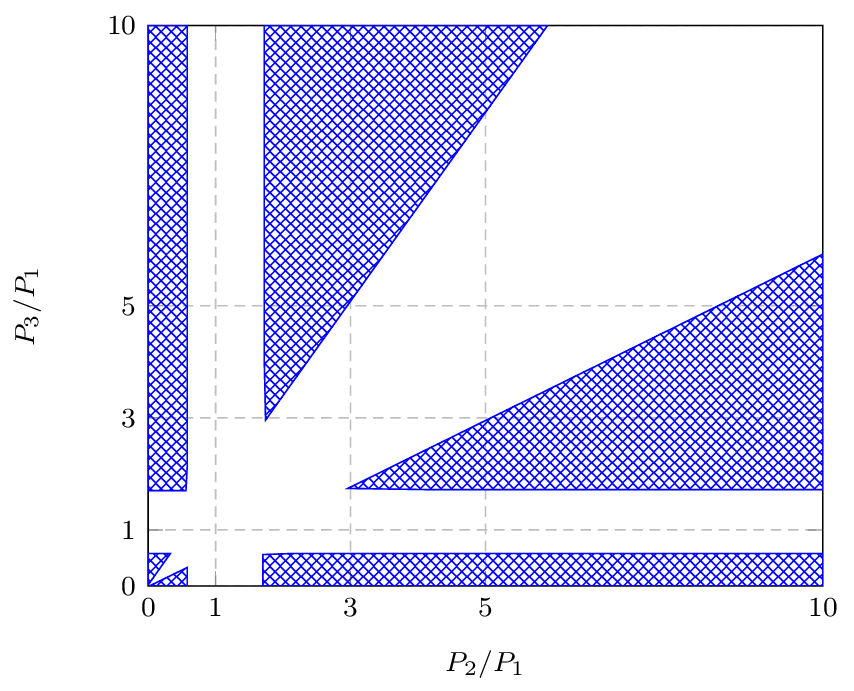}
  \caption{Subset of $(P_1,P_2,P_3)$ that fulfills Assumption \ref{ass:1} $K=3$, $c_1=c_2=c_3$, for $c_0=10$, in crosshatched pattern.}
  \label{fig:ass1_3}
\end{figure}

\begin{figure}
\centering
\includegraphics[]{./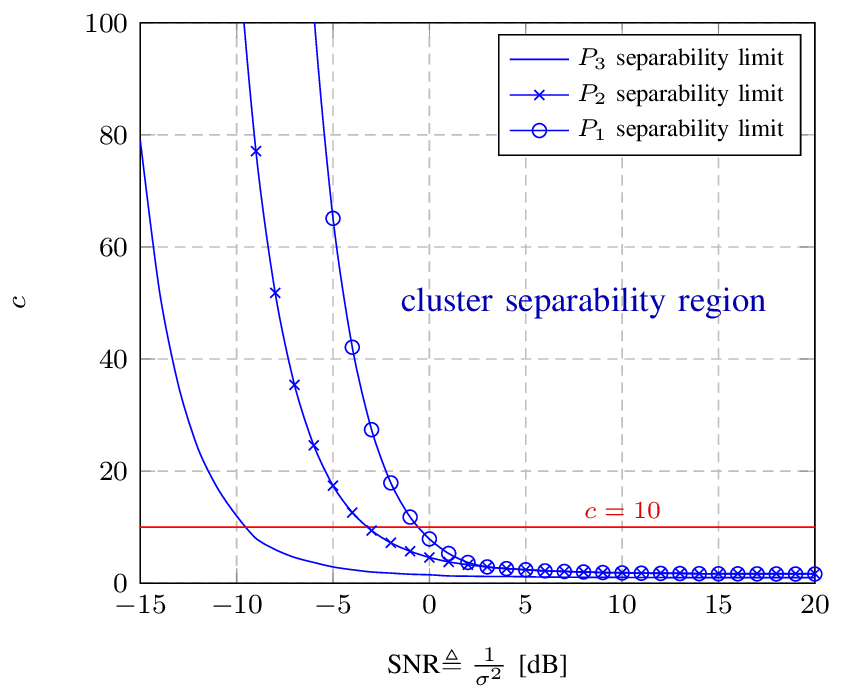}
  \caption{Limiting ratio $c$ as a function of $\sigma^2$ to ensure consistent estimation of $P_1=1$, $P_2=3$ and $P_3=10$, $c_0=10$, $c_1=c_2=c_3$.}
  \label{fig:ass2}
\end{figure}

\section{Multi-source power inference}
\label{sec:estimation}
In this section, we prove our main result,
\begin{theorem}
	\label{th:2}
	Let $\B_N\in\CC^{N\times N}$ be defined as in Theorem \ref{th:1}, and ${\blambda}=(\lambda_1,\ldots,\lambda_N)$, $\lambda_1\leq \ldots\leq \lambda_N$, be the vector of the ordered eigenvalues of $\B_N$. Further assume that the limiting ratios $c_0$, $c_1,\ldots,c_K$, $c$ and $\P$ are such that Assumptions \ref{ass:1} and \ref{ass:2} are fulfilled for some $k\in\{1,\ldots,K\}$. Then, as $N$, $n$, $M$ grow large, we have
	\begin{equation}
		\label{eq:limPk}
		\hat{P}_k - P_k \asto 0
	\end{equation}
	where the estimate $\hat{P}_k$ is given by
	\begin{itemize}
	  \item if $M\neq N$,
	    \begin{equation}
	      \label{eq:th_Pk}
	      \hat{P}_k = \frac{NM}{n_k(M-N)}\sum_{i\in \mathcal N_k} (\eta_i-\mu_i),
	    \end{equation}
	  \item if $M=N$,
	    \begin{equation}
	      \label{eq:th_Pk2}
	      \hat{P}_k = \frac{N}{n_k(N-n)}\sum_{i\in \mathcal N_k} \left(\sum_{j=1}^N \frac{\eta_i}{(\lambda_j-\eta_i)^2}\right)^{-1},
	    \end{equation}
      \end{itemize}
      in which $\mathcal N_k=\{\sum_{i=1}^{k-1} n_i +1,\ldots,\sum_{i=1}^k n_i\}$, $(\eta_1,\ldots,\eta_N)$ are the ordered eigenvalues of the matrix $\diag(\blambda)-\frac1N\sqrt{\blambda}\sqrt{\blambda}^\trans$ and $(\mu_1,\ldots,\mu_N)$ are the ordered eigenvalues of the matrix $\diag(\blambda)-\frac1M\sqrt{\blambda}\sqrt{\blambda}^\trans$.
\end{theorem}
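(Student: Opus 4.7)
The approach generalizes the G-estimation technique of Mestre \cite{MES08}, originally developed for the sample covariance model $\X^\herm\P\X$, to the information-plus-noise model \eqref{eq:HPX+W}. The central idea is to write $P_k$ as a complex contour integral of an explicit function of the deterministic Stieltjes transforms $m_F$ and $\um$, to substitute their empirical counterparts $m_{F_N}$ and $\um_N$ built from $\B_N$ and $\frac{1}{M}\uY^\herm\uY$, and to evaluate the resulting contour integral by the residue theorem. The eigenvalues $\eta_i$ and $\mu_i$ will emerge naturally as the zeros of these two empirical Stieltjes transforms.

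I would first extract $P_k$ from \eqref{eq:um_fp}. By \eqref{eq:m1_m_um} the auxiliary function $f(z)=m_G(-1/\um(z)-\sigma^2)=-z\,m_F(z)\um(z)$ is meromorphic on the complement of the support of $F$, and as $z$ traverses a small positively oriented contour $\Contour$ surrounding the cluster of $F$ associated with $P_k$ (whose isolation is granted by Assumptions \ref{ass:1} and \ref{ass:2}), the image curve $f(\Contour)$ winds exactly once around the pole $-1/P_k$ of the term $P_k/(1+P_k f)$ in \eqref{eq:um_fp}, and around no other such pole. Differentiating \eqref{eq:um_fp} with respect to $z$ to eliminate $f$ in favour of $\um$, $m_F$ and their first derivatives, and then applying the residue theorem after the change of variable $z\mapsto f(z)$, should yield an identity
\begin{equation}
P_k \;=\; \frac{1}{2\pi i}\oint_\Contour z\,\Phi\bigl(\um(z),m_F(z)\bigr)\,dz, \nonumber
\end{equation}
with $\Phi$ an explicit rational expression (up to a normalizing constant involving $c_k$ and $n_k/n$).

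Next I would replace the deterministic transforms in this identity by their empirical counterparts, yielding the candidate estimator $\hat P_k$. Almost sure uniform convergence of $m_{F_N},\um_N$ on $\Contour$ (which is taken to remain at positive distance from the support of $F$) follows from Theorem \ref{th:1}, and the exact separation theorem of Bai--Silverstein \cite{BAI99}, quoted later as Theorem \ref{th:bai}, guarantees that under Assumptions \ref{ass:1}--\ref{ass:2} exactly $n_k$ eigenvalues of $\B_N$ lie in the $k$-th cluster almost surely for all large $N$, so $\Contour$ correctly encloses only those eigenvalues. Evaluating the empirical contour integral by the residue theorem reduces to locating the poles of the empirical integrand: (a) the $\lambda_i$'s themselves and (b) the zeros of $m_{F_N}$ and of $\um_N$. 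A direct secular-equation computation for the rank-one perturbations $\diag(\blambda)-\frac{1}{N}\sqrt{\blambda}\sqrt{\blambda}^\trans$ and $\diag(\blambda)-\frac{1}{M}\sqrt{\blambda}\sqrt{\blambda}^\trans$ shows that their non-trivial eigenvalues $\eta_i$ and $\mu_i$ satisfy $m_{F_N}(\eta_i)=0$ and $\um_N(\mu_i)=0$, respectively (the relation $\um_N(z)=\frac{N}{M}m_{F_N}(z)-\frac{M-N}{Mz}$ from \eqref{eq:m_um} making the two sets of zeros easy to compare). Interlacing ensures exactly one $\eta_i$ and one $\mu_i$ per cluster of $F$ between consecutive eigenvalues, and summing the residues collected inside $\Contour$ should collapse the algebra, via \eqref{eq:m_um}, to the compact expression $\hat P_k=\frac{NM}{n_k(M-N)}\sum_{i\in\mathcal N_k}(\eta_i-\mu_i)$. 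In the degenerate case $M=N$, the identity \eqref{eq:m_um} forces $\mu_i=\eta_i$ and the ratio $(\eta_i-\mu_i)/(M-N)$ becomes $0/0$; a Laurent expansion of $\mu_i$ in the auxiliary parameter $N/M$ around $1$, using implicit differentiation of the secular equation, should produce the derivative-type formula \eqref{eq:th_Pk2}.

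The main obstacle will be the algebraic reduction of the residue sum to the closed form $\sum_{i\in\mathcal N_k}(\eta_i-\mu_i)$: because \eqref{eq:um_fp} is implicit, the integrand $\Phi$ must be manipulated repeatedly via the functional equation, and the contributions of the three families of poles $\{\lambda_i\},\{\eta_i\},\{\mu_i\}$ must telescope cleanly through \eqref{eq:m_um}. A subsidiary difficulty is the uniform-in-$N$ justification of the contour integration near the cluster edges, where the limiting density behaves like a square root; choosing $\Contour$ to stay at a fixed positive distance from the almost sure edge locations of cluster $\uuk$ for $N$ large, granted by Assumption \ref{ass:2} and the exact separation theorem, resolves this through a standard uniform-convergence argument on compacts of $\CC\setminus\mathrm{supp}(F)$.
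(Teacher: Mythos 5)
Your plan reproduces the paper's proof essentially step for step: Cauchy's integral formula for $P_k$, the chain of variable changes through $m_G$, $\um$ and $m_F$ to obtain a contour integral over a path enclosing cluster $\uuk$ of $F$, substitution of the empirical Stieltjes transforms, dominated convergence, the exact-separation theorem to locate the $n_k$ eigenvalues $\lambda_i$, residue calculus over the three pole families $\{\lambda_i\},\{\eta_i\},\{\mu_i\}$, and the secular-equation identification of the $\eta_i,\mu_i$ as eigenvalues of $\diag(\blambda)-\frac1N\sqrt{\blambda}\sqrt{\blambda}^\trans$ and $\diag(\blambda)-\frac1M\sqrt{\blambda}\sqrt{\blambda}^\trans$. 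The one point your plan glosses over is that interlacing alone does not place the leftmost zeros $\eta_{k_0}$ and $\mu_{k_0}$ inside the contour (they only lie somewhere between $\lambda_{k_0-1}$ and $\lambda_{k_0}$, hence possibly in the inter-cluster gap outside the contour); the paper settles this by integrating $\omega^{-1}$ over $\mathcal C_k$, which vanishes since $0$ lies outside, and deducing from the resulting residue count that each of the two families must contribute exactly $n_k$ zeros inside.
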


\begin{remark}
  \label{rem:c0}
  We immediately notice that, if $N<n$, the powers $P_1,\ldots,P_l$, with $l$ the largest integer such that $N-\sum_{i=l}^K n_i<0$, cannot be estimated.
\end{remark}

The approach pursued to prove Theorem \ref{th:2} relies strongly on the original idea of \cite{MES08}. From Cauchy's integration formula \cite{RUD86},
\begin{align}
  P_k &= \frac1{2\pi i}\oint_{\mathcal C_k} \frac{\omega}{P_k-\omega}d\omega \nonumber \\
	\label{eq:Pkint}
	 &= c_k\frac1{2\pi i}\oint_{\mathcal C_k}\sum_{r=1}^{K}\frac1{c_r}\frac{\omega}{P_r-\omega}d\omega
\end{align}
for any negatively oriented contour $\mathcal C_k\subset \CC$, such that $P_k$ is contained in the surface described by the contour, while for every $i\neq k$, $P_i$ is outside this surface. The strategy is then the following: we first propose a convenient integration contour $\mathcal C_k$ which is parametrized by a function of the Stieltjes transform $m_F(z)$ of the l.s.d. of $\B_N$. We proceed to a variable change in \eqref{eq:Pkint} to express $P_k$ as a function of $m_F(z)$. We then evaluate the complex integral resulting from replacing the limiting $m_F(z)$ in \eqref{eq:Pkint} by its empirical counterpart $\hat{m}_F(z)=\frac1N\tr(\B_N-z\I_N)^{-1}$. This new integral, whose value we name $\hat{P}_k$, is shown to be almost surely equal to $P_k$ in the large $N$ limit. It then suffices to evaluate $\hat{P}_k$, which is just a matter of residue calculus \cite{RUD86}.

\begin{figure} 
\centering
\includegraphics[]{./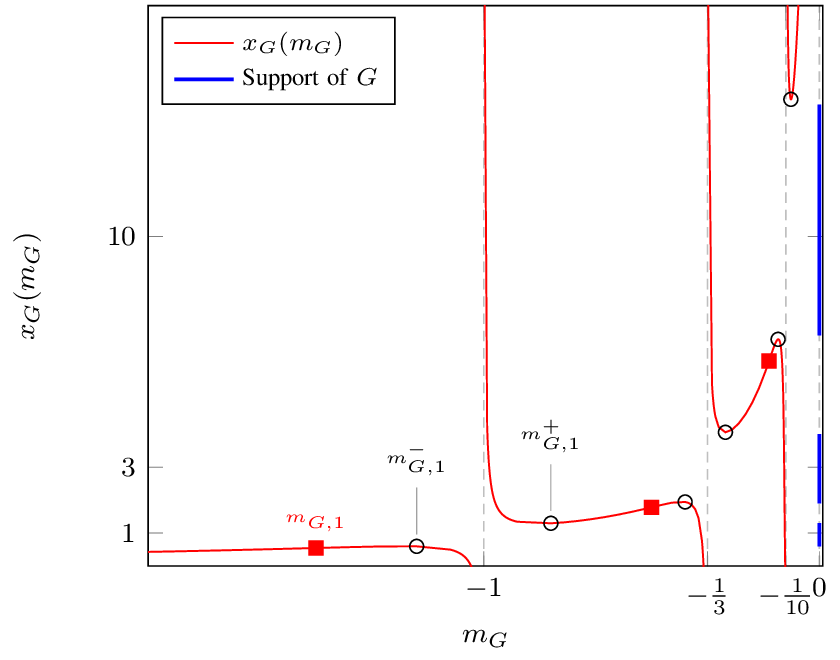}
  \caption{$x_G(m_G)$ for $m_G$ real, $\P$ diagonal composed of three evenly weighted masses in $1$, $3$ and $10$. Local extrema are marked in circles, inflexion points are marked in squares.}
  \label{fig:support}
\end{figure}

\begin{figure}
\centering
\includegraphics[]{./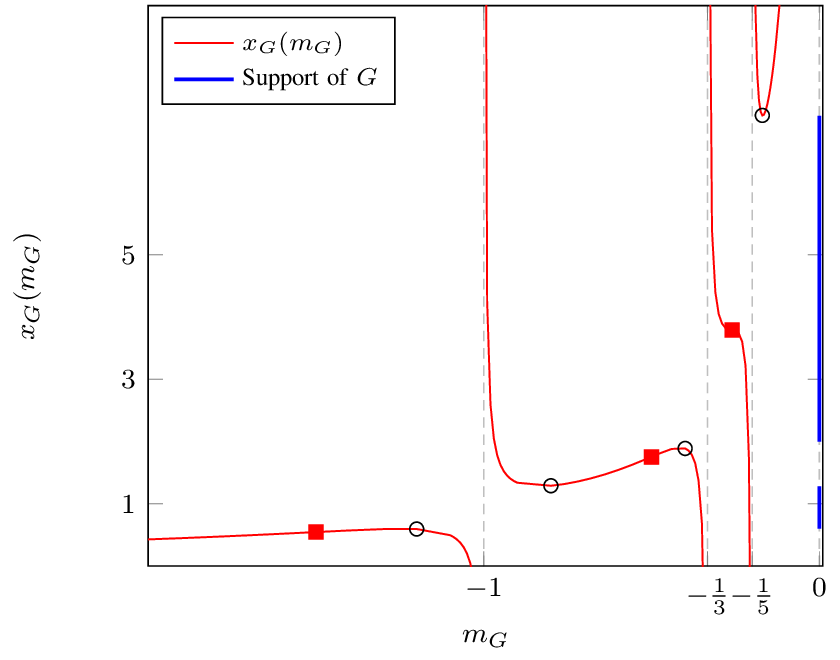}
  \caption{$x_G(m_G)$ for $m_G$ real, $\P$ diagonal composed of three evenly weighted masses in $1$, $3$ and $5$. Local extrema are marked in circles, inflexion points are marked in squares.}
  \label{fig:support2}
\end{figure}

\begin{figure} 
\centering
\includegraphics[]{./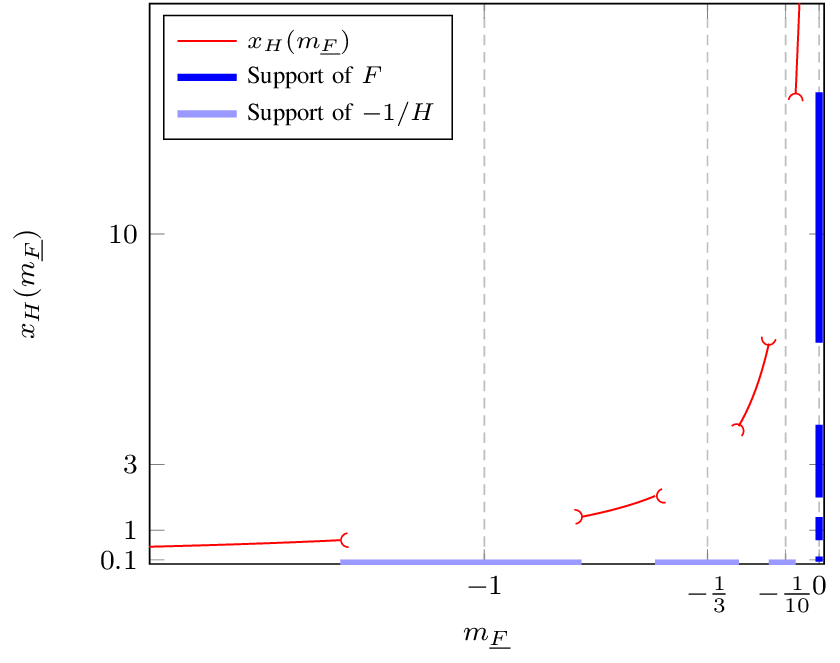}
  \caption{$x_{\uF}(\um)$ for $\um$ real, $\sigma^2=0.1$, $c=c_0=10$, $\P$ diagonal composed of three evenly weighted masses in $1$, $3$ and $10$. The support of $F$ is read on the vertical axis.}
  \label{fig:support_F}
\end{figure}

We start by determining the integration contour $\mathcal C_k$. For this, we first need to study the distributions $G$ and $F$ in more details.

\subsection{Properties of $G$ and $F$}
Let us introduce the following result on the l.s.d. of sample covariance matrices, borrowed from \cite{CHO95}
\begin{proposition}
  \label{prop:choi}
  Let $\A_n$ be defined as in Proposition \ref{prop:sil95}. Then the almost sure limiting Stieltjes transform $m_A(z)$ of the e.s.d. of $\A_n$, $z\in\CC^+$ admits a limit $m_A^\circ(x)$ when $z\to x\in \RR^\ast$. If $x$ is inside the support of $A$, then $m_A^\circ(x)$ is the only solution with positive imaginary part of the equation $x_A(m)=x$ in the variable $m$, with $x_A(m)$ defined, for $-1/m$ outside the support of $T$, as
  \begin{equation}
    \label{eq:xA}
    x_A(m) = -\frac1{m} + \frac1c\int \frac{t}{1+tm}dT(t),
  \end{equation}
while, if $x$ is outside the support of $A$, $m_A^\circ(x)$ is the only solution $m$ of $x_A(m)=x$ such that $x_A'(m)>0$. Moreover, if for some $m\in\RR$ such that $-1/m$ is outside the support of $T$, $x_A'(m)>0$, then $x_A(m)$ is outside the support of $A$.
\end{proposition}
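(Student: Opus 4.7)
The overall strategy is to leverage the holomorphy of $m_A(z)$ on $\CC^+$ together with the implicit equation \eqref{eq:sil95} of Proposition \ref{prop:sil95}, and to transfer the analytic properties to the real line through the Stieltjes--Perron inversion formula \eqref{eq:inv_stieltjes}. The argument naturally splits into the existence of the boundary limit, the characterization inside $\mathrm{supp}(A)$, the characterization outside $\mathrm{supp}(A)$, and the converse statement.

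First, I establish $m_A^\circ(x) := \lim_{y\to 0^+} m_A(x+iy)$ for $x\in\RR^\ast$. As the Stieltjes transform of a probability measure, $m_A$ admits non-tangential boundary values almost everywhere. Combining this with the uniqueness statement of Proposition \ref{prop:sil95} (which yields continuous dependence of $m_A(z)$ on $z$ throughout $\CC^+$) and the inversion formula \eqref{eq:inv_stieltjes} promotes this to a genuine continuous limit at every $x\in\RR^\ast$. Passing the identity $x_A(m_A(z))=z$ to this limit, using the continuity of $x_A$ on the set $\{m\in\CC : -1/m\notin \mathrm{supp}(T)\}$, yields $x_A(m_A^\circ(x))=x$. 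If $x$ lies inside $\mathrm{supp}(A)$, the inversion formula gives $\Im[m_A^\circ(x)] = \pi f_A(x) > 0$, and the uniqueness clause of Proposition \ref{prop:sil95} (extended by continuity to the closed upper half plane) singles out $m_A^\circ(x)$ as the unique solution of $x_A(m)=x$ with strictly positive imaginary part.

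If $x$ lies outside $\mathrm{supp}(A)$, the density of $A$ vanishes in a neighbourhood of $x$, so $m_A$ continues analytically across the real axis and $m_A^\circ(x)\in\RR$. Differentiating $m_A(z)=\int(t-z)^{-1}dA(t)$ yields $m_A'(x)=\int(t-x)^{-2}dA(t)>0$, and differentiating $x_A(m_A(z))=z$ at $z=x$ gives $x_A'(m_A^\circ(x))\cdot m_A'(x)=1$, whence $x_A'(m_A^\circ(x))>0$, as claimed.

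The main obstacle is the converse implication: if $m_0\in\RR$ with $-1/m_0\notin \mathrm{supp}(T)$ and $x_A'(m_0)>0$, then $x_0:=x_A(m_0)$ lies outside $\mathrm{supp}(A)$. My plan is analytic continuation of the local inverse. The integrand in \eqref{eq:xA} is holomorphic in $m$ on a complex neighbourhood of the real point $m_0$ (since $\mathrm{supp}(T)$ stays bounded away from $-1/m_0$), so $x_A$ extends to a holomorphic map there; the holomorphic inverse function theorem, applicable because $x_A'(m_0)>0\neq 0$, yields a holomorphic local inverse $\widetilde m$ with $\widetilde m(x_0)=m_0$ and $\widetilde m'(x_0)=1/x_A'(m_0)>0$. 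For $z=x_0+iy$ with small $y>0$, we have $\Im[\widetilde m(z)]>0$, so by the uniqueness clause of Proposition \ref{prop:sil95}, $\widetilde m$ must coincide with $m_A$ on this upper neighbourhood. Consequently $m_A$ extends real-analytically to a real neighbourhood of $x_0$ with real boundary values, and \eqref{eq:inv_stieltjes} gives that $A$ has no mass there, i.e.\ $x_0\notin \mathrm{supp}(A)$. Uniqueness of the real root $m$ of $x_A(m)=x$ with $x_A'(m)>0$ then follows from the observation that the open set $\{m\in\RR : -1/m\notin \mathrm{supp}(T),\ x_A'(m)>0\}$ is strictly monotonically mapped, on each of its connected components, onto the corresponding component of $\RR\setminus\mathrm{supp}(A)$.
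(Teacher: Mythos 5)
First, note that the paper does not actually prove Proposition \ref{prop:choi}: it is imported verbatim from \cite{CHO95} (Silverstein and Choi), so there is no internal proof to compare yours against. Judged on its own terms, your proposal correctly identifies the standard strategy, and your treatment of the converse statement --- the local holomorphic inverse $\widetilde m$ at a point $m_0$ with $x_A'(m_0)>0$, matched to $m_A$ on a one-sided complex neighbourhood of $x_0$ via the uniqueness clause of Proposition \ref{prop:sil95}, forcing real boundary values and hence the absence of mass near $x_0$ --- is essentially the correct argument from \cite{CHO95}.

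The first half of your proof, however, has genuine gaps. The existence of the limit $m_A^\circ(x)$ at \emph{every} $x\in\RR^\ast$ is the main theorem of \cite{CHO95} and cannot be obtained the way you suggest: Fatou-type results give non-tangential boundary values only almost everywhere, and neither the continuity of $m_A$ on the open set $\CC^+$ nor the inversion formula \eqref{eq:inv_stieltjes} ``promotes'' this to every point. The actual argument must show that any sequence $m_A(z_n)$ with $z_n\to x\neq 0$ stays bounded (via the functional equation) and that all subsequential limits coincide; that is where the work lies, and it is absent. Second, your characterization inside the support assumes $\Im[m_A^\circ(x)]=\pi f_A(x)>0$ at every interior point of the support; that $A$ admits a continuous density which is strictly positive there is itself a theorem of \cite{CHO95}, not a consequence of the inversion formula. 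Third, uniqueness statements do not pass to limits: invoking ``the uniqueness clause of Proposition \ref{prop:sil95} extended by continuity to the closed upper half plane'' is not an argument, and the uniqueness of the real root with $x_A'(m)>0$ rests on the global monotonicity of $x_A$ on the set where it is locally increasing --- precisely the Cauchy--Schwarz computation the paper carries out for $x_G$ around \eqref{eq:m1m1star} --- which you assert as an ``observation'' without proof (without it, two distinct components of that set could map onto overlapping intervals). Each of these points can be repaired, but as written they are the hard parts of the proposition rather than routine consequences of what precedes them.
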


The immediate corollary of Proposition \ref{prop:choi} is that the complementary of the support ${\rm Supp}(A)$ of $A$ is the set $\{x_A(m)\}$ for $-1/m$ outside the support of $T$ such that $x_A'(m)>0$, 
\begin{equation}
  {\rm Supp}(A) = \RR \setminus \left\{x ~\vert~ \exists m\in \RR,x=x_A(m),x_A'(m)>0 \right\}.
\end{equation}

\subsubsection{Support of $G$}
First consider the matrix $\H\P\H^\herm$, and let the function $x_G(m_G)$ be defined, for scalars $m_G\in\RR^\ast\setminus \{-1/P_1,\ldots,-1/P_K\}$, by
\begin{equation}
	\label{eq:xH_of_m1}
	x_G(m_G) = -\frac1{m_G} + \sum_{r=1}^K \frac1{c_r}\frac{P_r}{1+P_rm_G}.
\end{equation}

The function $x_G(m_G)$ is depicted in Figures \ref{fig:support} and \ref{fig:support2}, for the cases where $c_0=10$, $c_1=c_2=c_3$ and $(P_1,P_2,P_3)$ equal respectively $(1,3,10)$ and $(1,3,5)$. As expected by Proposition \ref{prop:choi}, $x_G(m_G)$ is increasing for $m_G$ such that $x_G(m_G)$ is outside the support of $G$. Note now that the function $x_G$ presents asymptotes in the positions $-1/P_1,\ldots,-1/P_K$,
\begin{align}
  \lim_{m_G\downarrow (-1/{P_i})} x_G(m_G) &= \infty \\
  \lim_{m_G\uparrow (-1/{P_i})} x_G(m_G)   &= -\infty,
\end{align}
and that $x_G(m_G)\to 0^+$ as $m_G\to -\infty$. Note also that, on its restriction to the set where it is non-decreasing, $x_G$ is increasing.\footnote{we say here that a function $f(x)$ is increasing if $x<x^\star \Rightarrow f(x)-f(x^\star)>0$; if $x<x^\star \Rightarrow f(x)-f(x^\star)\geq 0$, we say that $f(x)$ is non-decreasing.} To prove this, let $m_G$ and $m_G^\star$ be two distinct points such that $x_G(m_G)>0$ and $x_G(m_G^\star)>0$, and $m_G^\star<m_G<0$, we indeed have,\footnote{this proof is borrowed from the proof of \cite{MES08}, with different notations.}
\begin{align}
	\label{eq:m1m1star}
	x_G(m_G)-x_G(m^\star_G) &=\frac{m_G-m_G^\star}{m_Gm_G^\star} \nonumber \\
	&\times \left[1- \sum_{r=1}^K \frac1{c_r} \frac{P_r^2}{(P_r+\frac1{m_G})(P_r+\frac1{m_G^\star})} \right].
\end{align}

Noticing that, for $P_i>0$,
\begin{align}
	0 &< \left(\frac{P_i}{P_i+\frac1{m_G}}-\frac{P_i}{P_i+\frac1{m_G^\star}}\right)^2 \\
	&= \frac{P_i^2}{(P_i+\frac1{m_G})^2} + \frac{P_i^2}{(P_i+\frac1{m_G^\star})^2} - 2 \frac{P_i^2}{(P_i+\frac1{m_G})(P_i+\frac1{m_G^\star})},
\end{align}
we have
\begin{align}
  &\left(1- \sum_{r=1}^K \frac1{c_r} \frac{P_r^2}{(P_r+\frac1{m_G})^2}\right)+\left(1- \sum_{r=1}^K \frac1{c_r} \frac{P_r^2}{(P_r+\frac1{m_G^\star})^2}\right) \nonumber \\
  &< 2-2\sum_{r=1}^K \frac1{c_r} \frac{P_r^2}{(P_r+\frac1{m_G})(P_r+\frac1{m_G^\star})}.
\end{align}

Since we also have
\begin{align}
  x_G'(m_G) &= \frac1{m_G^2}\left[1 - \sum_{r=1}^K \frac1{c_r} \frac{P_r^2}{(P_r+\frac1{m_G})^2} \right] \geq 0 \\
  x_G'(m_G^\star) &= \frac1{(m_G^\star)^2}\left[1 - \sum_{r=1}^K \frac1{c_r} \frac{P_r^2}{(P_r+\frac1{m_G^\star})^2} \right] \geq 0,
\end{align}
we conclude that the term in brackets in \eqref{eq:m1m1star} is positive and then that $x_G(m_G)-x_G(m_G^\star)>0$. Hence $x_G$ is increasing on its restriction to the set where it is non-decreasing.

Notice also that $x_G$, both in Figures \ref{fig:support} and \ref{fig:support2}, has exactly one inflexion point on each open set $(-1/P_{i-1},-1/P_i)$, for $i\in\{1,\ldots,K\}$, with convention $P_0=0+$. This is proven by noticing that $x_G''(m_G)=0$ is equivalent to
  \begin{equation}
    \label{eq:xG2}
    \sum_{r=1}^K\frac1{c_r}\frac{P_r^3m_G^3}{(1+P_rm_G)^3}-1 = 0.
  \end{equation}

  Now, the left-hand side of \eqref{eq:xG2} has derivative along $m_G$,
  \begin{equation}
    3\sum_{r=1}^K \frac1{c_r}\frac{P_r^3m_G^2}{(1+P_rm_G)^4},
  \end{equation}
  which is always positive. Notice that the left-hand side of \eqref{eq:xG2} has asymptotes for $m_G=-1/P_i$ for all $i\in\{1,\ldots,K\}$, and has limits $0$ as $m_G\to 0$ and $1/c_0-1$ as $m_G\to -\infty$. If $c_0> 1$, Equation \eqref{eq:xG2} (and then $x_G''(m_G)=0$) therefore has a unique solution in $(-1/P_{i-1},-1/P_i)$ for all $i\in\{1,\ldots,K\}$. When $x_G$ is increasing somewhere on $(-1/P_{i-1},-1/P_i)$, the inflexion point, i.e., the solution to $x_G''(m_G)=0$, in $(-1/P_{i-1},-1/P_i)$ is necessarily found in the region where $x_G$ increases. If $c_0\leq 1$, the leftmost inflexion point may not exist.

  From the discussion above and Proposition \ref{prop:choi}, it is clear that the support of $G$ is divided into $\uK\leq K$ compact subsets $[x_{G,i}^-,x_{G,i}^+]$, $i\in\{1,\ldots,\uK\}$. Also, if $c_0>1$, $G$ has an additional mass in $0$ of probability $G(0)-G(0-)=(c_0-1)/c_0$; this mass will not be counted as a cluster in $G$. Observe that every $P_i$ can be uniquely mapped to a corresponding subset $[x_{G,j}^-,x_{G,j}^+]$ in the following fashion. The power $P_1$ is mapped onto the first cluster in $G$; we then have $1_G=1$. Then the power $P_2$ is either mapped onto the second cluster in $G$ if $x_G$ increases in the subset $(-1/P_1,-1/P_2)$, which is equivalent to saying that $x_G'(m_{G,2})>0$ for $m_{G,2}$ the only solution to $x_G''(m_G)=0$ in $(-1/P_1,-1/P_2)$; in this case, we have $2_G=2$ and the clusters associated to $P_1$ and $P_2$ in $G$ are distinct. Otherwise, if $x_G'(m_{G,2})\leq 0$, $P_2$ is mapped onto the first cluster in $F$; in this case, $2_G=1$. The latter scenario visually corresponds to the case when $P_1$ and $P_2$ engender ``overlapping clusters''. More generally, $P_j$, $j\in\{1,\ldots,K\}$, is uniquely mapped onto the cluster ${j_G}$ such that
\begin{equation}
  j_G = \# \left\{i\leq j ~\vert~ \min[x_G'(m_{G,i}),x_G'(m_{G,i+1})]>0 \right\},
\end{equation}
with convention $m_{G,K+1}=0$, which is exactly
\begin{equation}
  j_G = \# \left\{i\leq j ~\vert~ i \textmd{ satisfies Assumption \ref{ass:1}} \right\},
\end{equation}
when $c_0> 1$. If $c_0\leq 1$, $m_{G,1}$, the zero of $x_G''$ in $(-\infty,-1/P_1)$ may not exist. If $c_0<1$, we claim that $P_1$ cannot be evaluated (as was already observed in Remark \ref{rem:c0}). The special case when $c_0=1$ would require a restatement of Assumption \ref{ass:1} to handle the special case of $P_1$; this will however not be done, as it will turn out that Assumption \ref{ass:2} is violated for $P_1$ if $\sigma^2>0$, which we assume.

In the particular case of the power $P_k$ of interest in Theorem \ref{th:2}, because of Assumption \ref{ass:1}, $x_G'(m_{G,k})>0$. Therefore the index $\uk$ of the cluster associated to $P_k$ in $G$ satisfies $\uk=(k-1)_G+1$ (with convention $0_G=0$). Also, from Assumption \ref{ass:1}, $x_G'(m_{G,k+1})>0$. Therefore $(k+1)_G=\uk + 1$. In that case, we have that $P_k$ is the only power mapped to cluster $\uk$ in $G$, and then we have the required cluster separability condition.

\subsubsection{Support of $F$}
We now proceed to the study of $F$, the almost sure limit spectrum distribution of $\B_N$. In the same way as in the previous section, we have that the support of $\uF$ is fully determined by the function $x_\uF(\um)$, defined for $\um$ real, such that $-1/\um$ lies outside the support of $H$, by 
\begin{equation} 
  \label{eq:xF}
  x_\uF(\um) = -\frac1{\um} + \frac{1+c_0}{cc_0}\int \frac{t}{1+t\um}dH(t).
\end{equation}

Figure \ref{fig:support_F} depicts the function $x_\uF$ in the case of Figure \ref{fig:spectrum}, i.e., $K=3$, $P_1=1,P_2=3,P_3=10$, $c_1=c_2=c_3$, $c_0=10$, $c=10$, $\sigma^2=0.1$. Figure \ref{fig:support_F} has the peculiar behaviour that it does not have asymptotes as in Figure \ref{fig:support} where the population eigenvalue distribution was discrete. As a consequence, our previous derivations cannot be straightforwardly adapted to derive the spectrum separability condition. If $c_0>1$, note also, although it is not appearing in the abscissa range of Figure \ref{fig:support_F}, that there exist asymptotes in the position $\um=-1/\sigma^2$. This is due to the fact that $G(0)-G(0-)>0$, and therefore $H(\sigma^2)-H(\sigma^2-)>0$. We assume $c_0>1$ until further notice.

From Proposition \ref{prop:choi}, the support of $\uF$ is complementary to the set of real nonnegative $x$ such that $x=x_\uF(\um)$ and $x_\uF'(\um)>0$ for a certain real $\um$, with $x_\uF'(\um)$ given by
\begin{equation}
  x_\uF'(\um) = \frac1{\um^2} -\frac{1+c_0}{cc_0}\int \frac{t^2}{(1+t\um)^2}dH(t).
\end{equation}

Reminding that $H(t)=\frac{c_0}{c_0+1}G(t-\sigma^2)+\frac1{1+c_0}\delta(t)$, this can be rewritten
\begin{equation}
  \label{eq:xFp}
  x_\uF'(\um) = \frac1{\um^2} -\frac1c \int \frac{t^2}{(1+t\um)^2}dG(t-\sigma^2).
\end{equation}

It is still true that $x_\uF(\um)$, restricted to the set of $\um$ where $x_\uF'(\um)\geq 0$, is increasing. As a consequence, it is still true also that each cluster of $H$ can be mapped to a unique cluster in $\uF$. It is then possible to iteratively map the power $P_k$ onto cluster $\uk$ in $G$, as previously described, and to further map cluster $\uk$ in $G$ (which is also cluster $\uk$ in $H$) onto a unique cluster $k_F$ in $\uF$ (or equivalently in $F$). 

Therefore, a necessary and sufficient condition for the separability of the cluster associated to $P_k$ in $\uF$ reads
\begin{assumption}
  \label{ass:3}
  There exist two distinct real values $m_{\uF,\uk}^{(l)} < m_{\uF,\uk}^{(r)}$ such that 
  \begin{enumerate}
    \item $x_\uF'(m_{\uF,\uk}^{(l)})>0$, $x_\uF'(m_{\uF,\uk}^{(r)})>0$
    \item there exist $m_{G,k}^{(l)},m_{G,k}^{(r)}\in\RR$ such that $x_G(m_{G,k}^{(l)})=-1/m_{\uF,\uk}^{(l)}-\sigma^2$ and $x_G(m_{G,k}^{(r)})=-1/m_{\uF,\uk}^{(r)}-\sigma^2$ that satisfy 
      \begin{enumerate}
	\item $x_G'(m_{G,k}^{(l)})>0$, $x_G'(m_{G,k}^{(r)})>0$,
	\item \label{item:H} and
	  \begin{align}
	    P_{k-1} < -\frac1{m_{G,k}^{(l)}} < P_k < -\frac1{m_{G,k}^{(r)}} < P_{k+1}
	  \end{align}
	  with the convention $P_0=0+$, $P_{K+1}=\infty$.
      \end{enumerate}
  \end{enumerate}
\end{assumption}

Assumption \ref{ass:3} states (i) that cluster $\uk$ in $G$ is distinct from clusters $(k-1)_G$ and $(k+1)_G$ (Item \ref{item:H}); this is another way of stating Assumption \ref{ass:1}, and (ii) that the points $m_{\uF,\uk}^{(l)}\triangleq -1/(x_G(m_{G,\uk}^{(l)})+\sigma^2)$ and $m_{\uF,\uk}^{(r)}\triangleq -1/(x_G(m_{G,\uk}^{(r)})+\sigma^2)$ (which lie on either side of cluster $\uk$ in $H$) have respective images $x_\uuk^{(l)}\triangleq x_\uF(m_{\uF,\uk}^{(l)})$ and $x_\uuk^{(r)}\triangleq x_\uF(m_{\uF,\uk}^{(r)})$ by $x_\uF$, such that $x_\uF'(m_{\uF,\uk}^{(l)})>0$ and $x_\uF'(m_{\uF,\uk}^{(r)})>0$, i.e., $x_\uuk^{(l)}$ and $x_\uuk^{(r)}$ lie outside the support of $\uF$, on either side of cluster $\uuk$.

However, Assumption \ref{ass:3}, be it a necessary and sufficient condition for the separability of cluster $\uuk$, is difficult to exploit in practice. Indeed, it is not satisfactory to require the verification of the existence of such $m_{\uF,\uk}^{(l)}$ and $m_{\uF,\uk}^{(r)}$. More importantly, the computation of $x_\uF$ requires to know $H$, which is only fully accessible through the non-convenient inverse Stieltjes transform formula
\begin{equation}
  H(x) = \frac1\pi \lim_{y\to 0} \int_{-\infty}^x m_H(t+iy) dt.
\end{equation}

Instead of Assumption \ref{ass:3}, we derive here a sufficient condition for cluster separability in $\uF$. Notice from the clustering of $G$ into $\uK$ clusters plus a mass at zero that \eqref{eq:xFp} becomes
\begin{align}
  \label{eq:xFpp}
  x_\uF'(\um) &= \frac1{\um^2} -\frac1c \sum_{r=1}^\uK\int_{x_{G,r}^-}^{x_{G,r}^+} \frac{t^2}{(1+t\um)^2}dG(t-\sigma^2)  \nonumber \\ 
  & -\frac{c_0-1}{cc_0}\frac{\sigma^4}{(1+\sigma^2\um)^2},
\end{align}
where we remind that $[x_{G,i}^-,x_{G,i}^+]$ is the support of cluster $i$ in $G$, i.e., $x_{G,1}^-,x_{G,1}^+,\ldots,x_{G,\uK}^-,x_{G,\uK}^+$ are the images by $x_G$ of the $2\uK$ real solutions to $x_G'(m_G)=0$.

Observe now that the function $-t^2/(1+t\um)^2$, found in the integrals of \eqref{eq:xFpp}, has derivative along $t$
\begin{equation}
  \left(-\frac{t^2}{(1+t\um)^2} \right)^\prime = -\frac{2t}{(1+t\um)^4}(1+t\um)
\end{equation}
and is therefore strictly increasing when $\um<-1/t$ and strictly decreasing when $\um>-1/t$. For $\um\in(-1/(x_{G,i}^++\sigma^2),-1/x_{G,i+1}^-+\sigma^2)$, we then have the inequality
\begin{align}
  \label{eq:xxt}
  & x_\uF'(\um) \geq \frac1{\um^2} -\frac1c\left(\sum_{r=1}^i \frac{(x_{G,r}^++\sigma^2)^2}{(1+(x_{G,r}^++\sigma^2)\um)^2} \right. \nonumber \\
  &~~~ \left. + \sum_{r=i+1}^\uK \frac{(x_{G,r}^-+\sigma^2)^2}{(1+(x_{G,r}^-+\sigma^2)\um)^2} + \frac{c_0-1}{c_0}\frac{\sigma^4}{(1+\sigma^2\um)^2} \right).
\end{align}

Denote $f_i(\um)$ the right-hand side of \eqref{eq:xxt}. Through the inequality \eqref{eq:xxt}, we then fall back on a finite sum expression as in the previous study of the support of $G$. In that case, we can exhibit a sufficient condition to ensure the separability of cluster $\uuk$ from the neighboring clusters. Specifically, we only need to verify that $f_{\uk-1}(m_{\uF,\uk})>0$, with $m_{\uF,\uk}$ the single solution to $f_{\uk-1}'(\um)=0$ in the set $(-1/(x_{G,\uk-1}^++\sigma^2),-1/(x_{G,\uk}^-+\sigma^2))$, and $f_\uk(m_{\uF,\uk+1})>0$, with $m_{\uF,\uk+1}$ the unique solution to $f_{\uk}'(\um)=0$ in the set $(-1/(x_{G,\uk}^++\sigma^2),-1/(x_{G,\uk+1}^-+\sigma^2))$. This is exactly what Assumption \ref{ass:2} states.

Remember now that we assumed in this section $c_0>1$. If $c_0\leq 1$, then $0$ is in the support of $H$ and therefore the leftmost cluster in $F$, i.e., that attached to $\sigma^2$, is necessarily merged with that of $P_1$. This already discards the possibility of spectrum separation for $P_1$ and therefore $P_1$ cannot be estimated. It is therefore not necessary to update Assumption \ref{ass:1} for the particular case of $P_1$, when $c_0=1$.

Therefore, Assumptions \ref{ass:1} and \ref{ass:2} ensure that $(k-1)_F<k_F<(k+1)_F$, $k_F\neq 1$, and there exists a constructive way to derive the mapping $k\mapsto k_F$. We are now in position to determine the contour $\mathcal C_k$.


\subsection{Determination of $\mathcal C_k$}
From Assumption \ref{ass:2} and Proposition \ref{prop:choi}, there exist $x_\uuk^{(l)}$ and $x_\uuk^{(r)}$ outside the support of $F$, on either side of cluster $\uuk$, such that $\um(z)$ has limits $m_{\uF,\uk}^{(l)}\triangleq \um^\circ(x_\uuk^{(l)})$ and $m_{\uF,\uk}^{(r)}\triangleq \um^\circ(x_\uuk^{(r)})$, as $z\to x_\uuk^{(l)}$ and $z\to x_\uuk^{(r)}$, respectively, with $\um^\circ$ the analytic extension of $\um$ in the points $x_\uuk^{(l)}\in\RR$ and $x_\uuk^{(r)}\in\RR$. These limits $m_{\uF,\uk}^{(l)}$ and $m_{\uF,\uk}^{(r)}$ are on either side of cluster $\uk$ in the support of $-1/H$, and therefore $-1/m_{\uF,\uk}^{(l)}-\sigma^2$ and $-1/m_{\uF,\uk}^{(l)}-\sigma^2$ are on either side of cluster $\uk$ in the support of $G$. 

Consider any continuously differentiable complex path $\uuGk$ with endpoints $x_\uuk^{(l)}$ and $x_\uuk^{(r)}$, and interior points of positive imaginary part. We define the contour $\uuContour$ as the union of $\uuGk$ oriented from $x_\uuk^{(l)}$ to $x_\uuk^{(r)}$ and its complex conjugate $\uuGkast$ oriented backwards from $x_\uuk^{(r)}$ to $x_\uuk^{(l)}$. The contour $\uuContour$ is clearly continuous and piecewise continuously differentiable. Also, the support of cluster $\uuk$ in $\uF$ is completely inside $\uuContour$, while the supports of the neighboring clusters are away from $\uuContour$. The support of cluster $\uk$ in $H$ is then inside $-1/\um(\uuContour)$,\footnote{we slightly abuse notations here and should instead say that the support of cluster $\uk$ in $H$ is inside the contour described by the image by $-1/\um$ of the restriction to $\CC^+$ and $\CC^-$ of $\uuContour$, continuously extended to $\RR$ in the points $-1/m_{\uF,\uk}^{(l)}$ and $-1/m_{\uF,\uk}^{(r)}$.} and therefore the support of cluster $\uk$ in $G$ is inside $\uContour\triangleq -1/\um(\uuContour)-\sigma^2$. Since $\um$ is continuously differentiable on $\CC\setminus \RR$ (it is in fact holomorphic there \cite{CHO95}) and has limits in $x_\uuk^{(l)}$ and $x_\uuk^{(r)}$, $\uContour$ is also continuous and piecewise continuously differentiable. Going one last step in this process, we finally have that $P_k$ is inside the contour $\Contour \triangleq -1/m_G(\uContour)$, while $P_i$, for all $i\neq k$, is outside $\Contour$. Since $m_G$ is also holomorphic on $\CC\setminus \RR$ and has limits in $-1/\um^\circ(x_\uuk^{(l)})-\sigma^2$ and $-1/\um^\circ(x_\uuk^{(r)})-\sigma^2$, $\Contour$ is a continuous and piecewise continuously differentiable complex path, which is sufficient to perform complex integration \cite{RUD86}.

The contours $\mathcal C_1,\mathcal C_2,\mathcal C_3$ originating from circular integration contours ${\mathcal C}_{F,k}$ of diameter $[x_\uuk^{(l)},x_\uuk^{(r)}]$, $k\in\{1,2,3\}$, for the case of Figure \ref{fig:spectrum}, are depicted in Figure \ref{fig:contour}. The points $x_\uuk^{(l)}$ and $x_\uuk^{(r)}$ for $\uuk\in\{1,2,3\}$ are taken to be $x_\uuk^{(l)}=x_\uF(m_{\uF,\uk})$, $x_\uuk^{(r)}=x_\uF(m_{\uF,\uk+1})$, with $m_{\uF,i}$ the real root of $f_i'(\um)=0$ in $(-1/(x_{G,i-1}^++\sigma^2),-1/(x_{G,i}^-+\sigma^2))$ when $i\in\{1,2,3\}$, and we take the convention $m_{G,4}=-1/(15+\sigma^2)$.

Recall now that $P_k$ was defined as
\begin{equation}
	\label{eq:Pkintbis}
	P_k = c_k\frac1{2\pi i}\oint_{\Contour}\sum_{r=1}^{K}\frac1{c_r}\frac{\omega}{P_r-\omega}d\omega.
\end{equation}

With the variable change $\omega=-1/m_G(t)$, this becomes
\begin{align} 
  P_k = \frac{c_k}{2\pi i}\oint_{\uContour}&\left(m_G(t)\left[-\frac1{m_G(t)}+\sum_{r=1}^K\frac1{c_r}\frac{P_r}{1+P_rm_G(t)}\right] \right. \nonumber \\ 
  & \left. +\frac{c_0-1}{c_0} \right)\frac{m_G'(t)}{m_G(t)^2}dt.
\end{align}

From Equation \eqref{eq:m1}, this simplifies into
\begin{equation}
	\label{eq:Pkint_m1}
	P_k = \frac{c_k}{c_0}\frac1{2\pi i}\oint_{\uContour}\left(c_0 tm_G(t)+(c_0-1) \right)\frac{m_G'(t)}{m_G(t)^2}dt.
\end{equation}

Using \eqref{eq:z_of_m1_of_m} and proceeding to the further change of variable $t=-1/\um(z)-\sigma^2$, \eqref{eq:Pkint_m1} becomes
\begin{align}
	\label{eq:Pkint_m}
	P_k &= \frac{c_k}{2\pi i}\oint_{\uuContour} \left[ \left(\frac1{\um(z)}+\sigma^2\right)z\um(z)m_F(z)+\frac{c_0-1}{c_0} \right] \nonumber \\ &\times \frac{-\um(z)m_F(z)-z\um'(z)m_F(z)-z\um(z)m_F'(z)}{z^2\um(z)^2m_F(z)^2} dz \\
	&= \frac{c_k}{2\pi i}\oint_{\uuContour} \left[ \left(1+\sigma^2\um(z)\right)+\frac{c_0-1}{c_0}\frac1{z m_F(z)} \right] \nonumber \\ &\times \left[-\frac1{z\um(z)}-\frac{\um'(z)}{\um(z)^2}-\frac{m_F'(z)}{m_F(z)\um(z)}\right]dz.
\end{align}

This whole process of variable changes allowed us to describe $P_k$ as a function of $m_F(z)$, the Stieltjes transform of the almost sure limiting spectral distribution of $\B_N$, as $N\to \infty$. It then remains to exhibit a relation between $P_k$ and the empirical spectral distribution of $\B_N$ for finite $N$. This is to what the subsequent section is dedicated to.

\begin{figure}
  \centering
\includegraphics[]{./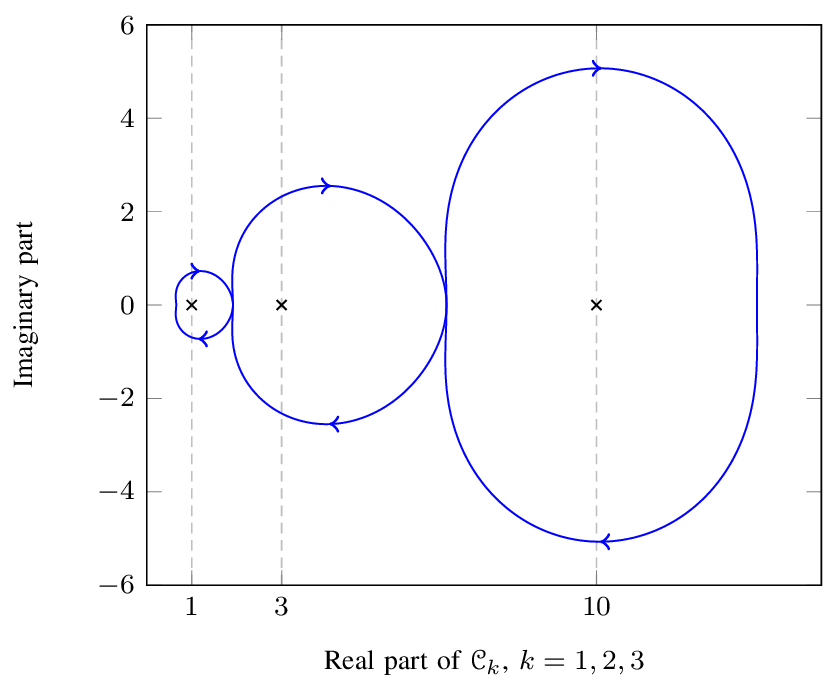}
  \caption{Integration contours $\mathcal C_{F,1}$, $\mathcal C_{F,2}$ and $C_{F,3}$, for $c=10$, $c_0=10$, $P_1=1$, $P_2=3$, $P_3=10$.}
  \label{fig:contour}
\end{figure}

\subsection{Evaluation of $\hat{P}_k$}
Let us now define $\hat{m}_F(z)$ and $\hum(z)$ as the Stieltjes transforms of the empirical eigenvalue distributions of $\B_N$ and $\uB_N$, respectively, i.e.,
\begin{equation}
	\label{eq:hatm}
	\hat{m}_F(z)=\frac1N\sum_{i=1}^N\frac1{\lambda_i-z}
\end{equation}
and 
\begin{equation}
\hum(z) = \frac{N}M\hat{m}_F(z) - \frac{M-N}M\frac1z.
\end{equation}

Instead of going further with \eqref{eq:Pkint_m}, define $\hat{P}_k$, the ``empirical counterpart'' of $P_k$, as
\begin{align}
	\label{eq:hPkint_m}
	\hat{P}_k = \frac{n}{n_k}&\frac1{2\pi i}\oint_{\uuContour} \left[ \frac{N}n\left(1+\sigma^2\hum(z)\right)+\frac{N-n}n\frac1{z \hat{m}_F(z)} \right] \nonumber \\ &\times \left[-\frac1{z\hum(z)}-\frac{\hum'(z)}{\hum(z)^2}-\frac{\hat{m}_F'(z)}{\hat{m}_F(z)\hum(z)}\right]dz.
\end{align}

The integrand can then be expanded into nine terms, for which residue calculus \cite{RUD86} can easily be performed. Denote first $\eta_1,\ldots,\eta_N$ the $N$ real roots of $\hat{m}_F(z)=0$ and $\mu_1,\ldots,\mu_N$ the $N$ real roots of $\hum(z)=0$. We identify three sets of possible poles for the nine aforementioned terms: (i) the set $\{\lambda_1,\ldots,\lambda_N\}\cap [x_\uuk^{(l)},x_\uuk^{(r)}]$, (ii) the set $\{\eta_1,\ldots,\eta_N\}\cap [x_\uuk^{(l)},x_\uuk^{(r)}]$ and (iii) the set $\{\mu_1,\ldots,\mu_N\}\cap [x_\uuk^{(l)},x_\uuk^{(r)}]$. For $M\neq N$, the full calculus leads to
	\begin{align}
	  \label{eq:residue_calc}
	  \hat{P}_k &= \frac{NM}{n_k(M-N)} \left[ \sum_{\substack{1\leq i\leq N \\ x_\uuk^{(l)} \leq \eta_i \leq x_\uuk^{(r)}}} \eta_i- \sum_{\substack{1\leq i\leq N \\ x_\uuk^{(l)} \leq \mu_i \leq x_\uuk^{(r)} }} \mu_i \right] \nonumber \\
	  & + \frac{N}{n_k}\left[ \sum_{\substack{1\leq i\leq N \\ x_\uuk^{(l)} \leq \eta_i \leq x_\uuk^{(r)}}} \sigma^2 - \sum_{\substack{1\leq i\leq N \\ x_\uuk^{(l)} \leq \lambda_i \leq x_\uuk^{(r)}}} \sigma^2 \right] \nonumber \\
	  & + \frac{N}{n_k}\left[ \sum_{\substack{1\leq i\leq N \\ x_\uuk^{(l)} \leq \mu_i\leq x_\uuk^{(r)}}} \sigma^2 - \sum_{\substack{1\leq i\leq N \\ x_\uuk^{(l)} \leq \lambda_i  \leq x_\uuk^{(r)}}} \sigma^2 \right] .
	\end{align}

	Details are given in Appendix \ref{app:residue}. Now, we know from Theorem \ref{th:1} that $\hat{m}_F(z)\asto m_F(z)$ and $\hum(z)\asto \um(z)$ as $N\to\infty$. Observing that the integrand in \eqref{eq:hPkint_m} is uniformly bounded on the compact $\uuContour$, the dominated convergence theorem \cite{BIL08} ensures $\hat{P}_k\asto P_k$.

	To go further, we now need to determine which of $\lambda_1,\ldots,\lambda_N$, $\eta_1,\ldots,\eta_N$ and $\mu_1,\ldots,\mu_N$ lie inside $\mathcal C_{F,k}$. This requires a result of eigenvalue {\it exact separation} that extends the earlier results of \cite{SIL98,BAI99}, as follows
\begin{theorem}
	\label{th:bai}
	Let $\B_n=(1/n)\T_n^\oh\X_n\X_n^\herm\T_n^\oh\in\CC^{p\times p}$, where we assume the following conditions
	\begin{enumerate}
	  \item $\X_n\in\CC^{p\times n}$ has entries $x_{ij}$, $1\leq i\leq p$, $1\leq j\leq n$, extracted from a doubly infinite array $\{x_{ij}\}$ of independent variables, with zero mean and unit variance. 
	  \item \label{item:thBai2} There exist $K$ and a random variable $X$ with finite fourth order moment such that, for any $x>0$, 
	    \begin{equation}
	      \frac1{n_1n_2}\sum_{i\leq n_1,j\leq n_2} P( |x_{ij}|>x)\le K P(|X|>x)
	    \end{equation}
for any $n_1,n_2$.
	  \item \label{item:thBai3} There is a positive function $\psi(x)\uparrow\infty$ as $x\to\infty$, and $M>0$, such that 
	    \begin{equation}
	      \max_{ij}\EE|x_{ij}^2|\psi(|x_{ij}|)\le M.
	    \end{equation}
	  \item $p=p(n)$ with $c_n = p/n \to c > 0$ as $n\to \infty$.
	  \item \label{item:thBai4} For each $n$, $\T_n\in \CC^{p\times p}$ is Hermitian nonnegative definite, independent of $\{x_{ij}\}$, satisfying $H_n \triangleq F^{\T_n} \Rightarrow H$, $H$ a nonrandom probability distribution function, almost surely. $\T_n^\oh$ is any Hermitian square root of $\T_n$.
	  \item \label{item:thBai5} The spectral norm $\Vert \T_n\Vert$ of $\T_n$ is uniformly bounded in $n$ almost surely.
	  \item Let $a,b>0$, nonrandom, be such that, with probability one, $[a,b]$ lies in an open interval outside the support of $F^{c_n,H_n}$ for all large $n$, with $F^{y,G}$ defined to be the almost sure l.s.d. of $(1/n)\X_n^\herm\T_n\X_n$ when $H=G$ and $c=y$.
	\end{enumerate}
	
	Denote $\lambda_1^\Y\geq \ldots\geq \lambda_p^\Y$ the ordered eigenvalues of the Hermitian matrix $\Y\in\CC^{p\times p}$. Then, we have that
	\begin{enumerate}
	  \item $P(\textmd{no eigenvalue of }\B_n\textmd{ lies in }[a,b]\textmd{ for all large }n)=1$.
	  \item If $c(1-H(0))>1$, then $x_0$, the smallest value in the support of $F^{c,H}$, is positive, and with probability one, $\lambda_n^{\B_n}\to x_0$ as $n\to\infty$.
	  \item If $c(1-H(0))\leq1$, or $c(1-H(0))>1$ but $[a,b]$ is not contained in $[0,x_0]$, then $m_{F^{c,H}}(a)<m_{F^{c,H}}(b)<0$. With probability one, there exists, for all $n$ large, an index $i_n\geq 0$ such that $\lambda_{i_n}^{\T_n}>-1/m_{F^{c,H}}(b)$ and $\lambda_{i_n+1}^{\T_n}>-1/m_{F^{c,H}}(a)$ and we have
	    \begin{equation}
	      \label{eq:separation}
	      P(\lambda_{i_n}^{\B_n}>b \textmd{ and } \lambda_{i_n+1}^{\B_n} < a \textmd{ for all large }n)=1.
	    \end{equation}
	  \end{enumerate}
\end{theorem}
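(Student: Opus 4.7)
\textbf{Proof plan for Theorem \ref{th:bai}.}
The plan is to reduce the statement, via truncation and centering, to the setting of \cite{SIL98} and \cite{BAI99}, and then to invoke their results essentially verbatim. The only new content relative to \cite{BAI99} is the allowance for non-identically distributed entries $x_{ij}$, which enters only through Items \ref{item:thBai2}--\ref{item:thBai3}, and through the non-deterministic $\T_n$ of Item \ref{item:thBai4}.

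First I would run the standard truncation argument. Fix $\delta_n\downarrow 0$ slowly and set $\tilde{x}_{ij}=x_{ij}\mathbf{1}_{|x_{ij}|\le \delta_n n^{1/2}}$ minus its mean, rescaled to unit variance. The tail domination by $X$ (Item \ref{item:thBai2}) together with $\EE[X^4]<\infty$ implies $\sum_{i,j}\EE[|x_{ij}-\tilde{x}_{ij}|^2]=o(n)$, so that the spectral norm of the difference between $\B_n$ and its truncated, centered, rescaled counterpart tends almost surely to zero; the auxiliary moment function $\psi$ of Item \ref{item:thBai3} handles the Lindeberg-type condition needed to preserve $F^{c_n,H_n}\Rightarrow F^{c,H}$. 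After this reduction we may assume $|x_{ij}|\le \delta_n n^{1/2}$, and the uniform almost sure bound $\Vert \T_n\Vert\le \tau$ (Item \ref{item:thBai5}) permits us to work on a probability-one event on which everything is deterministic.

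Second, for conclusion 1, I would follow the Stieltjes transform argument of \cite{SIL98}: on a rectangular contour $\Contour$ enclosing $[a,b]$ and bounded away from $\mathrm{Supp}(F^{c,H})$, one proves the almost sure uniform convergence $\hm_n(z)\to m_{F^{c,H}}(z)$ together with a high-probability bound on $\EE|\hm_n(z)-\EE\hm_n(z)|^{2\ell}$ of order $n^{-2\ell}$ for large $\ell$, obtained from the martingale decomposition along the columns of $\X_n$ combined with Burkholder's inequality; this forces the count $\#\{\lambda_i^{\B_n}\in[a,b]\}=\frac{n}{2\pi i}\oint_\Contour \hm_n(z)dz$, an integer, to equal zero for all large $n$. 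Conclusion 2 then follows from conclusion 1 applied to intervals $[a,x_0-\varepsilon]$, once the positivity of $x_0$ under $c(1-H(0))>1$ is read off \eqref{eq:xA}.

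Third, and this is the main obstacle, conclusion 3 requires the eigenvalue-matching argument of \cite{BAI99}. The idea is to interpolate: consider the family $\T_n(t)=(1-t)\T_n+t\tilde{\T}_n$ for a well-chosen reference $\tilde{\T}_n$ whose spectral separation structure mirrors that of $\T_n$, and track the eigenvalues of the associated $\B_n(t)$ as $t$ varies in $[0,1]$. Conclusion 1, applied uniformly along this path, forbids any eigenvalue from crossing $[a,b]$; hence the count above $b$ is constant in $t$, reducing the problem to $t=1$, where the correspondence is transparent. The monotonicity and increasing nature of $x_A$ on the complement of $\mathrm{Supp}(F^{c,H})$, established in Proposition \ref{prop:choi}, is what pins down the index $i_n$ via $\lambda_{i_n}^{\T_n}>-1/m_{F^{c,H}}(b)\ge \lambda_{i_n+1}^{\T_n}$. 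The subtle part is verifying that the interpolation and the Stieltjes-transform control remain valid when $\T_n$ is only almost surely convergent rather than deterministic and when the entries of $\X_n$ are merely independent rather than i.i.d.; this is handled by conditioning on $\T_n$ on the probability-one event of Item \ref{item:thBai4} and by the Lindeberg-type control described above.
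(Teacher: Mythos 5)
Your high-level architecture (truncation/centralization/rescaling, then the Stieltjes-transform contour argument of \cite{SIL98} for conclusion 1 and the eigenvalue-matching argument of \cite{BAI99} for conclusion 3, with the random $\T_n$ handled by a Tonelli/conditioning step on the product space) matches the paper's. But there is a genuine gap in the central claim of your plan: that truncation and centering ``reduce the statement to the setting of \cite{SIL98} and \cite{BAI99}'' so that their results can be invoked essentially verbatim. Those papers assume the $x_{ij}$ are i.i.d., and no amount of truncation, centering or rescaling turns independent non-identically distributed entries into identically distributed ones. The actual content of the generalization --- and the bulk of the paper's appendix --- is to re-open the chain of underlying proofs and patch every step that exploits identical distribution. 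Concretely: (i) the bound on $\lambda_{\max}(\frac1n\X_n\X_n^\herm)$ from \cite{YIN88} and the bound on $\lambda_{\min}$ from \cite{BAI93} must themselves be re-proved under Items \ref{item:thBai2}--\ref{item:thBai3} (the paper lists six modified conditions on the truncated entries and re-derives $\EE\tr(\S_n)^k\le\eta^k$, hence $P(\lambda_{\max}>b+\varepsilon)=o(n^{-t})$); (ii) in Chapter 6 of \cite{SIL06} the quantity $b_n=(1+n^{-1}\EE\tr(\T_n\D_1^{-1}))^{-1}$ and the identity $\EE\beta_1=-z\EE\underline{s}_n$ rely on exchangeability of the columns of $\X_n$; without identical distribution one must introduce column-dependent quantities $b_{nj}$, prove $\max_j|b_{nj}-b_n|=O(n^{-1}v^{-3})$, and rerun the martingale/variance estimates with averaged rather than single-index objects, e.g.\ replacing (6.2.42) by a sum over $k$ of the $\EE\beta_k$ terms. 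None of this is ``verbatim,'' and your plan neither identifies where identical distribution enters nor how to remove it --- which is precisely the theorem's novelty over \cite{BAI99}.

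A secondary, smaller point: your truncation estimate $\sum_{i,j}\EE|x_{ij}-\tilde x_{ij}|^2=o(n)$ has the wrong normalization (the sum runs over $pn\sim cn^{2}$ terms). What is actually needed, and what Item \ref{item:thBai2} delivers through the dominating variable $X$ with finite fourth moment, is a Borel--Cantelli bound showing $P(\X_n\neq\Y_n \text{ i.o.})=0$ together with $\bigl(\frac1n\sum_{ij}|\EE\, x_{ij}I(|x_{ij}|>\eta_n\sqrt n)|^2\bigr)^{1/2}\to0$ for the centralization step, and the function $\psi$ of Item \ref{item:thBai3} is what controls the rescaling factors $\max_{ij}|1-\sigma_{ij}^2|$, not a Lindeberg condition for the convergence of $F^{c_n,H_n}$.
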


Theorem \ref{th:bai} is proven in Appendix \ref{app:bai}.

To apply Theorem \ref{th:bai} to $\uB_N$ in our scenario, we need to ensure all assumptions are met. Only Items \ref{item:thBai2}-\ref{item:thBai5} need particular attention. In our scenario, the matrix $\X_n$ of Theorem \ref{th:bai} is $\left(\begin{smallmatrix} \X \\ \W \end{smallmatrix}\right)$, while $\T_n$ is $\T \triangleq \left(\begin{smallmatrix} \H\P\H^\herm + \sigma^2\I_N & 0 \\ 0 & 0 \end{smallmatrix}\right)$. The latter has been proven to have almost sure l.s.d. $H$, so that Item \ref{item:thBai4} is verified. Also, from the result of \cite{SIL98} upon which Theorem \ref{th:bai} is based, there exists a subset of probability one in the probability space that engenders the $\T$ over which, for $n$ large enough, $\T$ has no eigenvalues in any closed set strictly outside the support of $H$; this ensures Item \ref{item:thBai5}. Now, from construction, $\X$ and $\W$ have independent entries of zero mean, unit variance, fourth order moment and are composed of at most $K+1$ distinct distributions, irrespectively of $M$. Denote $X_1,\ldots,X_d$, $d\leq K+1$, $d$ random variables distributed as those distinct distributions. Letting $X=|X_1|+\ldots+|X_d|$, we have that
  \begin{align}
   \frac1{n_1n_2}\sum_{i\leq n_1,j\leq n_2} P( |z_{ij}|>x)
   &\leq P\left(\sum_{i=1}^d|X_i|>x\right) \\ 
   &= P(|X|>x),
  \end{align}
  where $z_{ij}$ is the $(i,j)^{th}$ entry of $\left(\begin{smallmatrix} \X \\ \W \end{smallmatrix}\right)$. Since all $X_i$ have finite order four moments, so does $X$ and Item \ref{item:thBai2} is verified. From the same argument, Item \ref{item:thBai3} follows with $\phi(x)=x^2$. Theorem \ref{th:bai} can then be applied to $\uB_N$.

    The corollary of Theorem \ref{th:bai} applied to $\uB_N$ is that, with probability one, for $N$ sufficiently large, there will be no eigenvalue of $\B_N$ (or $\uB_N$) outside the support of $F$, and the number of eigenvalues inside cluster $k_F$ is exactly $n_k$. Since $\uuContour$ encloses cluster $\uuk$  and is away from the other clusters, $\{\lambda_1,\ldots,\lambda_N\}\cap [x_\uuk^{(l)},x_\uuk^{(r)}]=\{\lambda_i,i\in \mathcal N_k\}$ almost surely, for all large $N$. Also, for any $i\in \{1,\ldots,N\}$, it is easy to see from \eqref{eq:hatm} that $\hat{m}_F(z)\to \infty$ when $z\uparrow \lambda_i$ and $\hat{m}_F(z) \to -\infty$ when $z\downarrow \lambda_i$. Therefore $\hat{m}_F(z)=0$ has at least one solution in each interval $(\lambda_{i-1},\lambda_i)$, with $\lambda_0=0$, hence $\mu_1<\lambda_1<\mu_2<\ldots<\mu_N<\lambda_N$. This implies that, if $k_0$ is the index such that $\uuContour$ contains exactly $\lambda_{k_0},\ldots,\lambda_{k_0+(n_k-1)}$, then $\uuContour$ also contains $\{\mu_{k_0+1},\ldots,\mu_{k_0+(n_k-1)}\}$. The same result holds for $\eta_{k_0+1},\ldots,\eta_{k_0+(n_k-1)}$. When the indexes exist, due to cluster separability, $\eta_{k_0-1}$ and $\mu_{k_0-1}$ belong, for $N$ large, to cluster $\uuk-1$. We are then left with determining whether $\mu_{k_0}$ and $\eta_{k_0}$ are asymptotically found inside $\uuContour$.  

For this, we use the same approach as in \cite{MES08}, by noticing that, since $0$ is not included in $\Contour$, one has
	\begin{equation}
	  \frac1{2\pi i}\oint_{\Contour}\frac1\omega d\omega = 0.
	\end{equation}

	Performing the same changes of variables as above, we have that
\begin{align}
  \label{eq:zero}
&\oint_{\uuContour} \frac{-\um(z)m_F(z)-z\um'(z)m_F(z)-z\um(z)m_F'(z)}{z^2\um(z)^2m_F(z)^2}dz \nonumber \\
&= 0.
\end{align}

For $N$ large, the dominated convergence theorem ensures again that the left-hand side of the \eqref{eq:zero} is close to
\begin{equation}
  \label{eq:hzero}
  \oint_{\uuContour} \frac{-\hum(z)\hm(z)-z\hum'(z)\hm(z)-z\hum(z)\hm'(z)}{z^2\hum(z)^2\hm(z)^2}dz.
\end{equation}

Residue calculus of \eqref{eq:hzero} then leads to
\begin{equation}
  \label{eq:h2zero}
  \left[\sum_{\substack{1\leq i\leq N \\ \lambda_i\in [x_\uuk^{(l)},x_\uuk^{(r)}]}} 2 - \sum_{\substack{1\leq i\leq N \\ \eta_i\in [x_\uuk^{(l)},x_\uuk^{(r)}]}} 1 - \sum_{\substack{1\leq i\leq N \\ \mu_i\in [x_\uuk^{(l)},x_\uuk^{(r)}]}} 1\right] \asto 0.
\end{equation}

Since the cardinalities of $\{i, \eta_i\in [x_\uuk^{(l)},x_\uuk^{(r)}]\}$ and $\{i,\mu_i\in [x_\uuk^{(l)},x_\uuk^{(r)}]\}$ are at most $n_k$, \eqref{eq:h2zero} is satisfied only if both cardinalities equal $n_k$ in the limit. As a consequence, $\mu_{k_0}\in [x_\uuk^{(l)},x_\uuk^{(r)}]$ and $\eta_{k_0}\in [x_\uuk^{(l)},x_\uuk^{(r)}]$. For $N$ large, $N\neq M$, this allows us to simplify \eqref{eq:residue_calc} into 
	\begin{align}
	  \label{eq:residue_calc2}
		\hat{P}_k &= \frac{NM}{n_k(M-N)} \sum_{\substack{1\leq i\leq N \\ \lambda_i\in \mathcal N_k}} (\eta_i-\mu_i) 
	\end{align}
	 with probability one. The same reasoning holds for $M=N$. This is our final relation.

	It now remains to show that the $\eta_i$ and the $\mu_i$ are the eigenvalues of $\diag(\blambda)-\frac1N\sqrt{\blambda}\sqrt{\blambda}^\trans$ and $\diag(\blambda)-\frac1M\sqrt{\blambda}\sqrt{\blambda}^\trans$ respectively. For this, we need the following lemma,
	\begin{lemma}
		\label{le:1}
		Let $\A\in\RR^{N\times N}$ be diagonal with entries $\lambda_1,\ldots,\lambda_N$, and let $\y\in\RR^N$. Then the eigenvalues of $\A-\y\y^\herm$ are the $N$ real solutions of the following equation in $x$,
		\begin{equation}
			\sum_{i=1}^N \frac{y_i^2}{\lambda_i-x} = 1.
		\end{equation}
	\end{lemma}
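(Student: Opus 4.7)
The plan is a direct characteristic polynomial computation using the matrix determinant lemma (a rank-one Sylvester identity). First I would write the characteristic polynomial of $\A-\y\y^\herm$ as
\begin{equation}
\det(\A-\y\y^\herm - x\I_N) = \det\bigl((\A-x\I_N) - \y\y^\herm\bigr),
\end{equation}
and, assuming for the moment that $x \notin \{\lambda_1,\ldots,\lambda_N\}$ so that $\A-x\I_N$ is invertible, factor out $\A-x\I_N$ and apply the rank-one determinant identity
\begin{equation}
\det(B - \y\y^\herm) = \det(B)\bigl(1 - \y^\herm B^{-1}\y\bigr),
\end{equation}
with $B = \A - x\I_N$. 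Since $B$ is diagonal with diagonal entries $\lambda_i - x$, this yields
\begin{equation}
\det(\A - \y\y^\herm - x\I_N) = \left(\prod_{i=1}^N(\lambda_i - x)\right)\left(1 + \sum_{i=1}^N \frac{y_i^2}{\lambda_i - x}\cdot(-1)\cdot(-1)\right),
\end{equation}
which, after simplification, is $\prod_{i}(\lambda_i-x)\bigl[1 - \sum_{i} y_i^2/(\lambda_i-x)\bigr]$. Setting this to zero and excluding the forbidden poles $x = \lambda_i$ gives precisely the secular equation $\sum_i y_i^2/(\lambda_i - x) = 1$.

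Next I would address the degenerate cases. If some $y_{i_0} = 0$, then the row and column indexed by $i_0$ of $\y\y^\herm$ vanish, so $\lambda_{i_0}$ is trivially an eigenvalue of $\A - \y\y^\herm$; correspondingly the $i_0$-th term drops out of the sum and the reduced secular equation, in the remaining indices $i \neq i_0$, still characterizes the other eigenvalues. Similarly, if several $\lambda_i$ coincide, say $\lambda_{i_1} = \cdots = \lambda_{i_r}$, one can perform an orthogonal change of basis inside the corresponding eigenspace of $\A$ so that only one of the associated entries of $\y$ is nonzero; this preserves $\A$ and replaces $\y\y^\herm$ by a congruent rank-one matrix, and the remaining $r-1$ copies of $\lambda_{i_j}$ become trivial eigenvalues of $\A - \y\y^\herm$. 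After these reductions, one is in the generic situation (distinct $\lambda_i$ and nonzero $y_i$) handled above.

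Finally I would verify the counting: the secular equation $g(x) \triangleq \sum_i y_i^2/(\lambda_i - x) = 1$ has a strictly increasing branch between each pair of consecutive poles $\lambda_i < \lambda_{i+1}$ (since $g'(x) = \sum_i y_i^2/(\lambda_i - x)^2 > 0$ on each such interval) and branches going from $-\infty$ to $+\infty$ on the two outermost intervals, accounting for exactly $N$ real solutions, matching the $N$ real eigenvalues of the symmetric matrix $\A - \y\y^\herm$. The main obstacle is purely bookkeeping: carefully treating coincident $\lambda_i$ and vanishing $y_i$ so that the $N$ solutions of the secular equation coincide with the $N$ eigenvalues counted with multiplicity; the algebraic identity itself is a one-line consequence of the rank-one determinant lemma.
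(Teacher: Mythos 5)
Your proof is correct but takes a genuinely different route from the paper's. The paper argues directly from the eigenvector equation: if $(\A-\y\y^\herm)\x=\lambda\x$, then $(\A-\lambda\I_N)\x=(\y^\herm\x)\y$, and multiplying through by $\y^\herm(\A-\lambda\I_N)^{-1}$ yields $1=\y^\herm(\A-\lambda\I_N)^{-1}\y=\sum_i y_i^2/(\lambda_i-\lambda)$; it stops there, tacitly assuming $\lambda\notin\{\lambda_i\}$ and $\y^\herm\x\neq 0$, and does not address multiplicities or the converse direction. Your characteristic-polynomial argument via the rank-one determinant identity $\det(B-\y\y^\herm)=\det(B)\bigl(1-\y^\herm B^{-1}\y\bigr)$ reaches the same secular function but buys more: it gives the eigenvalues with multiplicity as the full root set of $\prod_i(\lambda_i-x)\bigl[1-\sum_i y_i^2/(\lambda_i-x)\bigr]$, and your explicit treatment of vanishing $y_i$, coincident $\lambda_i$, and the root count fills exactly the gaps the paper leaves implicit. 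Two small slips to fix: in your intermediate display the factor $(-1)\cdot(-1)$ makes the expression read $1+\sum_i y_i^2/(\lambda_i-x)$, contradicting the (correct) simplified form $1-\sum_i y_i^2/(\lambda_i-x)$ that follows; and in the counting step the two outermost branches do not both run from $-\infty$ to $+\infty$ --- since this is a rank-one \emph{downdate}, $g(x)=\sum_i y_i^2/(\lambda_i-x)$ increases from $0^+$ to $+\infty$ on $(-\infty,\lambda_1)$ (contributing one root of $g=1$) and from $-\infty$ to $0^-$ on $(\lambda_N,\infty)$ (contributing none), which together with the $N-1$ interior intervals still gives exactly $N$ real roots.
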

	\begin{proof}
Let $\lambda$ be an eigenvalue of $\A-\y\y^\herm$. For a certain non-zero vector $\x\in \CC^N$, we then have the equivalent relations
\begin{align}
(\A-\y\y^\herm)\x &=\lambda \x, \\
(\A-\lambda \I_N)\x &= \y^\herm \x \y,\\
\x &= \y^\herm \x(\A-\lambda \I_N)^{-1}\y, \\
\y^\herm \x &=\y^\herm \x \y^\herm(\A-\lambda \I_N)^{-1}\y, \\
1 &= \y^\herm(\A-\lambda \I_N)^{-1}\y.
\end{align}

Since $\A$ is diagonal, denoting $\e_i\in \CC^N$ the vector such that $e_{i,j}=\delta_i^j$, we finally have
\begin{equation}
	\sum_{i=1}^N \frac{(\y^\herm \e_i)^2}{\lambda_i - \lambda} = 1.
\end{equation}
	\end{proof}

	Applying Lemma \ref{le:1} to $\A=\diag{\blambda}$ and $\y=\sqrt{\frac1N\blambda}$, we find that the eigenvalues of $\diag(\blambda)-\frac1N\sqrt{\blambda}\sqrt{\blambda}^\trans$ are the solutions of  
		\begin{equation}
			\sum_{i=1}^N \frac{\frac1N\lambda_i}{\lambda_i-x} = 1,
		\end{equation}
which is equivalent to 
		\begin{equation}
			\frac1N\sum_{i=1}^N \frac1{\lambda_i-x} = 0,
		\end{equation}
		whose solutions are by definition $\eta_1,\ldots,\eta_N$. The same argument applies similarly to $\mu_1,\ldots,\mu_N$. Incidentally, this remark was already noticed in \cite{GRE09}.

We end this section by a short discussion on the consequences of Theorem \ref{th:2}.

\subsection{Discussion}
\label{sec:discussion}
Theorem \ref{th:2} states that, under spectrum separability condition for all $P_k$, $k\in\{1,\ldots,K\}$, when $n_1,\ldots,n_K$ are known {\it a priori} to the receiver, then $\hat{P}_1,\ldots,\hat{P}_K$ are consistent estimators for $P_1,\ldots,P_K$. Now, in practice, it is rare that $n_1,\ldots,n_K$ and even $K$ are {\it a priori} known to the receiver. However, if separability is assumed, then one can estimate simultaneously $K,n_1,\ldots,n_K$ and $P_1,\ldots,P_K$. This is performed by (i) determining the clusters of the empirical eigenvalues of $\B_N$, which determines $K$, (ii) counting the number of eigenvalues in each cluster to determine the multiplicities $n_1,\ldots, n_K$ and (iii) evaluating $\hat{P}_1,\ldots,\hat{P}_K$ from Theorem \ref{th:2}.

However, step (i) may not be obvious. In particular, when the total number $n$ of transmit antennas is small, when the typical cluster size is large or when the inter-cluster spacing is small, it is non-trivial to determine what eigenvalues form a cluster. To solve this critical issue, studies are being currently carried out that aim to determine second order statistics of $F^{\B_N}$. Thanks to second order statistics on $F^{\B_N}$, it will be possible to design estimators of $P_1,\ldots,P_K$ that take into account the probability of $\B_N$ being an appropriate model for the estimated $\hat{P}_1,\ldots,\hat{P}_{\hat{K}}$ for every hypothesis $\hat{K}$ for the number of transmit source and every hypothesis $(\hat{n}_1,\ldots,\hat{n}_{\hat{K}})$ for the number of antennas for each of these sources. We hereafter provide an alternative {\it ad-hoc} technique to partially solve the problem of determining $K$ and $n_1,\ldots,n_K$ based on Theorems \ref{th:1} and \ref{th:2}.

In the following, we assume for readability that we know the number $K$ of transmit sources (taken large enough to cover all possible hypotheses), some having possibly $0$ transmit antennas. The approach consists in the following steps:
\begin{enumerate}
  \item we first identify a set of plausible hypotheses for $n_1,\ldots,n_K$. This can be performed by inferring clusters based on the spacing between consecutive eigenvalues: if the distance between neighboring eigenvalues is more than a threshold, then we add an entry for a possible cluster separation in the list of all possible positions of cluster separation. From this list, we create all possible $K$-dimensional vectors of eigenvalue clusters. Obviously, the choice of the threshold is critical to reduce the number of hypotheses to be tested;
  \item for each $K$-dimensional vector with number of antennas $\hat{n}_1,\ldots,\hat{n}_K$, we use Theorem \ref{th:2} in order to obtain estimates of the $\hat{P}_1,\ldots,\hat{P}_K$ (some being possibly null);
  \item based on these estimates, we compare the e.s.d. $F^{\B_N}$ of $\B_N$ to the distribution function $\hat{F}$ defined as the l.s.d. of the matrix model $\hat{\Y}=\H\hat{\P}\X+\W$ with $\hat{\P}$ the diagonal matrix composed of $\hat{n}_1$ entries equal to $\hat{P}_1$, $\hat{n}_2$ entries equal to $\hat{P}_2$ etc. up to $\hat{n}_K$ entries equal to $\hat{P}_K$. The comparison can be performed based on different metrics. In the simulations carried hereafter, we consider as a metric the mean absolute difference between the Stieltjes transform of $F^{\B_N}$ and of $\hat{F}$ on the segment $[-1,-0.1]$.
\end{enumerate}

Note that the above process can bring an interesting feature linked to the cluster separability problem discussed along this article. Indeed, if two subsequent powers $P_i$ and $P_{i+1}$ are close to one another, then the separability condition of Assumptions \ref{ass:1} and \ref{ass:2} is not verified. If one knows $n_i$ and $n_{i+1}$ and blindly uses the estimator of Theorem \ref{th:2}, the result can be catastrophic as the estimator is unreliable. On the contrary, if $n_i$ and $n_{i+1}$ are unknown and one uses the above process, it is very likely that the distinct sources with close power will be assumed to be a single source with power equal to the estimate of $(P_i+P_{i+1})/2$ and embedded with $n_i+n_{i+1}$ antennas. For practical blind detection purposes in cognitive radios, this leads the secondary network to infer a number of transmit entities that is less than the effective number of transmitters. In general, this would not have serious consequences on the decisions made by the secondary network but this might at least reduce the capabilities of the secondary network to optimally overlay the licensed spectrum. Further work is also being carried out to go past the cluster separability assumption; specifically, methods for estimating the number of $P_i$ associated to any cluster $\uuj$ are under study.

\section{Simulations}
\label{sec:simus}

In this section, we provide simulation results to assess the performance of Theorem \ref{th:2} when $K$, and $n_1,\ldots,n_K$ are known, to compare this performance against alternative estimation methods and finally to evaluate the performance of the {\it ad-hoc} approach discussed in Section \ref{sec:discussion}. In order to underline some precise features of the advantages of our novel method, we will use two simulation models. The first model, already presented in Figure \ref{fig:spectrum}, involves a scenario with clear separation between clusters, while the second model will consider the case of co-located clusters.

The estimator of Theorem \ref{th:2} will be compared against two methods, which we describe below. 

\subsection{Alternative methods}

\subsubsection{Strongly consistent estimator for $M\gg N$ and $N\gg n$}
The first method is the classical estimator that assumes that the sample dimension $M$ is much larger than the sensor dimension $N$, while $N$ is much larger than the source dimension $n$. In this case, it is easy to see that the e.s.d. of $\B_N$ tends to a mass in $\sigma^2$. However, the first $n$ eigenvalues of $\B_N$ are asymptotically greater than $\sigma^2$ and it is also clear that the e.s.d. of the projection of $\B_N$ on the eigenspace associated to its largest $n$ eigenvalues tends to $K$ masses in $P_1+\sigma^2,\ldots,P_K+\sigma^2$. This leads to the strongly consistent estimator $\hat{P}_k^\infty$ of $P_k$ given by
\begin{equation}
  \label{eq:classicalest}
  \hat{P}_k^\infty = \frac1{n_k}\sum_{i\in\mathcal N_k}(\lambda_i-\hat{\sigma}^2),
\end{equation}
with
\begin{equation*}
  \hat{\sigma}^2 = \frac1{N-n}\sum_{i=1}^{N-n} \lambda_i
\end{equation*}
and we recall that $\lambda_1\leq \ldots\leq \lambda_N$ are the eigenvalues of $\B_N$. The strong consistence is with respect to the rates $n\to \infty$, $N/n\to \infty$ and $M/N\to\infty$. Note that we take an estimator for $\sigma^2$ instead of $\sigma^2$ itself in order to be coherent with Theorem \ref{th:2} which does not require any {\it a priori} information on $\sigma^2$. We will refer to this estimator as the {\it classical} method.

\subsubsection{Estimator based on strongly consistent moment estimates}
The second method is a technique issued from free probability theory, which is based on moments of the l.s.d. of $\B_N$. As such, we will refer to this method as the {\it moment} method. It consists in computing the first moments of the e.s.d. of $\B_N$, i.e., $\frac1N\tr \left(\frac1M\Y\Y^\herm\right)^k$, for $k=1,\ldots,K$, from which the {\it deconvolved} moments $\frac1n(n_1P_1^k+\ldots+n_KP_K^k)$ of $F^{\P}$ can be evaluated, see e.g., \cite{RYA07}. These estimated moments can be expressed as polynomials of the moments of $F^{\B_N}$, which is convenient from a practical point of view although it leads to serious shortcomings in terms of estimator accuracy. Indeed, small deviations in the low order moments of $F^{\B_N}$ around the corresponding moments of $F$ lead to large deviations in the estimation of the high order moments of $F^{\P}$.

One can then retrieve the vector $(\hat{P}_1^{\rm (mom)},\ldots,\hat{P}_K^{\rm (mom)})$ whose distribution function has for first $K$ moments the first $K$ estimated moments of $F^{\P}$. This is performed using Newton-Girard polynomial formulas \cite{SER00}, which boils down to finding the roots of a polynomial of order $K$. The value $\hat{P}_k^{\rm (mom)}$ is the estimate of $P_k$. Computing $\hat{P}_k^{\rm (mom)}$ requires in particular that $K,n_1,\ldots,n_K$ and $\sigma^2$ are known. The main shortcoming of the Newton-Girard inversion is that the polynomial to be solved may have purely imaginary roots. This issue, added to the deviations in the estimated moments, contribute to rather poor estimation accuracies unless the system dimensions are very large. However, as opposed to the classical method and the novel Stieltjes transform approach, the moment method does not require any assumption of cluster separability to be valid.
 

\subsection{Results}

\subsubsection{Cluster separability limit}

We start with a demonstration of the performance of the novel estimator with respect to the satisfaction of the cluster separability assumption. We consider the model presented in Figure \ref{fig:spectrum}, i.e., $K=3$, $P_1=1$, $P_2=3$, $P_3=10$, $n_1/n=n_2/n=n_3/n=1/3$ and $n/N=N/M=1/10$. The SNR, defined as ${\rm SNR}=1/\sigma^2$, ranges from $-15$ dB to $20$ dB. The entries of $\X$ are QPSK-modulated and those of $\H$ and $\W$ are Gaussian distributed. In Figure \ref{fig:MSE_individual}, we present simulation results in terms of normalized mean square error (NMSE) in the estimates of the individual $P_k$, both for $n=60$ and $n=6$. For future need, we define this system model with $n=6$ as Scenario (a). The NMSE for power $P_k$ is given by
\begin{equation}
	{\rm NMSE} = \EE \left[ \frac{(P_k-\hat{P}_k)^2}{P_k^2} \right],
\end{equation}
where the expectation is taken over the random realizations of the matrices $\H$, $\X$ and $\W$. 

Note how steep the mean square error curves increase below a given SNR value. This intuitively corresponds to the tipping point where the cluster separability assumptions are no longer verified. Especially here, this corresponds to the point where Assumption \ref{ass:2} no longer holds. Now, remembering the results of Figure \ref{fig:ass2}, observe that the horizontal line $c=10$ crosses the respective curves of validity of Assumption \ref{ass:2} around the SNR values where Figure \ref{fig:MSE_individual} shows steep curve increase. This indicates that our novel estimator is indeed inappropriate when Assumption \ref{ass:3} is not satisfied. This also validates the accuracy of Assumption \ref{ass:2}, which we recall is only a sufficient condition for cluster separability. Note also that, as long as cluster separation is achieved, the performance of the Stieltjes transform algorithm goes quickly down to a constant level (with respect to the SNR) which is a function of the amplitude of the values of $n$, $N$ and $M$.

\begin{figure} 
\centering
\includegraphics[]{./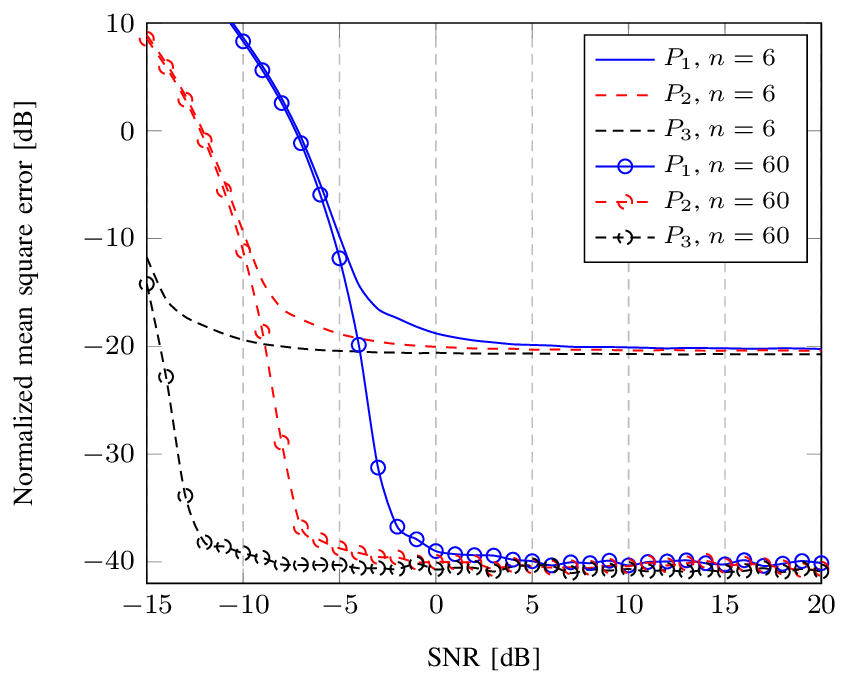}
  \caption{Normalized mean square error of individual powers $\hat{P}_1$, $\hat{P}_2$, $\hat{P}_3$, $P_1=1,P_2=3,P_3=10$, $n_1/n=n_2/n=n_3/n=1/3$ ,$n/N=N/M=1/10$, for $10,000$ simulation runs.}
  \label{fig:MSE_individual}
\end{figure}

\subsubsection{Performance comparison}

We first compare the classical method against the novel Stieltjes transform approach for Scenario (a). Under the hypotheses of this scenario, the ratios $c$ and $c_0$ equal $10$, leading therefore the classical detector to be almost asymptotically unbiased. We therefore suspect that the NMSE performance for both detectors is alike. This is described in Figure \ref{fig:eigen-inferenceMSE_individual}, which suggests as predicted that in the high SNR regime (when cluster separability is reached) the classical estimator performs similar to the Stieltjes transform method. However, it appears that a $3$ dB gain is achieved by the Stieltjes transform method around the position where cluster separability is no longer satisfied. This translates the fact that, when subsequent clusters tend to merge as $\sigma^2$ increases, the Stieltjes transform method manages to track the position of the powers $P_k$ while the classical method keeps assuming each $P_k$ is located at the center of cluster $k_F$. This observation is very similar to that made in \cite{MES08c}, where an improved MUSIC estimator is introduced that pushes further the SNR position where the performance of the classical MUSIC estimator decays significantly.

\begin{figure} 
\centering
\includegraphics[]{./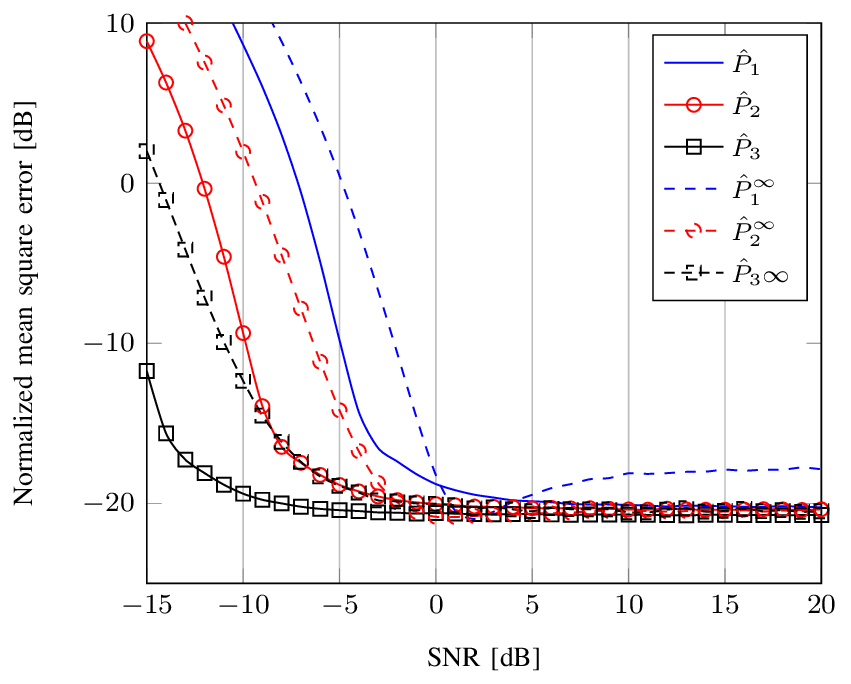}
\caption{Normalized mean square error of individual powers $\hat{P}_1$, $\hat{P}_2$, $\hat{P}_3$, $P_1=1,P_2=3,P_3=10$, $n_1/n=n_2/n=n_3/n=1/3$ ,$n/N=N/M=1/10$, $n=6$. Comparison between classical and Stieltjes transform approach.}
\label{fig:eigen-inferenceMSE_individual}
\end{figure}

We now consider another model, for which the classical estimator is largely biased. We now take $K=3$, $P_1=1/16$, $P_2=1/4$, $P_3=1$, $n_1/n=n_2/n=n_3/n=1/3$ and $n=12$, $N=24$ and $M=128$. The entries of $\X$ are still QPSK-modulated while the entries of $\H$ and $\W$ are still independent standard Gaussian. This model is further referred to as Scenario (b). We first compare the performance of the classical, Stieltjes transform and moment estimators for an SNR of $20$ dB. Figure \ref{fig:applications_estimation_power_comparisondf} depicts the distribution function of the estimated powers in logarithmic scale. The Stieltjes transform method appears here to be very precise and seemingly unbiased. On the opposite, the classical method, with a slightly smaller variance shows a large bias as was anticipated. As for the moment method, it shows rather accurate performance for the stronger estimated power, but proves very inaccurate for smaller powers. This entails from the inherent shortcomings of the moment method. The performance of the estimator $\hat{P}_k'$ will be commented in Section \ref{sec:sim_jointest}.

We then focus on the estimate for the larger power $P_3$ and take now the SNR to range from $-15$ to $30$ dB under the same conditions as previously and for the same estimators. The NMSE for the estimators of $P_3$ is depicted in Figure \ref{fig:eigen-inferenceMSE_individual2}. The curve marked with squares will be commented in Section \ref{sec:sim_jointest}. As already observed in Figure \ref{fig:applications_estimation_power_comparisondf}, in the high SNR regime, the Stieltjes transform estimator outperforms both alternative methods. We also notice the SNR gain achieved by the Stieltjes transform approach with respect to the classical method in the low SNR regime, as already observed in Figure \ref{fig:eigen-inferenceMSE_individual}. However, it now turns out that in this low SNR regime, the moment method is gaining ground and outperforms both cluster-based methods. This is due to the cluster separability condition which is not a requirement for the moment approach. This indicates that much can be gained by the Stieltjes transform method in the low SNR regime if a more precise treatment of overlapping clusters is taken into account.

\begin{figure}
\centering
\includegraphics[]{./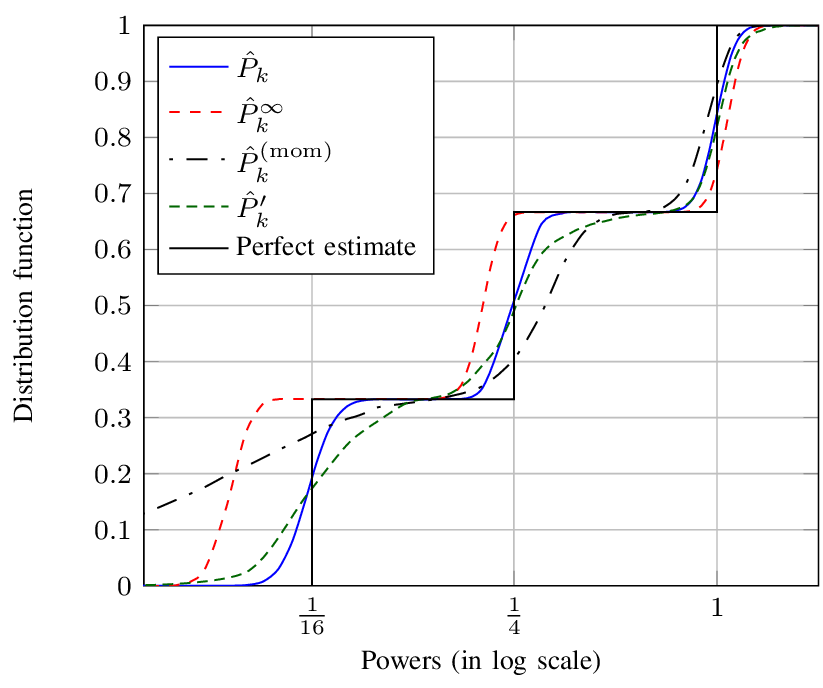}
\caption{Distribution function of the estimators $\hat{P}_k^\infty$, $\hat{P}_k$, $\hat{P}_k'$ and $\hat{P}^{\rm (mom)}_k$ for $k\in\{1,2,3\}$, $P_1=1/16$, $P_2=1/4$, $P_3=1$, $n_1=n_2=n_3=4$ antennas per user, $N=24$ sensors, $M=128$ samples and ${\rm SNR}=20$ dB. Optimum estimator shown in dashed lines.}
\label{fig:applications_estimation_power_comparisondf}
\end{figure}

\begin{figure} 
\centering
\includegraphics[]{./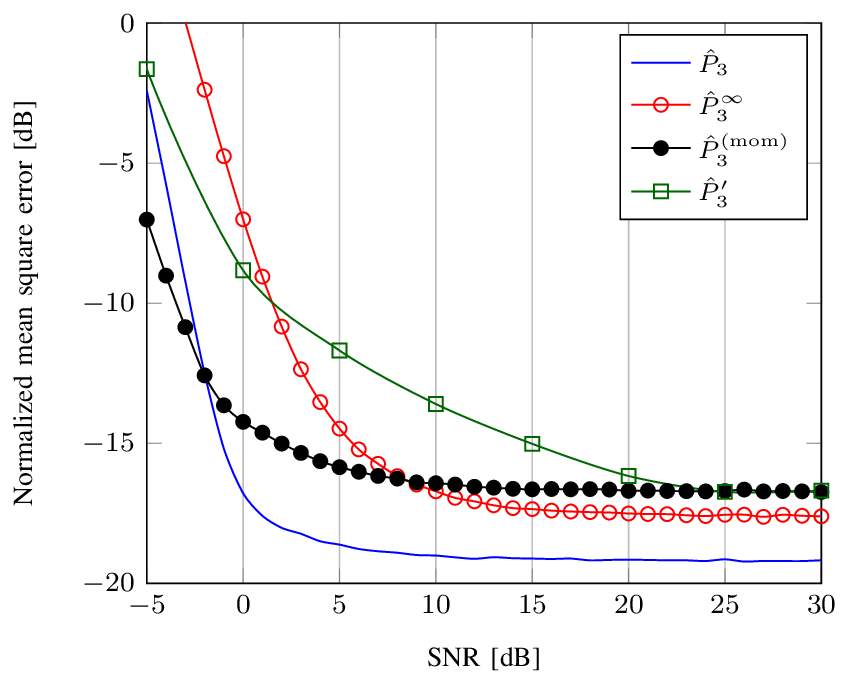}
\caption{Normalized mean square error of largest estimated power $P_3$, $P_1=1/16,P_2=1/4,P_3=1$, $n_1=n_2=n_3=4$ ,$N=24$, $M=128$. Comparison between classical, moment and Stieltjes transform approaches.}
\label{fig:eigen-inferenceMSE_individual2}
\end{figure}

\subsubsection{Joint estimation of $K$, $n_k$, $P_k$}
\label{sec:sim_jointest}
So far, we have assumed that the number of users $K$ and the number of antennas per user $n_k$ were perfectly known. As discussed in Section \ref{sec:discussion}, this may not be a strong assumption if it is known by advance how many antennas are systematically used by every source or if another mechanism, such as in \cite{HER07}, can provide this information. Nonetheless, these are in general strong assumptions to take. Based on the {\it ad-hoc} method described in Section \ref{sec:discussion}, we therefore provide the performance of our novel Stieltjes transform method in the high SNR regime when only $n$ is known; this assumption is less stringent as in the medium to high SNR regime, one can easily decide which eigenvalues of $\B_N$ belong to the cluster associated to $\sigma^2$ and which eigenvalues do not. We denote $\hat{P}_k'$ the estimator of $P_k$ when $K$ and $n_1,\ldots,n_K$ are unknown. We assume for this estimator that all possible combinations of $1$ to $3$ clusters can be generated from the $n=6$ observed eigenvalues in Scenario (a) and that all possible combinations of $1$ to $3$ clusters with even cluster size can be generated from the $n=12$ eigenvalues of $\B_N$ in Scenario (b). For Scenario (a), the NMSE performance of the estimators $\hat{P}_k$ and $\hat{P}_k'$ is proposed in Figure \ref{fig:MSE_individual_unknown_K} for the SNR ranging from $5$ dB to $30$ dB. For Scenario (b), the distribution function of the inferred $\hat{P}_k'$ is depicted in Figure \ref{fig:applications_estimation_power_comparisondf}, while the NMSE performance for the inference of $P_3$ is proposed in Figure \ref{fig:eigen-inferenceMSE_individual2}; these are both compared against the classical, moment and Stieltjes transform estimator. We also indicate in Table \ref{tab:unknown_K} the percentage of correct estimation of the triplet $(n_1,n_2,n_3)$ for both Scenario (a) and (b). In Scenario (a), this amounts to $12$ such triplets that satisfy $n_k\geq 0$, $n_1+n_2+n_3=6$, while in Scenario (b), this corresponds to $16$ triplets that satisfy $n_k\in 2\NN$, $n_1+n_2+n_3=12$. Observe that the noise variance, assumed to be known {\it a priori} in this case, plays an important role with respect to the statistical inference of the $n_k$. 
In Scenario (a), for an SNR greater than $15$ dB, the correct hypothesis for the $n_k$ is almost always taken and the performance of the estimator is similar to that of the optimal estimator. In Scenario (b), the detection of the exact cluster separation is less accurate and the performance for the inference of $P_3$ saturates at high SNR to $-16$ dB of NMSE, against $-19$ dB when the exact cluster separation is known. It therefore seems that in the high SNR regime the performance of the Stieltjes transform detector is loosely affected by the absence of knowledge about the cluster separation. This statement is also confirmed by the distribution function of $\hat{P}_k'$ in Figure \ref{fig:applications_estimation_power_comparisondf}, which still outperforms the classical and moment methods. We underline again here that this is merely the result of an {\it ad-hoc} approach; this performance could be greatly improved if e.g., more is known about the second order statistics of $F^{\B_N}$.

\begin{table}   
  \label{tab:unknown_K}
  \begin{center}
  \begin{tabular}{c|cccccc}
    SNR & RCI (a)  & RCI (b) \\  
    \hline
$5$  dB & $0.8473$ & $0.1339$ \\
$10$ dB & $0.9026$ & $0.4798$ \\
$15$ dB & $0.9872$ & $0.4819$ \\
$20$ dB & $0.9910$ & $0.5122$ \\
$25$ dB & $0.9892$ & $0.5455$ \\
$30$ dB & $0.9923$ & $0.5490$
  \end{tabular}
\end{center}
\caption{Rate of correct inference (RCI) of the triplet $(n_1,n_2,n_3)$ for scenarios (a) and (b).}
\end{table}



\begin{figure} 
\centering
\includegraphics[]{./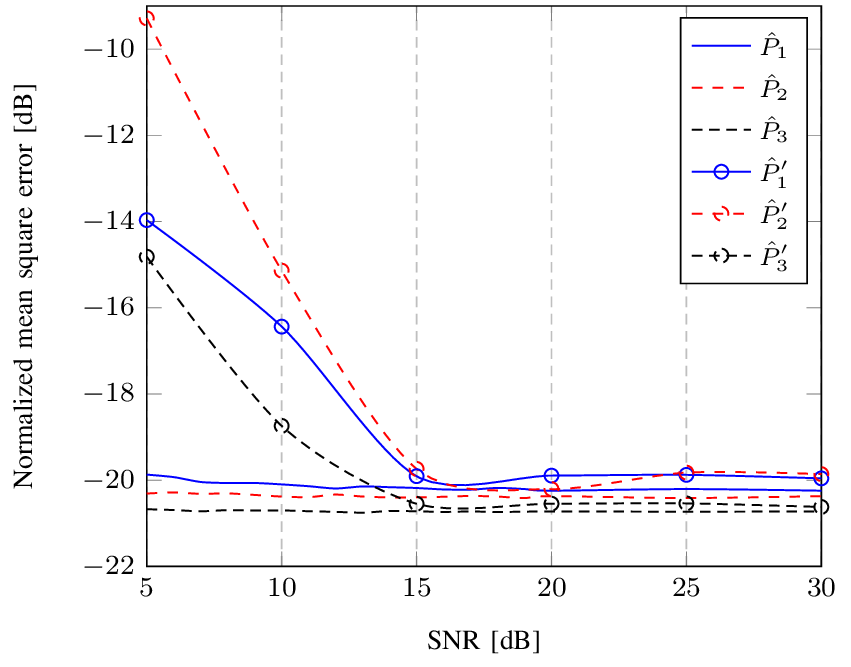}
  \caption{Normalized mean square error of individual powers $\hat{P}_1$, $\hat{P}_2$, $\hat{P}_3$ and $\hat{P}_1'$, $\hat{P}_2'$, $\hat{P}_3'$, $P_1=1,P_2=3,P_3=10$, $n_1/n=n_2/n=n_3/n=1/3$ ,$n/N=N/M=1/10$, $n=6$, $10,000$ simulation runs.}
  \label{fig:MSE_individual_unknown_K}
\end{figure}

\section{Conclusion}
\label{sec:conclusion}
In this paper, a blind multi-source power estimator was derived. Under the assumptions that the ratio between the number of sensors and the number of signals sources is not too small and the source transmit powers are sufficiently distinct from one another, we derived a method to infer the individual source powers if the number of sources is known, which was shown to outperform alternative estimation techniques in the medium to high SNR regime. We then briefly discussed the joint estimation of the number of transmit sources, the number of antennas of each source and the transmit powers, which appeared in simulation to perform well in the high SNR regime. The novel method is moreover computationally efficient and is particularly robust to small system dimensions. As such, it is particularly suited to the blind detection of primary mobile user in future cognitive radio networks.

\appendices

\section{Residue calculus}
\label{app:residue}
The integrand of $\hat{P}_k$ in \eqref{eq:hPkint_m} can be expanded as
\begin{align}
  &-\frac{N}{n}\frac1{z\hum(z)}-\frac{N}{n}\frac{\hum'(z)}{\hum(z)^2}-\frac{N}{n} \frac{\hm'(z)}{\hum(z)\hm(z)} \nonumber \\ 
  &-\frac{N-n}{n}\frac1{z^2\hum(z)\hm(z)}-\frac{N-n}{n}\frac{\hum'(z)}{z\hum(z)^2\hm(z)}\nonumber \\ 
  &-\frac{N-n}{n}\frac{\hm'(z)}{z\hm(z)^2\hum(z)} -\frac{N\sigma^2}{n}\frac1z-\frac{N\sigma^2}{n}\frac{\hum'(z)}{\hum(z)} \nonumber \\
  &-\frac{N\sigma^2}{n}\frac{\hm'(z)}{\hm(z)}
\end{align}

First assume the case $M\neq N$. Numbering the nine terms in order, we have that (1) has poles in $z\in\{\eta_1,\ldots,\eta_N\}$, where $\hum(z)=0$. Applying l'Hospital rule, all poles have order $1$ and the corresponding residues are 
\begin{equation}
  \lim_{z\to \eta_i} -\frac{N}{n}\frac{z-\eta_i}{z\hum(z)} = -\frac{N}{n} \frac1{\eta_i\hum'(\eta_i)}
\end{equation}
As for (2), it is the derivative of $-1/\hum(z)$, which is well-behaved inside $\uuContour$, so it does not have poles. The term (3) has poles of order $1$ in $z\in\{\eta_1,\ldots,\eta_N\}$ as well and we have the residue
\begin{equation}
  \lim_{z\to \eta_i} -\frac{N}{n}\frac{(z-\eta_i)\hm'(z)}{\hum(z)\hm(z)} = -\frac{N}{n}\frac{N}{M-N} \frac{\hm'(\eta_k)}{\hum'(\eta_k)}\eta_k
\end{equation}
the last equality being obtained from the fact that
\begin{equation}
  \hm(z) = \frac{M}{N}\hum(z)+\frac{M-N}{N}\frac1z
\end{equation}
and $\hm(\eta_i)=0$. It also has poles in $z\in\{\mu_1,\ldots,\mu_N\}$, where $\hm(z)=0$. These are order $1$ poles and we have
\begin{equation}
  \lim_{z\to \mu_i} -\frac{N}{n}\frac{(z-\mu_i)\hm'(z)}{\hum(z)\hm(z)} = \frac{N}{n}\frac{M}{M-N}\mu_k
\end{equation}
Term (4) is shown in a similar way to have residues $-\frac{N-n}n\frac{N}{M-N}\frac1{\eta_i}{\hum'(\eta_i)}$ and $\frac{N-n}n\frac{M}{M-N}\frac1{\mu_i}{\hm'(\mu_i)}$ , $i\in\{1,\ldots,N\}$. Term (5) has residues $\frac{N-n}n\frac{NM}{(M-N)^2}\eta_i$ and $-\frac{N-n}n\frac{M^2}{(M-N)^2}\frac{\mu_i\hum'(\mu_i)}{\hm'(\mu_k)}$, $i\in\{1,\ldots,N\}$. Term (6) has residues $-\frac{N-n}n\frac{N^2}{(M-N)^2}\frac{\eta_i\hm'(\eta_i)}{\hum'(\eta_i)}$ and $\frac{N-n}n\frac{MN}{(M-N)^2}\mu_i$, $i\in\{1,\ldots,N\}$. Term (7) has a pole in $z=0$ but we already know that $0$ is not inside $\uuContour$, so this is already discarded. Term (8) has poles in $z\in\{\lambda_1,\ldots,\lambda_N\}$ of residue $\frac{N\sigma^2}n$ and poles in $z\in\{\eta_1,\ldots,\eta_N\}$ of residue $-\frac{N\sigma^2}n$. Similarly term (9) has poles in $z\in\{\lambda_1,\ldots,\lambda_N\}$ of residue $\frac{N\sigma^2}n$ and poles in $z\in\{\mu_1,\ldots,\mu_N\}$ of residue $-\frac{N\sigma^2}n$.

Summing together the 9 terms and remarking that 
\begin{equation}
  \frac{N}{M-N}\hm'(z)=\frac{M}{M-N}\hum'(z)-\frac1{z^2}
\end{equation}
we obtain exactly \eqref{eq:residue_calc}.

Assume now $M=N$, in which case $\hum(z)=\hm(z)$. It can be readily seen that the terms (4) to (6) are the derivative of $-\frac{N-n}n\frac1{z^2\hm(z)}$, so that they have residue $0$. The only remaining term here is (1), whose residues are the $-\frac{N}n \frac1{\eta_i \hm'(\eta_i)}$.

\section{Proof of exact separation}
\label{app:bai}
Theorem \ref{th:bai} is a generalization from the assumption of identical distribution of the $x_{ij}$'s, the proof of which is contained in the two papers \cite{SIL98} and \cite{BAI99}, which, with some modifications, appear as Chapter 6 in \cite{SIL06}. The proof uses previous articles that need to be updated as well. We shall therefore go through the necessary steps that need to be modified, taking reference to all papers successively. 

We shall assume for simplicity in the following that the matrices $\T_n$ are deterministic, converges in distribution to $H$ and that $\Vert \T_n\Vert$ is uniformly bounded. The generalization to random $\T_n$ follows from Tonelli's theorem \cite{BIL08}. Indeed, let $\mathcal X$ be the probability space that engenders the $\X_n$ and $\mathcal T$ the probability space that engenders the $\T_n$. Let $A$ be any of the events in Conclusion 1), 2), or 3), claimed to occur with probability one. Assume that Theorem \ref{th:bai} holds for deterministic $\T_n$ satisfying Assumptions 5), 6), and 7). Let $t\in\mathcal T$ be an element of the intersection of these events. Then $I_A(t,x)=1$ for all $x$ contained in a subset of $\mathcal X$ having probability one. Therefore, by Tonelli's theorem, denoting $\mathcal T\times \mathcal X$ the product space of $\mathcal T$ and $\mathcal X$, we have that
\begin{align}
  & \int_{\mathcal T\times \mathcal X} I_A(t,x) dP_{\mathcal T\times \mathcal X}(t,x) \nonumber \\ 
  &= \int_{\mathcal T} \left[\int_{\mathcal X} I_A(t,x) dP_{\mathcal X}(x)\right]dP_{\mathcal T}(t) = 1,
\end{align}
and Theorem \ref{th:bai} therefore holds true if it holds true for $\T_n$ deterministic.

\subsection{Extension of \cite{YIN88}}
\label{app:YIN88}
The first step is to extend the work in \cite{YIN88} on the largest eigenvalue of $\S_n=\frac1n\X_n\X_n^\herm$, where $\X_n=(x_{ij})$ is $p\times n$, $p=p(n)$,
and $p/n\to y>0$ as $n\to\infty$.  Checking the assumptions in \cite{YIN88}, we change the six conditions of Page 518 to 
\begin{enumerate}
\item[(1)] $x_{ij}, i= 1, 2,\ldots, p;\ j= 1, 2,\ldots , n$ are independent for each $n$, 
\item[(2)] $|x_{ij}|<\eta_n\sqrt{n}$, where $\eta_n\downarrow 0$, 
\item[(3)] $\EE x_{ij} = 0$, 
\item[(4)] $\EE |x_{ij}^ 2| \le 1$,
\item[(5)] $\EE|x_{ij}|^l<(\eta_n\sqrt{n})^{l-1}$, for $l\ge 2$,
\item[(6)] $\EE|x_{ij}^l|\le c(\eta_n\sqrt{n})^{l-3}$, for $l\ge 3$.
\end{enumerate}

By the same argument given there (no difference for complex random variables), the inequality
\begin{equation}
\EE\tr(\S_n)^k\le \eta^k
\end{equation}
holds for any $\eta>b\equiv(1+\sqrt{y})^2$ provided $k$ is chosen such that 
\begin{enumerate}
\item[(a)] $k/\log n\to\infty$, 
\item[(b)]$\eta_n^{\frac16} k/\log n\to 0$.
\end{enumerate}

This implies that $\P(\lambda_{\max}(\S_n)>b+\varepsilon)=o(n^{-t})$ for any given $\varepsilon>0$ and $t>0$.

\begin{remark}
Notice that if Condition (4) is replaced by $\EE|x_{ij}^2|\le \iota$, where $\iota$ is a fixed positive constant, then we have 
\begin{equation}
\P(\lambda_{\max}(\S_n)>\iota (b+\varepsilon))=o(n^{-t}). \label{eqybk}
\end{equation}

We only need to consider the matrix $\iota^{-1}\S_n$ and replace $x_{ij}$ by $\iota^{-1/2}x_{ij}$ to verify the six conditions.
\end{remark}

\subsection{First step truncation and renormalization}
We consider $\S_n\T_n$, where the assumptions of Theorem \ref{th:bai} are met, except the $\T_n$ are assumed nonrandom.  Here, to be consistent with Chapter 6 of \cite{SIL06}, we replace $c_n$ with $y_n$ and $c$ with $y$.

Notice, from the identity 
\begin{equation} 
  \EE Y^4=\int_0^{\infty} P(Y>x^{1/4})dx,
\end{equation} 
valid for any nonnegative random variable $Y$, that (ii) implies the fourth moments of the $x_{ij}$ exist.

We will need the following identity later on. For nonnegative $Y$ having finite fourth moment, since $\EE Y^4 I(Y>y)\geq y^4 P(Y>y)$, we have $y^4 P(Y>y)\to 0$ as $y\to\infty$. Thus, using integration by parts, we have for any $a>0$
\begin{align}
  &\EE Y^4I(Y>a) \nonumber \\ 
  & =a^4 P(Y>a)+\lim_{y\to\infty}(-y^4 P(Y>y)+\int_a^y4x^{3}P(Y>x)dx) \\
  & =a^4P(Y>a)+\int_a^{\infty}4x^{3}P(Y>x)dx.
\end{align}

We choose $\eta_n\downarrow 0$ such that $\eta_n\sqrt{n}\uparrow \infty$, $\liminf_n\eta_n^2\sqrt n>0$, and
\begin{equation}
\sum_{k=1}^\infty 2^{2k}P(|X|>\tilde\eta_k 2^{k/2})<\infty,
\end{equation}
where $\tilde\eta_k=\eta_{2^{k}}$.

{\bf 1. Truncation.} Define $y_{ij}=x_{ij}I(|x_{ij}|\leq \eta\sqrt{n})$, $\Y_n=(y_{ij})_{p\times n}$ and $\widehat\S_n=\frac1n\Y_n\Y_n^\herm$. We have
\begin{align}
& P(\X_n\ne\Y_n,\textmd{ i.o.}) \nonumber \\
&\leq
\lim_{m\to\infty}\sum_{k=m}^\infty P\left(\bigcup_{n=2^k+1}^{2^{k+1}} \bigcup_{i\le p,j\le n}\{|x_{ij}|>\eta_n\sqrt{n}\}\right)\\
&\le
\lim_{m\to\infty}\sum_{k=m}^\infty P\left(\bigcup_{n=2^k+1}^{2^{k+1}}
\bigcup_{i\le 2yn,j\le n}\{|x_{ij}|>\tilde \eta_{{k}}2^{k/2}\}\right)\\
&=\lim_{m\to\infty}\sum_{k=m}^\infty P\left(
\bigcup_{i\le y2^{k+2},j\le 2^{k+1}}\{|x_{ij}|>\tilde\eta_{{k}}2^{k/2}\}\right)\\
&= 8yK\lim_{m\to\infty}\sum_{k=m}^\infty 2^{2k} P\left( |X|>\tilde
\eta_{{k}}2^{k/2}\right)=0.
\end{align}

{\bf 2. Centralization.}
Define $z_{ij}=y_{ij}-\EE y_{ij}$ and $\Z_n=(z_{ij})_{p\times n}$ and $\widetilde \S_n=\frac1n\Z_n\Z_n^\herm$. Then by Theorem A. 46 of \cite{SIL06} and the above identity, we have 
\begin{align}
  &\max_{k\le p}|\lambda_k^\oh(\widehat\S_n\T_n)-\lambda_k^\oh(\widetilde\S_n\T_n)| \nonumber \\ 
  &\le \|\T_n^\oh\|\|n^{-\oh}\EE(\Y_n)\|\\
  &\le \left(\frac1n\sum_{ij}|\EE y_{ij}|^2\right)^\oh\\
  &= \left(\frac1n\sum_{ij}|\EE x_{ij}I(|x_{ij}|>\eta_n\sqrt{n})|^2\right)^\oh\\
  &\le\left(\frac1{\eta_n^4n^3}\sum_{ij}\EE |x_{ij}^2|\EE|x_{ij}^4|I(|x_{ij}|>\eta_n\sqrt{n})\right)^\oh\\
  &\le \left(\frac{Knp}{\eta_n^4n^3} \EE|X^4|I(|X|>\eta_n\sqrt{n})\right)^\oh\to 0.
\end{align}

{\bf 3. Rescaling.}
Define $w_{ij}=z_{ij}/\sigma_{ij}$, $\W_n=(w_{ij})_{p\times n}$ and $\breve\S_n=\frac1n\W\W_n^\herm$. Then, by Theorem A. 46 of \cite{SIL06}, 
\begin{align}
&\max_{k\le p}|\lambda_k^\oh(\breve\S_n\T_n)-\lambda_k^\oh(\widetilde\S_n\T_n)| \nonumber \\
& \le  \|\T_n^\oh\|\|n^{-\oh}(\Z_n-\W_n)\|\\
&\le   \left\|\frac1{\sqrt{n}}\left[z_{ij}(1-\lambda_{ij}^{-1})\right]\right\|\to 0\textmd{, a.s.}
\end{align}
because of \eqref{eqybk}  and the fact that
\begin{align}
&\max_{i,j}|1-\lambda_{ij}^2| \\
&\le \max_{ij}[\EE|x_{ij}^2|I(|x_{ij}|>\eta\sqrt{n})+(\EE|x_{ij}|I(|x_{ij}|>\eta\sqrt{n}))^2]\\
&\le 2\psi^{-1}(\eta\sqrt{n})\max_{ij}\EE|x_{ij}^2|\psi(|x_{ij}|)\to 0 
\end{align}
which implies that
\begin{equation}
\max_{ij}\EE|w_{ij}-z_{ij}|^2=\max_{ij}\frac{(1-\lambda_{ij}^2)^2}{(1+\lambda_{ij})^2}\to 0.
\end{equation}

\subsection{Second truncation and normalization}
We may assume now that the $x_{ij}$ satisfy the six conditions of Section \ref{app:YIN88} with Condition (4) strengthened to $\EE|x_{ij}^2|=1$ for all $i,j$.

Define $y_{ij}=x_{ij}I(|x_{ij}|\le C)-\EE x_{ij}I(|x_{ij}|\le C)$ for some large constant $C$ and define $\Y_n=(y_{ij})_{p\times n}$, $\widehat\S_n=\frac1n\Y_n\Y_n^\herm$.

Then by Theorem A. 46 of \cite{SIL06}, we have 
\begin{align}
&\max_{k\le p}|\lambda_k^\oh(\widehat\S_n\T_n)-\lambda_k^\oh(\widetilde\S_n\T_n)| \nonumber \\ 
&\le \|\T_n^\oh\|\left\|n^{-1/2}(\X_n-\Y_n)\right\|\\
&\le \left\|n^{-1/2}(\X_n-\Y_n)\right\|.
\end{align}

Since $\EE|x_{ij}-y_{ij}|^2\le \EE|x_{ij}^2|I(|x_{ij}|>C)\le M/\psi(C)$, this can be made arbitrarily small by making $C$ sufficiently large. We can then apply \eqref{eqybk}.

The rescaling is the same as given in last Section. Now we have
\begin{equation}
  \max_{ij}|1-\sigma_{ij}^2|\leq2\psi^{-1}(C)\max_{ij}\EE|x^2_{ij}|\psi(x_{ij}),
\end{equation}
which can be made arbitrarily small by making $C$ sufficiently large.

\subsection{Extension of \cite{BAI93} and Chapter 6 of \cite{SIL06}}
The result in \cite{BAI93} on the smallest eigenvalue, $\lambda_{\min}(\S_n)$, of $\S_n$ can be extended with only Assumptions 1), 2), 3) of Theorem \ref{th:bai}. Indeed, using the two step truncations, we may assume the $x_{ij}$ are bounded, with mean $0$ and variance $1$. Following the same steps as in \cite{BAI93}, one may prove that when $y<1$
\begin{equation}
\lambda_{\min}(\S_n)\to(1-\sqrt y)^2,
\end{equation}
almost surely.

We proceed now to the necessary changes in Chapter 6 in \cite{SIL06}. We may now assume the same conditions as in Section 6.2.1 of \cite{SIL06} on the $x_{ij}$ (except they need not be identically distributed), and the bounds appearing there. The changes are needed wherever identical distribution was exploited.

We begin with Page 139, below (6.2.34). We change the definition of $b_n$ to 
\begin{equation}
b_n=\frac1{1+n^{-1}\EE\tr(\T_n\D^{-1})},
\end{equation}
and introduce the quantities
\begin{equation}
b_{nj}=\frac1{1+n^{-1}\EE\tr(\T_n\D^{-1}_j)}.
\end{equation}

The argument below (6.2.35) is specific for $j=1$, but easily extends for any $j$. Following the argument below (6.2.36), we can no longer assume $\EE\beta_1=-z\EE\underline s_n$, nor is bounded, but we have
\begin{equation}
\sup_{u\in[a,b]}\left|\frac1n\sum_k\EE\beta_k\right|\leq K.
\end{equation}

We further have 
\begin{equation}
  \label{eq:jackupdate1}
b_{nk}=\beta_k+\beta_kb_{nk}\gamma_k.
\end{equation}

Then, using (6.2.36)
 \begin{align}
   \frac1n\left|\sum_k(b_{nk}-\EE\beta_k)\right|&\leq Kn^{-1}\sum_kv_n^{-2}
(\EE|\gamma_k|^2)^\oh \\ &\leq v_n^{-3}n^{-\oh}.
\end{align}

Since $b_{nj}-b_n=b_nb_{nj}\EE(\frac1n\tr\T(\D^{-1}-\D_j^{-1}))$, we have, using Lemma 6.9 of \cite{SIL06}
\begin{equation}
\left|b_n-\frac1n\sum_kb_{nk}\right|=\frac1n\left|\sum_k(b_n-b_{nk})\right|\leq |z|^2\frac1{nv^3},
\end{equation}
and
\begin{equation}
|b_{nk}-b_n|\leq K\frac1{nv^3}.
\end{equation}

Thus we have 
\begin{equation}
\max_j\sup_{u\in[a,b]}|b_{nj}|\leq K.
\end{equation}

For the rest of Section 6.2.3, $b_n$ is mentioned twice. We need to replace it with $b_{nj}$ and the arguments go through without any further changes.

For Section 6.2.4, (6.2.42) needs to be replaced by 
\begin{align}
& y_n\int\frac{dH_n(t)}{1+t\EE\underline s_n}+zy_n\EE(s_n(z)) \\ 
&=\frac1n\sum_{k=1}^n\EE\beta_k\left[\r^\herm_k\D_k^{-1}(\EE\underline s_n\T_n+\I)^{-1}\r_k \right. \nonumber \\ & \left. -\frac1n\EE\tr(\EE\underline s_n\T_n+\I)^{-1}\T_n\D^{-1}\right].
\end{align}

For the rest of the section, replace subscript $1$ with subscript $k$, subscript $2$ with subscript $j$, replace $b_{1n}$ with 
\begin{equation}
b_{kj}=\frac1{1+n^{-1}\EE\tr(\T_n\D_{kj})^{-1}},~ k\neq j,
\end{equation}
and all appearances of subscripts $kj$ assume $k\neq j$. $F_{nkj}$ has the obvious definition. Replace the summations for $j$ ranging from $2$ to $n$ with $j\neq k$. All the bounds derived for $k=1$ are true for all $k$.  So we conclude the left side of (6.2.42) is bounded by $kn^{-1}$.

The rest of Chapter 6 follows without any changes.

\bibliography{tutorial_RMT/IEEEconf,tutorial_RMT/IEEEabrv,tutorial_RMT/tutorial_RMT}

\begin{thebibliography}{10}
\providecommand{\url}[1]{#1}
\csname url@samestyle\endcsname
\providecommand{\newblock}{\relax}
\providecommand{\bibinfo}[2]{#2}
\providecommand{\BIBentrySTDinterwordspacing}{\spaceskip=0pt\relax}
\providecommand{\BIBentryALTinterwordstretchfactor}{4}
\providecommand{\BIBentryALTinterwordspacing}{\spaceskip=\fontdimen2\font plus
\BIBentryALTinterwordstretchfactor\fontdimen3\font minus
  \fontdimen4\font\relax}
\providecommand{\BIBforeignlanguage}[2]{{%
\expandafter\ifx\csname l@#1\endcsname\relax
\typeout{** WARNING: IEEEtran.bst: No hyphenation pattern has been}%
\typeout{** loaded for the language `#1'. Using the pattern for}%
\typeout{** the default language instead.}%
\else
\language=\csname l@#1\endcsname
\fi
#2}}
\providecommand{\BIBdecl}{\relax}
\BIBdecl

\bibitem{MIT99}
J.~M. III and G.~Q.~M. Jr, ``{Cognitive radio: making software radios more
  personal},'' \emph{{IEEE} Personal Commun. Mag.}, vol.~6, no.~4, pp. 13--18,
  1999.

\bibitem{CLA08}
H.~Claussen, L.~T. Ho, and L.~G. Samuel, ``{An overview of the femtocell
  concept},'' \emph{Bell Labs Technical Journal}, vol.~13, no.~1, pp. 221--245,
  May 2008.

\bibitem{CAL10}
D.~Calin, H.~Claussen, and H.~Uzunalioglu, ``{On femto deployment architectures
  and macrocell offloading benefits in joint macro-femto deployments},''
  \emph{{IEEE} Trans. Commun.}, vol.~48, no.~1, pp. 26--32, Jan. 2010.

\bibitem{KON09}
V.~Chandrasekhar, M.~Kountouris, and J.~G. Andrews, ``{Coverage in
  Multi-Antenna Two-Tier Networks},'' \emph{{IEEE} Trans. Wireless Commun.},
  vol.~8, no.~10, pp. 5314--5327, 2009.

\bibitem{URK67}
H.~Urkowitz, ``{Energy detection of unknown deterministic signals},''
  \emph{Proc. {IEEE}}, vol.~55, no.~4, pp. 523--531, 1967.

\bibitem{KOS02}
V.~I. Kostylev, ``{Energy detection of a signal with Random Amplitude},'' in
  \emph{Proc. {IEEE} International Conference on Communications (ICC'02)}, New
  York, NY, USA, 2002, pp. 1606--1610.

\bibitem{COU09b}
R.~Couillet and M.~Debbah, ``{A Bayesian framework for collaborative
  multi-source signal detection},'' \emph{{IEEE} Trans. Signal Process.},
  vol.~58, no.~10, pp. 5186--5195, Oct. 2010.

\bibitem{BIA10}
P.~Bianchi, J.~Najim, M.~Maida, and M.~Debbah, ``{Performance of Some
  Eigen-based Hypothesis Tests for Collaborative Sensing},'' \emph{{IEEE}
  Trans. Inf. Theory}, 2010, to appear.

\bibitem{HER07}
P.~Chung, J.~B\"ohme, C.~Mecklenbra\"uker, and A.~Hero, ``{Detection of the
  Number of Signals Using the Benjamini-Hochberg Procedure},'' \emph{{IEEE}
  Trans. Signal Process.}, vol.~55, no.~6, pp. 2497--2508, 2007.

\bibitem{SIL95}
J.~W. Silverstein and Z.~D. Bai, ``{On the empirical distribution of
  eigenvalues of a class of large dimensional random matrices},'' \emph{Journal
  of Multivariate Analysis}, vol.~54, no.~2, pp. 175--192, 1995.

\bibitem{SIL92}
J.~W. Silverstein and P.~L. Combettes, ``{Signal detection via spectral theory
  of large dimensional random matrices},'' \emph{{IEEE} Trans. Signal
  Process.}, vol.~40, no.~8, pp. 2100--2105, 1992.

\bibitem{KAR08}
N.~E. Karoui, ``{Spectrum estimation for large dimensional covariance matrices
  using random matrix theory},'' \emph{Annals of Statistics}, vol.~36, no.~6,
  pp. 2757--2790, Dec. 2008.

\bibitem{RAO08}
N.~R. Rao, J.~A. Mingo, R.~Speicher, and A.~Edelman, ``{Statistical
  eigen-inference from large Wishart matrices},'' \emph{Annals of Statistics},
  vol.~36, no.~6, pp. 2850--2885, Dec. 2008.

\bibitem{COU08}
R.~Couillet and M.~Debbah, ``{Free deconvolution for OFDM multicell SNR
  detection},'' in \emph{Proc. {IEEE} International Symposium on Personal,
  Indoor and Mobile Radio Communications (PIMRC'08)}, Cannes, France, 2008.

\bibitem{MES08}
X.~Mestre, ``{On the asymptotic behavior of the sample estimates of eigenvalues
  and eigenvectors of covariance matrices},'' \emph{{IEEE} Trans. Signal
  Process.}, vol.~56, no.~11, pp. 5353--5368, Nov. 2008.

\bibitem{SIL06}
Z.~Bai and J.~W. Silverstein, ``{Spectral Analysis of Large Dimensional Random
  Matrices},'' \emph{Springer Series in Statistics}, 2009.

\bibitem{BAI99}
Z.~D. Bai and J.~W. Silverstein, ``{Exact Separation of Eigenvalues of Large
  Dimensional Sample Covariance Matrices},'' \emph{The Annals of Probability},
  vol.~27, no.~3, pp. 1536--1555, 1999.

\bibitem{RUD86}
W.~Rudin, \emph{{Real and complex analysis}}, 3rd~ed.\hskip 1em plus 0.5em
  minus 0.4em\relax McGraw-Hill Series in Higher Mathematics, May 1986.

\bibitem{CHO95}
J.~W. Silverstein and S.~Choi, ``{Analysis of the limiting spectral
  distribution of large dimensional random matrices},'' \emph{Journal of
  Multivariate Analysis}, vol.~54, no.~2, pp. 295--309, 1995.

\bibitem{BIL08}
P.~Billingsley, \emph{{Probability and Measure}}, 3rd~ed.\hskip 1em plus 0.5em
  minus 0.4em\relax Hoboken, NJ: John Wiley \& Sons, Inc., 1995.

\bibitem{SIL98}
Z.~D. Bai and J.~W. Silverstein, ``{No Eigenvalues Outside the Support of the
  Limiting Spectral Distribution of Large Dimensional Sample Covariance
  Matrices},'' \emph{Annals of Probability}, vol.~26, no.~1, pp. 316--345, Jan.
  1998.

\bibitem{GRE09}
D.~Gregoratti and X.~Mestre, ``{Random DS/CDMA for the amplify and forward
  relay channel},'' \emph{{IEEE} Trans. Wireless Commun.}, vol.~8, no.~2, pp.
  1017--1027, 2009.

\bibitem{RYA07}
O.~. Ryan and M.~Debbah, ``{Free deconvolution for signal processing
  applications},'' in \emph{Proc. {IEEE} International Symposium on Information
  Theory (ISIT'07)}, Nice, France, Jun. 2007, pp. 1846--1850.

\bibitem{SER00}
R.~S\'eroul, \emph{{Programming for Mathematicians}}.\hskip 1em plus 0.5em
  minus 0.4em\relax New York, NY, USA: Springer Universitext, Feb. 2000.

\bibitem{MES08c}
X.~Mestre and M.~Lagunas, ``{Modified Subspace Algorithms for DoA Estimation
  With Large Arrays},'' \emph{{IEEE} Trans. Signal Process.}, vol.~56, no.~2,
  pp. 598--614, Feb. 2008.

\bibitem{YIN88}
Y.~Q. Yin, Z.~D. Bai, and P.~R. Krishnaiah, ``{On the limit of the largest
  eigenvalue of the large dimensional sample covariance matrix},''
  \emph{Probability Theory and Related Fields}, vol.~78, no.~4, pp. 509--521,
  1988.

\bibitem{BAI93}
Z.~D. Bai and Y.~Q. Yin, ``{Limit of the smallest eigenvalue of a large
  dimensional sample covariance matrix},'' \emph{The Annals of Probability},
  vol.~21, no.~3, pp. 1275--1294, 1993.

\end{thebibliography}

\begin{IEEEbiography}[{\includegraphics[width=1in,height=1.25in]{./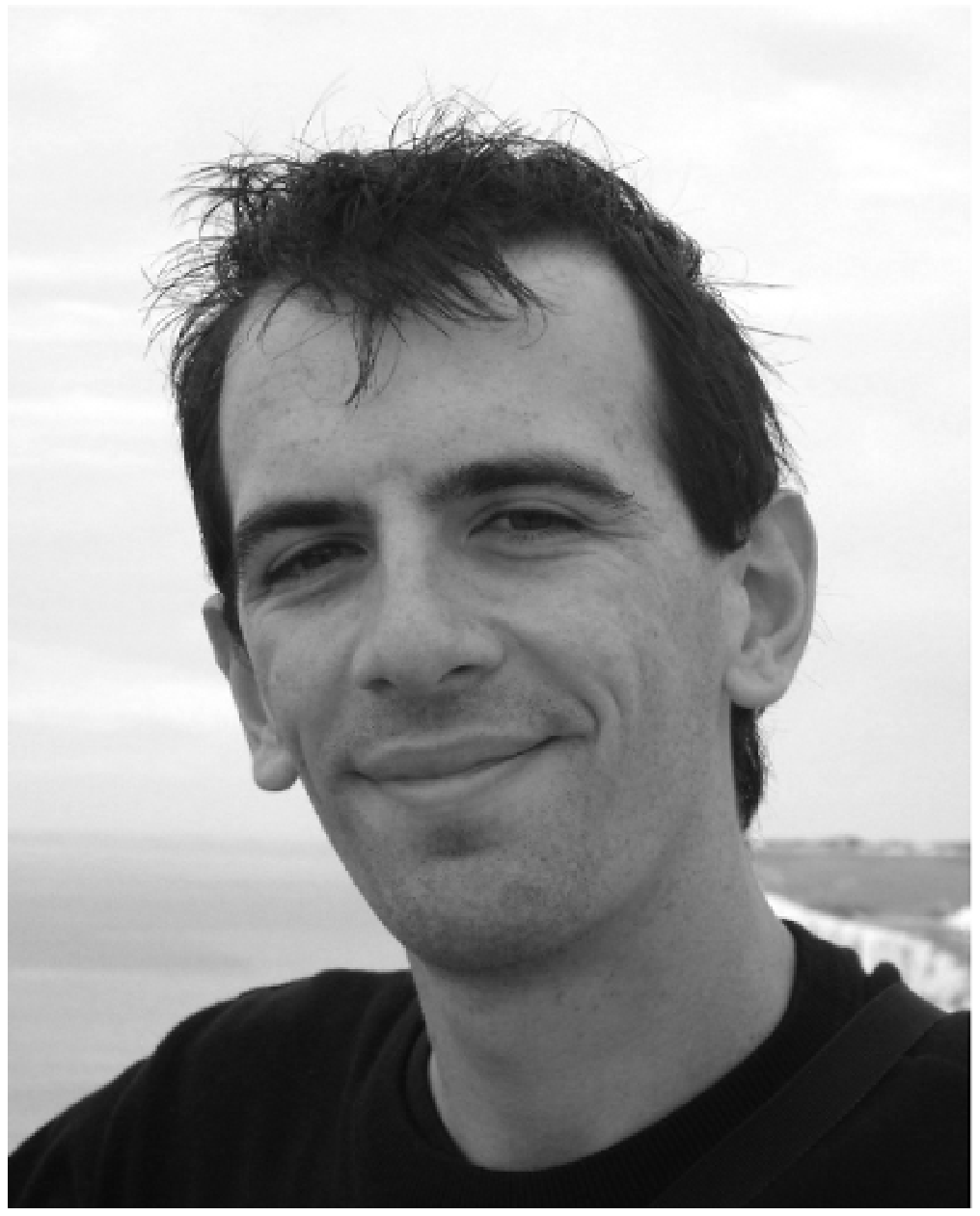}}]{Romain Couillet}
was born in Abbeville, France. He received his Msc. in Mobile Communications at the Eurecom Institute, France in 2007. He received his Msc. in Communication Systems in Telecom ParisTech, France in 2007. In September 2007, he joined ST-Ericsson (formerly NXP Semiconductors, founded by Philips). At ST-Ericsson, he works as an Algorithm Development Engineer on the Long Term Evolution Advanced (LTE-A) project. In parallel to his position at ST-Ericsson, he is currently a PhD student at Supélec, France. His research topics include mobile communications, multi-users multi-antenna detection, cognitive radio cognitive, Bayesian probability and random matrix theory. He is the recipient of the ValueTools best student paper award, 2008.
\end{IEEEbiography}

\begin{IEEEbiography}[{\includegraphics[width=1in,height=1.25in]{./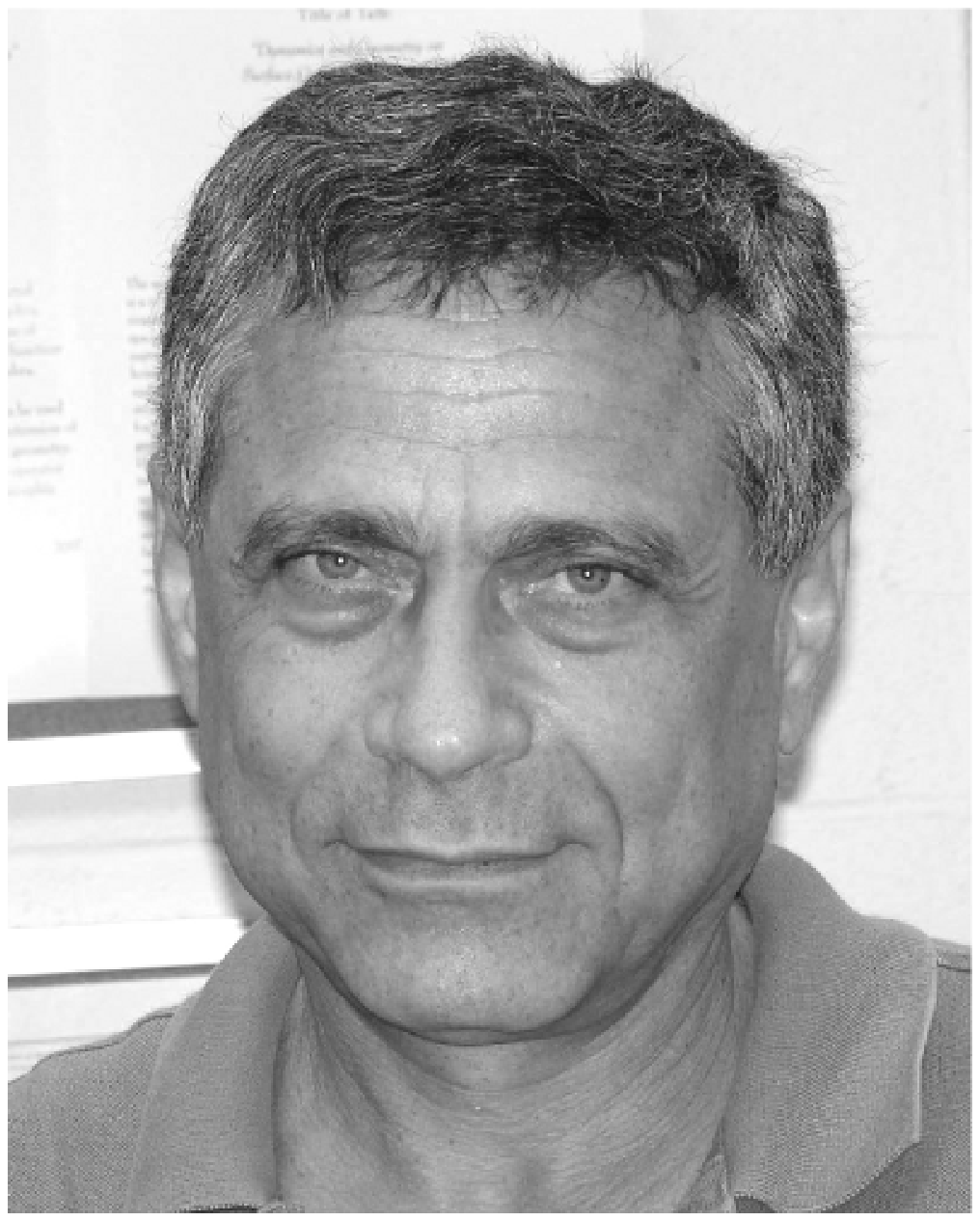}}]{Jack W. silverstein}
received the B.A. degree in mathematics from Hofstra University, Hempstead, NY, in 1971 and the M.S. and Ph.D. degrees in applied mathematics from Brown University, Providence, RI, in 1973 and 1975, respectively.  After postdoctoring and teaching at Brown, he began in 1978 a tenure track position in the Department of Mathematics at North Carolina State University, Raleigh, where he has been Professor since 1994. His research interests are in probability theory with emphasis on the spectral behavior of large-dimensional random matrices. Prof. Silverstein was elected Fellow of the Institute of Mathematical Statistics in 2007. 
\end{IEEEbiography}

\begin{IEEEbiography}[{\includegraphics[width=1in,height=1.25in]{./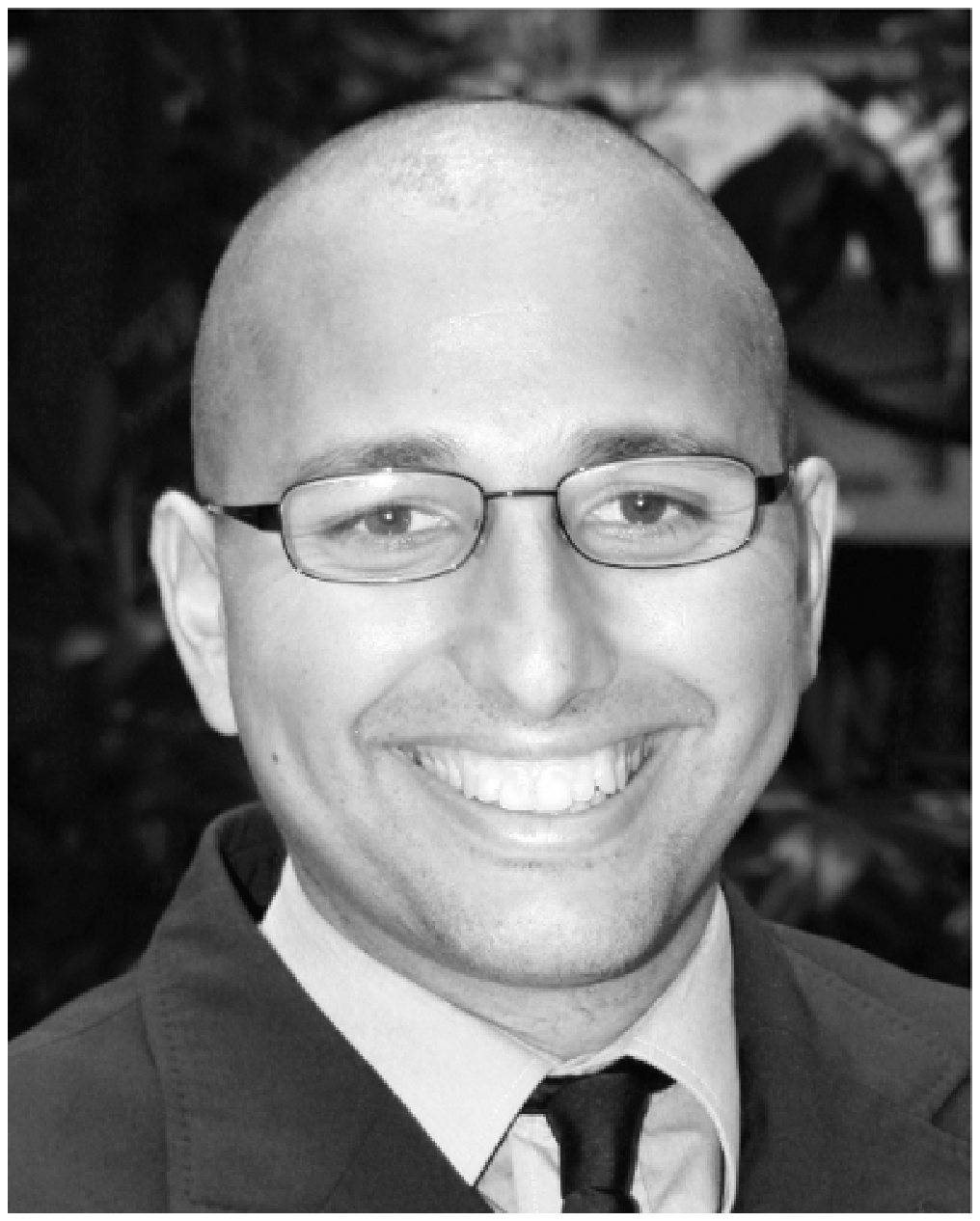}}]{M\'erouane Debbah}
was born in Madrid, Spain. He entered the Ecole Normale Supérieure de Cachan (France) in 1996 where he received his M.Sc and Ph.D. degrees respectively in 1999 and 2002. From 1999 to 2002, he worked for Motorola Labs on Wireless Local Area Networks and prospective fourth generation systems. From 2002 until 2003, he was appointed Senior Researcher at the Vienna Research Center for Telecommunications (FTW) (Vienna, Austria). From 2003 until 2007, he joined the Mobile Communications de-partment of the Institut Eurecom (Sophia Antipolis, France) as an Assistant Professor. He is presently a Professor at Supelec (Gif-sur-Yvette, France), holder of the Alcatel-Lucent Chair on Flexible Radio. His research interests are in information theory, signal processing and wireless communications. Mérouane Debbah is the recipient of the ``Mario Boella'' prize award in 2005, the 2007 General Symposium IEEE GLOBECOM best paper award, the Wi-Opt 2009 best paper award, the2010  Newcom++ best paper award  as well as the Valuetools 2007,Valuetools 2008 and CrownCom2009 best student paper awards. He is a WWRF fellow. 
\end{IEEEbiography}

\begin{IEEEbiography}[{\includegraphics[width=1in,height=1.25in]{./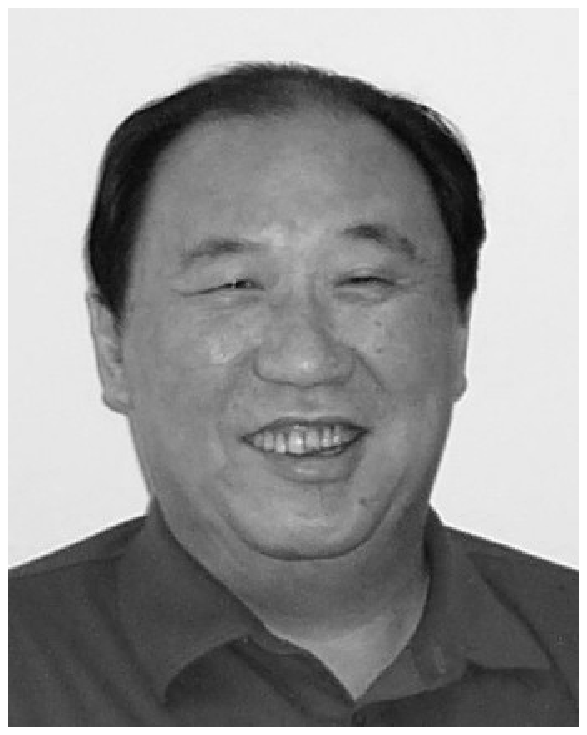}}]{Zhidong Bai}
graduated from University of Science and Technology of China, majoring statistics and probability. His research interests include limiting theorems of statistics and spectral analysis of large dimensional random matrices, rounded data analysis etc. He is currently a professor of the School of Mathematics and Statistics at Northeast Normal University, China, and Department of Statistics and Applied Probability at National University of Singapore. He is a Fellow of the Third World Academy of Sciences and a Fellow of the Institute of Mathematical Statistics.
\end{IEEEbiography}

\end{document}